\documentclass[11pt,letterpaper]{article}
\usepackage[margin=1in]{geometry}
\usepackage[T1]{fontenc}
\usepackage[utf8]{inputenc}
\usepackage{libertine}
\usepackage{microtype}
\usepackage{amsmath,amssymb,amsthm,mathtools}
\usepackage{enumitem}
\usepackage{xcolor}
\usepackage{tikz}
\usepackage[textsize=small]{todonotes}
\usepackage[font=small,labelfont=bf]{caption}
\usepackage[section]{placeins}
\usepackage{needspace}
\usepackage{tcolorbox}
\usepackage{graphicx,wrapfig}
\usepackage{booktabs}

\usepackage{dashrule}

\usepackage{lineno}

\usetikzlibrary{arrows.meta,calc,decorations.pathreplacing}
\usepackage[sort]{cite}
\usepackage[hidelinks]{hyperref}
\usepackage{aliascnt}
\usepackage[noabbrev,capitalize]{cleveref}
\hypersetup{
  pdftitle={Planar Metric Recognition is NP-Complete},
  pdfauthor={Anubhav Dhar; Debajyoti Kar; Sándor Kisfaludi-Bak; Hung Le; Bartłomiej Lewandowski; Karol Węgrzycki}
}

\definecolor{BrickRed}{rgb}{.72,0,0}
\def\EMPH#1{\emph{\textcolor{BrickRed} {#1}}}
\def\floor#1{\lfloor #1 \rfloor}

\newtheorem{theorem}{Theorem}[section]
\newaliascnt{lemma}{theorem}
\newtheorem{lemma}[lemma]{Lemma}
\aliascntresetthe{lemma}
\newaliascnt{observation}{theorem}
\newtheorem{observation}[observation]{Observation}
\aliascntresetthe{observation}
\newaliascnt{fact}{theorem}

\aliascntresetthe{fact}
\newaliascnt{corollary}{theorem}

\aliascntresetthe{corollary}
\newaliascnt{claim}{theorem}
\newtheorem{claim}[claim]{Claim}
\aliascntresetthe{claim}
\newaliascnt{question}{theorem}
\newtheorem{question}[question]{Question}
\aliascntresetthe{question}

\newaliascnt{proposition}{theorem}

\aliascntresetthe{proposition}
\theoremstyle{definition}
\newaliascnt{definition}{theorem}
\newtheorem{definition}[definition]{Definition}
\aliascntresetthe{definition}
\theoremstyle{remark}
\newaliascnt{remark}{theorem}

\aliascntresetthe{remark}
\crefname{observation}{Observation}{Observations}
\Crefname{observation}{Observation}{Observations}
\crefname{fact}{Fact}{Facts}
\Crefname{fact}{Fact}{Facts}
\crefname{claim}{Claim}{Claims}
\Crefname{claim}{Claim}{Claims}
\Crefname{question}{Question}{Questions}
\crefname{question}{Question}{Questions}

\usepackage{thm-restate}

\newcommand{\Oh}{\mathcal{O}}
\newcommand{\cT}{\mathcal{T}}

\newcommand{\ThreePartition}{\textsc{3-Partition}}

\newcommand{\eps}{\epsilon}
\newcommand{\Dr}{\mathfrak{D}}

\renewcommand{\leq}{\leqslant}
\renewcommand{\geq}{\geqslant}

\title{Metric Planarity Testing is Hard}
\date{}
\author{
  Anubhav Dhar\thanks{Max Planck Institute for Informatics, Saarbr\"ucken, Germany.
Supported by the Deutsche Forschungsgemeinschaft (DFG, German Research Foundation) grant number 559177164.
  }
  \and
  Debajyoti Kar\thanks{Department of Computer Science and Automation, Indian Institute of Science, Bengaluru, India. Supported by Google PhD Fellowship and a Microsoft Research India PhD Award.}
  \and
  S\'andor Kisfaludi-Bak\thanks{Department of Computer Science, Aalto University, Espoo, Finland. Supported by the Research Council of Finland, Grant 363444}
  \and
  Hung Le\thanks{University of Massachusetts Amherst, Amherst, MA, USA. Supported by the NSF CAREER award CCF-2237288 and the NSF grant CCF-2517033. }
  \and
  Bart{\l}omiej Lewandowski\footnotemark[1]
  \and
  Karol W\k{e}grzycki\footnotemark[1]
}

\begin{document}
\hypersetup{pageanchor=false}
\begin{titlepage}
\maketitle
\thispagestyle{empty}
\thispdfpagelabel{Title}

\begin{abstract}

A key milestone in algorithmic graph theory is Hopcroft and Tarjan’s 1974 linear-time algorithm for graph planarity testing. Linial and Rabinovich in a workshop in 2002~\cite{matouvsek2002open} asked for either a polynomial time algorithm or NP-hardness of \EMPH{metric planarity testing}:  Given a metric space $(X,d_X)$, test whether there exists an edge-weighted planar graph $G = (V\supseteq X, E,w)$ that realizes the distances in $X$: $d_G(x,y) = d_X(x,y)$ for every $x,y\in X$ This question arose in the context of research on metric embeddings and their applications in computer science since the mid-1990s. Over the past few decades, polynomial-time algorithms have been developed for several special cases.

In this work, we show that metric planarity testing is NP-hard. This sharply contrasts with graph planarity testing.  Our reduction yields a stronger result: under the Exponential Time Hypothesis (ETH), no $2^{o(n)}$-time algorithm for metric planarity testing exists, where $n$ is the number of points in the metric. 

To complement our ETH lower bound, we give two algorithmic results: (i) an algorithm for computing a metric planar embedding in $2^{\Tilde{O}(n^2)}$ time and (ii) a $(1+\epsilon)$-approximation algorithm in $2^{\Tilde{O}(n)}$ time for any fixed $\epsilon \in (0,1)$. The second algorithm produces a planar graph that preserves the input metric up to a factor of $(1+\epsilon)$.

\vfill

\paragraph{AI Disclosure.} Key technical ideas of this paper, including the NP-hardness reduction for metric planarity testing, ETH hardness lower bound for the 3-Partition problem, and the metric planar embedding algorithm with running time $2^{\Tilde{O}(n^2)}$, are due to \EMPH{ChatGPT-6 Astra}. The human authors claim no role in developing the technical ideas, other than merely verifying their correctness. (The $(1+\epsilon)$-approximation algorithm is not so hard to derive for experts in this area.) Instead, we mainly contributed to the presentation and asked questions. In particular, we contextualized the results and completely rewrote the paper, focusing on presenting the key new ideas in the proofs and attributing existing ideas to prior work (to the best of our knowledge). 

\end{abstract}
\end{titlepage}
\hypersetup{pageanchor=true}
\pagenumbering{arabic}

\section{Introduction}

The pioneering work of Linial, London, and Rabinovich~\cite{LLR95} viewed graphs as discrete metric spaces via shortest-path distances and brought key ideas from metric embeddings, such as the Johnson-Lindenstrauss lemma~\cite{JL84} and Bourgain's theorem~\cite{Bourgain85}, to solve fundamental graph problems.  Their result motivated a deeper understanding of the geometry of metric spaces for solving algorithmic problems. Recently, this algorithmic paradigm has been particularly productive in solving long-standing open problems in planar (and minor-free) graphs~\cite{FKS19,FL22,chang2023covering,chang2024shortcut,CCCLPP25,conroy2025protect}. The central idea is to take the shortest-path metrics of given planar graphs and devise metric decompositions with parameters substantially better than those achievable for general metrics; the topological properties of planar graphs were heavily used to facilitate these decompositions. Thus, the interplay between planar metrics and planar graphs is crucial in developing this idea.  

 We say that a metric space $(X,d_X)$ is a \EMPH{planar metric} if there exists an edge-weighted planar graph $G = (V,E,w)$ such that $X\subseteq V$ and for every two points $x\not= y \in X$, $d_X(x,y) = d_G(x,y)$. $G$ is called the \EMPH{planar embedding} of $(X,d_X)$.  Note that some vertices of $V$, called \EMPH{Steiner points}, might not correspond to any vertex in $X$. Steiner points are important for broadening the class of planar metrics: many metrics arising from non-planar graphs can also be planar. For example, the unweighted complete graph $K_n$ gives the uniform metric $(U,d_U)$, which is also a planar metric since it can be realized by a star graph:  points in $U$ are leaves of the star, and every edge of the star (adjacent to the center $c\not\in U$) has weight exactly $1/2$. Given this subtle definition, can we decide if a given metric is planar in polynomial time? This is precisely the question Linial and Rabinovic~\cite{matouvsek2002open} asked more than two decades ago.

\begin{question}[Linial and Rabinovic~\cite{matouvsek2002open}]\label{ques:main} Is it NP-complete or polynomial to decide if a finite metric space is planar?
\end{question}

A polynomial-time metric planarity testing algorithm could be influential for algorithmic metric space theory, as the Hopcroft-Tarjan planarity testing algorithm has been for algorithmic graph theory. And there has been various evidence supporting such an algorithm.

First, if one disallows Steiner vertices in the definition of planar metrics, then (non-Steiner) metric planarity testing can be easily reduced to graph planarity testing in polynomial time\footnote{Start with a graph $(X, \varnothing)$ with no edges and greedily add edges (whose weights are equal to the metric distances) to the graph in increasing order of the weights as long as the new edge is shorter than the distance between its endpoints in the current graph. Then $(X,d_X)$ admits a (non-Steiner) metric planar embedding if and only if the resulting graph is planar.}. As mentioned above, Steiner points broaden the class of planar metrics. There is no obvious reduction to graph planarity testing when Steiner vertices are allowed, which makes \Cref{ques:main} technically challenging. 

Second, there has been consistent progress towards testing subclasses of planar metrics, including tree metrics~\cite{Zaretskii1965,buneman1971,Bandelt1990,Batagelj1990}, ultrametrics~\cite{Gordon1987}, cactus metrics~\cite{HHMM20}, outer-planar metrics~\cite{ChenPylyavskyyTan2026}, Okamura-Seymour metrics~\cite{hurkens1988tidy,CO20,ChenPylyavskyyTan2026}, and its directed versions~\cite{ChenTan2025,LeTiernoYang2026}. In many of these cases, there is a so-called \EMPH{four-point condition}: for any tuple of four points, if their distances satisfy certain inequalities, then we can conclude the input metric falls into a certain class. For example, a metric $(X,d_X)$ is a tree metric if every four points satisfy the Buneman four-point condition~\cite{buneman1971}:
\begin{equation*}
    d_X(x_1,x_2) + d_X(x_3,x_4) \leq \max\{d_X(x_1,x_3)+d_X(x_2,x_4), d_X(x_1,x_4)+d_X(x_2,x_3)\}
\end{equation*}
A similar condition holds for Okamura-Seymour metrics with respect to a prescribed cyclic order of the terminals; see, e.g.,~\cite{ChenPylyavskyyTan2026}. While such a four-point condition often implies a polynomial-time testing algorithm\footnote{The case of tree metrics is easy: simply check every tuple of four vertices. However, for Okamura-Seymour metrics, the four-point condition depends on a certain cyclic ordering, and deriving that ordering requires substantially more effort, e.g.,~\cite{ChenTan2025}.}, there exists a subclass of planar metrics, namely outerplanar metrics~\cite{ChenPylyavskyyTan2026}, that has no such $O(1)$-point condition but still admits a polynomial-time testing algorithm. For planar metrics, an $O(1)$-point condition might not exist~\cite{ZihanPerCom}, but this does not rule out a polynomial-time testing algorithm.   

Besides theoretical importance, \Cref{ques:main} might also be of practical interest. It is closely related to Internet tomography~\cite{chung2001distance}, a central problem in networking, where the goal is to construct an unweighted graph (representing a subnetwork of the Internet) that realizes a distance matrix $(X,d_X)$ obtained by probing the Internet. Another example is constructing a (phylogenetic) tree that realizes a set of distances, which is a central problem in computational biology, e.g.,~\cite{FKW93}.

\subsection{Our Results}

We answer \Cref{ques:main} affirmatively: metric planarity testing is NP-hard.

\begin{theorem}\label{thm:main-NP-hardes} Metric planarity testing is NP-hard. 
\end{theorem}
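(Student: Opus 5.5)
We reduce from \ThreePartition, which is strongly NP-hard; the preamble already defines the macro, which suggests this route. Given integers $a_1,\dots,a_{3m}$ with $\sum_i a_i = mB$ and $B/4<a_i<B/2$, we build in polynomial time a metric $(X,d_X)$ on $O(m)$ points. The entries of $d_X$ are polynomially bounded integers, and $(X,d_X)$ is planar if and only if the instance is a yes-instance.

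\textbf{Construction.} The skeleton is a rigid ``frame'' metric: a planar metric whose realization is forced up to homeomorphism. We take two cyclic sequences of terminals (an inner and an outer ring, or a grid-like wheel). Their pairwise distances are exactly those of a fixed planar graph, and the distances are tight enough to force shortest paths along prescribed routes. The frame then has $m$ ``slots'': faces, each bounded by two terminal paths of length about $B$, whose boundary sides lie at distance $D$. Each item $a_i$ becomes a gadget, namely a pair of terminals $u_i,v_i$ with $d(u_i,v_i)=2D$ or a similar value. Its distances to the frame terminals are set so that a realization must route a $u_i$--$v_i$ geodesic through one slot, crossing it transversally. We also need a ``width'' constraint: the distances from $u_i,v_i$ to the slot's side terminals are shifted by amounts proportional to $a_i$. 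The intended effect is that the geodesics assigned to a single slot must occupy disjoint stretches of its boundary, of total length at most $B$.

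\textbf{Correctness.} Completeness ($\Rightarrow$): from a valid 3-partition we explicitly draw the frame and place the three item paths of each triple side by side in their slot. Any two item paths that share a slot get Steiner crossing-free connections. We then verify every distance by a direct computation. Soundness ($\Leftarrow$) has three steps. First, any planar realization contains shortest paths between frame terminals. These paths are rigid in the sense that their union is a subdivided planar frame, by a uniqueness-of-geodesics argument using strict triangle inequalities. Second, each item's geodesic cannot cross frame geodesics without shortcutting some frame distance. So the geodesic lies inside a single slot. Third, within a slot, the planarity of disjoint non-crossing geodesics, combined with the distance constraints to the two boundary sides, forces the items to occupy disjoint boundary intervals of lengths $\ge a_i$. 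Summing then gives total load at most $B$ per slot. Since $\sum a_i = mB$, every slot is exactly full, and the bounds $B/4<a_i<B/2$ give exactly three items per slot.

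\textbf{Main obstacle.} The hardest step is soundness, specifically controlling Steiner points. A realization may merge or share subpaths between geodesics, so geodesics of different items need not be disjoint, and the ``width'' argument can fail if two items share an edge. My first attempt would be a planarity lemma: if two $u$--$v$ and $u'$--$v'$ shortest paths cross inside a face with endpoints interleaved on its boundary, then a certain four-point inequality holds, $d(u,v)+d(u',v') \ge d(u,v')+d(u',v)$. I would then choose the gadget distances to violate this inequality strictly. That violation forbids crossings, and hence forbids sharing beyond a single point, so each item's interval is separated from the others, and the counting argument goes through. Because the numbers stay polynomial in the input and \ThreePartition{} is strongly NP-hard, this gives NP-hardness.
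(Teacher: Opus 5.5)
Your proposal has genuine gaps. You have the right source problem and the right basic tool. Your four-point inequality is exactly the paper's anti-Monge lemma, and it needs no planarity or interleaving: any shared vertex $z$ of two geodesics already gives $d(u,v')+d(u',v)\le d(u,v)+d(u',v')$. But the two places where the reduction actually has to be designed are left open, and as sketched they do not work.

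\textbf{First gap: items cannot choose a slot.} The metric fixes $d(u_i,f)$ and $d(v_i,f)$ for every frame terminal $f$ before any solution is known, so these numbers cannot depend on which slot item $i$ uses. If slots are distinct faces of a rigid frame and each item geodesic is confined to one face, those fixed distances essentially pin down the face. The only escape is to route everything through hub vertices equidistant from all slots, which your sketch does not do. Without that, no free assignment of items to bins is encoded. Your outline is also internally inconsistent: item geodesics must meet the slot sides, since that is where the intervals live, yet your second soundness step says they cannot meet frame geodesics.

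The paper separates these roles:
\begin{itemize}
\item The rigid frame $F$, a subdivided wheel on $a,b,s,t,o$, only fixes the topology, and item paths provably avoid it.
\item The bins are the $m$ consecutive segments of a single spine geodesic $\widehat{Q}$ from $s$ to $t$. Every item path \emph{must} cross it, because $\widehat{Q}\circ\pi_{\widehat{F}}(t,o)\circ\pi_{\widehat{F}}(o,s)$ separates $a$ from $b$.
\item Bin-independence comes from the two hubs. $u_i$ is attached only to $a$ and $v_i$ only to $b$, with $d_H(a,q_j)=d_H(b,q_j)=L+3$ for all $j$. Hence $d_H(u_i,z)=2x_i+d_J(z,a)$ for every $z\in V(J)$, no matter where the item crosses.
\end{itemize}

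\textbf{Second gap: disjointness does not give widths.} Your packing step does not follow from a strict violation of the four-point inequality. That violation only makes item geodesics vertex-disjoint. Two disjoint geodesics may meet a boundary path at two Steiner vertices joined by an arbitrarily short edge, so you get no interval of length $\ge a_i$.

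The missing idea is to use the \emph{size} of the excess, not just its sign. Suppose item paths $i$ and $k$ meet the spine at $\hat z_i$ and $\hat z_k$. The spliced walks $u_i\to\hat z_i\to\hat z_k\to v_k$ and $u_k\to\hat z_k\to\hat z_i\to v_i$ have total length $d(u_i,v_i)+d(u_k,v_k)+2\hat w(\widehat{Q}[\hat z_i,\hat z_k])$, and this is at least $d(u_i,v_k)+d(u_k,v_i)$. With the paper's weights the excess is $4x_i+4x_k$, so $\hat w(\widehat{Q}[\hat z_i,\hat z_k])\ge 2x_i+2x_k$. Likewise, the walk $u_i\to\hat z_i\to q_j\to\hat z_i\to v_i$, together with $d(u_i,q_j)=L+3+2x_i$ and $d(u_i,v_i)=2L+4$, puts $\hat z_i$ at spine distance at least $2x_i+1$ from \emph{every} $q_j$.

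Only with these exact separations, and bins of length exactly $4B+2$ so that $\hat w(\widehat{Q})=L=m(4B+2)$, do intervals of length $4x_i$ around the $\hat z_i$ and margins around the $q_j$ tile $\widehat{Q}$ exactly. That tiling forces three items summing to $B$ in each bin. You need both concrete weights achieving this and the lemma turning anti-Monge excess into separation along a shared geodesic; without them, your soundness argument is a template rather than a proof.
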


Our early attempt to prove \Cref{thm:main-NP-hardes} was a reduction from a known NP-hard problem in planar graphs, for example, Monotone Planar 3-SAT~\cite{deBergKhosravi2010} or Planar Cubic Hamiltonian Cycle~\cite{GareyJohnsonTarjan1976}. The advantage of starting with a planar problem seems obvious: one might augment the input planar graph with local gadgets to get another (edge-weighted) planar graph, from which one can construct a planar metric for a given set of chosen terminals. The key difficulty is in setting up distances whose planar realizability encodes the source solution, while controlling all possible realizing graphs.  

\begin{figure}[!htb]
\centering
\begingroup
\definecolor{PMRblue}{RGB}{65,156,220}
\definecolor{PMRpurple}{RGB}{143,50,139}
\definecolor{PMRorange}{RGB}{230,158,53}
\definecolor{PMRgray}{RGB}{235,235,235}

\resizebox{0.9\linewidth}{!}{%
\begin{tikzpicture}[
  x=1cm,y=1cm,font=\large,
  line cap=round,line join=round,
  frame edge/.style={
    draw=black,line width=.8pt
  },
  major vertex/.style={
    circle,draw=black!80,fill=PMRgray,
    line width=.8pt,minimum size=13pt,inner sep=0pt
  },
  terminal/.style={
    circle,draw=none,fill=black,
    minimum size=4.5pt,inner sep=0pt
  },
  item vertex/.style={
    circle,draw=PMRpurple,fill=white,
    line width=1pt,minimum size=6.5pt,inner sep=0pt
  },
  crossing vertex/.style={
    rectangle,draw=black,fill=white,
    line width=.8pt,minimum size=4.8pt,inner sep=0pt
  },
  spine/.style={
    draw=PMRblue,line width=1.4pt
  },
  item path/.style={
    draw=PMRpurple,line width=.85pt,
    dash pattern=on 2.6pt off 2.3pt
  },
  length label/.style={
    PMRblue,fill=white,inner sep=.6pt,
    font=\scriptsize
  },
  transition/.style={
    draw=PMRorange,line width=1.8pt,
    -{Stealth[length=7pt,width=7pt]}
  }
]


\newcommand{\PMRoverviewSetup}{%
  \coordinate (a) at (-1.7,0);
  \coordinate (b) at ( 1.7,0);
  \coordinate (s) at (0, 2);
  \coordinate (t) at (0,-2);
  \coordinate (o) at (0,0);
}

\def\PMRoverviewSpine{%
  (s) .. controls (-5.4,2.8) and (-5.4,-2.8) .. (t)
}

\newcommand{\PMRoverviewBase}{%
  \foreach \u/\v in {a/s,s/b,b/t,t/a,o/a,o/b,o/s,o/t}{
    \draw[frame edge] (\u)--(\v);

    \foreach \r in {.25,.50,.75}{
      \node[terminal] at ($(\u)!\r!(\v)$) {};
    }
  }

  \draw[spine] \PMRoverviewSpine;
}

\newcommand{\PMRoverviewFrameVertices}{%
  \foreach \v in {a,b,s,t,o}{
    \node[major vertex] at (\v) {};
  }
}


\begin{scope}
  \draw[PMRpurple,line width=1.2pt]
    (0,-2.15) rectangle (1.05,4.00);

  \foreach \y in {2.95,1.90,.85,-1.10}{
    \draw[PMRpurple,line width=1.2pt]
      (0,\y)--(1.05,\y);
  }

  \node[PMRpurple] at (.525,3.475) {$x_1$};
  \node[PMRpurple] at (.525,2.425) {$x_2$};
  \node[PMRpurple] at (.525,1.375) {$x_3$};
  \node[PMRpurple] at (.525,-.125) {$\vdots$};
  \node[PMRpurple] at (.525,-1.625) {$x_{3m}$};

  \node[PMRpurple,font=\normalsize]
    at (.525,-2.65)
    {$\displaystyle\sum_{i=1}^{3m}x_i=mB$};

  \node at (.525,-3.70) {(a) $3m$ items};
\end{scope}

\draw[transition]
  (1.48,.45)
  .. controls (1.83,.96) and (2.28,.96) ..
  (2.66,.61);


\begin{scope}[xshift=7.2cm]
  \PMRoverviewSetup

  \coordinate (uone) at (-2.75,1.05);
  \coordinate (vone) at (-3.35,2.52);
  \coordinate (utwo) at (-1.50,1.25);
  \coordinate (vtwo) at (-1.30,3.00);

  \path \PMRoverviewSpine
    coordinate[pos=.06] (qone)
    coordinate[pos=.27] (qprev)
    coordinate[pos=.34] (qj)
    coordinate[pos=.305] (binmid);

  \PMRoverviewBase

  \foreach \r in {.06,.13,.20,.27,.34,.41,.50,.59,.68,.77,.86,.94}{
    \path \PMRoverviewSpine
      node[pos=\r,terminal] {};
  }

  \draw[item path] (a)--(uone);
  \draw[item path] (a)--(utwo);

  \draw[item path]
    (vone)
    .. controls (-.60,5.00) and (3.30,3.70) ..
    (b);

  \draw[item path]
    (vtwo)
    .. controls (-.10,4.40) and (2.80,3.10) ..
    (b);

  \draw[
    item path,
    preaction={draw=white,line width=3pt,solid}
  ]
    (uone)--(vone);

  \draw[
    item path,
    preaction={draw=white,line width=3pt,solid}
  ]
    (utwo)--(vtwo);

  \PMRoverviewFrameVertices

  \foreach \v in {uone,vone,utwo,vtwo}{
    \node[item vertex] at (\v) {};
  }


  \node at (-1.94,-.35) {$a$};
  \node at ( 1.96,-.35) {$b$};
  \node at ( .02, 2.42) {$q_0$};
  \node at ( .02,-2.43) {$q_m$};
  \node at (-.77,2.43) {$q_1$};

  \node[font=\normalsize]
    at (-3.61,1.67) {$q_{j-1}$};

  \node[font=\normalsize]
    at (-4.05,1.11) {$q_j$};

  \node at (-2.73,.73) {$u_1$};
  \node at (-3.48,2.84) {$v_1$};
  \node at (-1.18,1.25) {$u_2$};
  \node at (-1.20,3.34) {$v_2$};

  \node[PMRpurple]
    at (-2.06,1.18) {$\cdots$};

  \node[PMRpurple,font=\small]
    at (-2.42,.36) {$2x_1$};

  \node[PMRpurple,rotate=20,font=\normalsize]
    at (-2.50,3.63) {$i$-th tem path};

  \node[PMRblue,align=center,font=\normalsize]
    at (-4.45,2.40) {$j$-th\\bin path};

  \draw[
  PMRblue,line width=.8pt,
  -{Stealth[length=4pt]}
]
  (-4.38,1.97)
  .. controls (-4.55,1.54) and (-4.10,1.21) ..
  ($(binmid)+(-.10,0)$);

  \node at (-1.25,-3.70) {(b) Item paths and bin paths};
\end{scope}

\draw[transition]
  (9.43,.45)
  .. controls (9.80,.96) and (10.23,.96) ..
  (10.62,.61);


\begin{scope}[xshift=15.8cm]
  \PMRoverviewSetup

  \coordinate (ui) at (-1.933,1.267);
  \coordinate (vi) at (-2.433,2.700);
  \coordinate (uk) at (-2.400,.967);
  \coordinate (vk) at (-3.400,2.233);
  \coordinate (ul) at (-2.867,.480);
  \coordinate (vl) at (-4.167,1.773);

  \path \PMRoverviewSpine
    coordinate[pos=.10] (qprev)
    coordinate[pos=.16] (zi)
    coordinate[pos=.24] (zk)
    coordinate[pos=.32] (zl)
    coordinate[pos=.40] (qj);

  \PMRoverviewBase

  \foreach \r in {.10,.40,.55,.73,.89}{
    \path \PMRoverviewSpine
      node[pos=\r,terminal] {};
  }

  \foreach \u/\z/\v in {ui/zi/vi,uk/zk/vk,ul/zl/vl}{
    \draw[item path] (a)--(\u)--(\z)--(\v);
  }

  \draw[item path]
    (vi)
    .. controls (-.80,3.80) and (2.867,3.00) ..
    (b);

  \draw[item path]
    (vk)
    .. controls (-2.867,4.267) and (3.333,3.933) ..
    (b);

  \draw[item path]
    (vl)
    .. controls (-4.00,5.20) and (4.333,4.60) ..
    (b);

  \PMRoverviewFrameVertices

  \foreach \v in {ui,vi,uk,vk,ul,vl}{
    \node[item vertex] at (\v) {};
  }

  \foreach \v in {zi,zk,zl}{
    \node[crossing vertex] at (\v) {};
  }


  \node at (-1.94,-.35) {$a$};
  \node at ( 1.96,-.35) {$b$};

  \node[font=\normalsize]
    at (-1.15,2.37) {$q_{j-1}$};

  \node[font=\normalsize]
    at (-4.30,.55) {$q_j$};

  \node[font=\normalsize] at (-1.62,1.29) {$u_i$};
  \node[font=\normalsize] at (-2.45,2.97) {$v_i$};

  \node[font=\normalsize] at (-2.73,.85) {$u_k$};
  \node[font=\normalsize] at (-3.65,2.43) {$v_k$};

  \node[font=\normalsize] at (-3.06,.23) {$u_\ell$};
  \node[font=\normalsize] at (-4.47,1.98) {$v_\ell$};


  \node[PMRpurple,font=\normalsize]
    at (-1.30,-2.65) {$x_i+x_k+x_\ell=B$};

  \node at (-1.25,-3.70) {(c) a triple in a bin};
\end{scope}

\pgfresetboundingbox
\path[use as bounding box]
  (-.55,-4.00) rectangle (18.20,4.25);

\end{tikzpicture}%
}
\endgroup

\caption{%
  Overview of the reduction from \ThreePartition.
  (a) The $3m$ items have total size $mB$.
  (b) A graph $H$ constructed from the input of \ThreePartition. Items are encoded by purple item paths; the $m$ consecutive blue segments encode the $m$ bins. Two item paths are shown in the figure. The output metric $(X,d_X)$ of the reduction is the shortest-path metric of $H$. 
  (c) An idealized planar embedding of $(X,d_X)$ from subfigure (b). A triple $(x_i,x_{k}, x_{\ell})$ summing to $B$ assigned to the $j$-th bin corresponds to the three item paths intersecting the $j$-th bin path (from $q_{j-1}$ to $q_j$). This figure does not show the precise edge weight assignments. 
}
\label{fig:reduction-overview}
\end{figure}

It turns out that the working reduction for proving \Cref{thm:main-NP-hardes} comes from the \ThreePartition~ problem, which has little to do with planarity. In this problem, we are given $3m$ positive integers $x_1,x_2,\ldots, x_{3m}$, called \EMPH{items}, and an integer $B$ sufficiently larger than $1$ such that:
\begin{equation}
    \sum_{i=1}^{3m}x_{i} = mB, \text{ and } B/4 < x_i < B/2 \quad \forall~ i \in [3m]~.
\end{equation}
The task is to determine whether the set of items can be partitioned into $m$ \EMPH{bins}, each containing exactly 3 items that sum to $B$. The reduction takes an input of the \ThreePartition~ problem and produces a metric space with $n = O(m)$ points with integer distances at most $O(mB)$. It is known that \ThreePartition~ is NP-hard even when $B = m^{O(1)}$~\cite{GareyJohnson1979}, i.e., it is strongly NP-hard, which implies the NP-hardness in \Cref{thm:main-NP-hardes}. \Cref{sec:reduction} describes the reduction in detail, which is quite simple. A short summary is: we can (i) encode $3m$ items by $3m$ paths in the planar embedding, called item paths, (ii) encode $m$ bins by $m$ other (bin) paths, and (iii) show that item $x_i$ is placed in the $j$-th bin if and only if its item path crosses the corresponding bin path in the embedding without changing the terminal distances. As a result, the existence of a planar embedding
implies that  \ThreePartition~ instance has a solution, and vice versa. See \Cref{fig:reduction-overview} for a simplified view.

Since the reduction preserves the number of items up to a constant factor, we could potentially obtain a stronger running-time lower bound under the Exponential Time Hypothesis (ETH). In particular, if  \ThreePartition~ does not admit $2^{o(m)}B^{O(1)}$ running time under ETH, then we can immediately rule out $2^{o(n)}$ running time for metric planarity testing.  Unfortunately,  this running time is ruled out under ETH only for $k$-\textsc{Partition} for $k\geq 4$ ~\cite[Corollary 3.2]{Jansen2016}; for \ThreePartition, the state of the art is $2^{o(\sqrt{m})}B^{O(1)}$ ETH lower bound~\cite{Jansen2016,GareyJohnson1979}. (We could change our reduction to be from  $4$-\textsc{Partition} and then apply the black box $2^{\Omega(m)}B^{O(1)}$ ETH lower bound~\cite{Jansen2016}, but this approach is arguably less elegant.) In Appendix~\ref{app:three-partition}, we provide a $2^{\Omega(m)}B^{O(1)}$ lower bound for \ThreePartition~ by a direct reduction from the \textsc{3-Dimensional Matching} problem. 

\begin{restatable}{theorem}{ETHThreePartition}\label{thm:ETH-3Partition} Assuming the Exponential Time Hypothesis (ETH), \ThreePartition\ on $3m$ numbers has no algorithm with running time $2^{o(m)}|B|^{O(1)}$, where $|B|$ is the encoding length. This holds even for instances with $B=m^{O(1)}$.
\end{restatable}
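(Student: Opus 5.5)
We reduce from \textsc{3-Dimensional Matching} (3DM) restricted to instances in which every element occurs in a bounded number of triples. The starting fact is standard. Under ETH, 3-SAT on $n$ variables with $O(n)$ clauses (after sparsification) has no $2^{o(n)}$ algorithm. The classical Garey--Johnson reduction from 3-SAT to 3DM is linear, producing $|W|=|X|=|Y|=q=O(n)$ and $|T|=O(n)$ triples, each element occurring in $O(1)$ triples. Hence 3DM with $|T|=O(q)$ has no $2^{o(q)}$-time algorithm under ETH. It therefore suffices to give a polynomial-time reduction from such a 3DM instance to a \ThreePartition\ instance with $m=O(|T|)=O(q)$ bins and $B=q^{O(1)}$.

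\textbf{Step 1: direct 3DM $\to$ 4-number partition.} I would follow the Garey--Johnson construction through 4-\textsc{Partition}, using a base $r=c\cdot|T|$ for a sufficiently large constant $c$ (so that digit sums never carry). For each triple $t_\ell=(w_i,x_j,y_k)$ we create four numbers. Three of them encode $w_i$, $x_j$ and $y_k$ through distinct positional digits indexed by $i$, $j$ and $k$. The fourth, a ``pairing'' number, encodes $\ell$. We also create, for each element, one ``actual'' copy and $\#\text{occ}-1$ ``dummy'' copies. The digits are arranged so that any 4-set summing to the target must consist of one element-number of each type together with a matching pairing number, and consistency forces either all actual copies or all dummy copies. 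Because every element occurs only $O(1)$ times, the total count of numbers is $O(|T|)$. Moreover, only a constant number of ``digit positions'' is needed if indices are written directly in base $r$, e.g.\ the numbers $10r^4+i r^3+\dots$ used by Garey--Johnson. Consequently $B=O(r^4)=q^{O(1)}$.

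\textbf{Step 2: 4-\textsc{Partition} $\to$ \ThreePartition.} We apply the textbook reduction. For each 4-number we add large offsets, and for every pair of numbers $(a_i,a_j)$ that may legally share a bin we introduce ``pairing'' items $u_{ij},\bar u_{ij}$. This is the step where a naive approach creates $\Theta(N^2)$ items and so loses the linear size; indeed, it is exactly why only $2^{o(\sqrt m)}$ was known. The fix is to exploit sparsity. The only pairs that can appear together in a valid 4-set are those compatible under the 3DM structure, namely pairs associated with the same triple $\ell$ or with the same element's occurrences. There are $O(|T|)$ such pairs, since each element has $O(1)$ occurrences. We create pairing items only for these pairs, together with filler triples that absorb the unused ones. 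This keeps $m=O(|T|)$.

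The main obstacle is the correctness of Step 2 under this sparsification. We must show that no unintended triple sums to $B$, and that the numbers lie strictly in $(B/4,B/2)$. I would enforce this with separate base-$r$ digit fields: one field forcing the ``type'' structure (each bin takes one number from each of three classes), and fields for identities that cancel only on intended combinations. Correctness then follows from a no-carry digit-by-digit comparison, whose digit sums stay below $r$. Once this holds, an $2^{o(m)}|B|^{O(1)}$ algorithm for \ThreePartition\ would solve 3DM, and hence 3-SAT, in $2^{o(n)}$ time, contradicting ETH.
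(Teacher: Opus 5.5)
\textbf{There is a genuine gap: the soundness of Step~2 is never proved.} Your route is 3DM $\to$ \textsc{4-Partition} $\to$ \ThreePartition{}, keeping only $O(|T|)$ pairing items. The paper instead gives a direct 3DM $\to$ \ThreePartition{} reduction, using flagged/unflagged $\mathtt{UV}$- and $\mathtt{WE}$-connectors and padding elements to synchronize the two triples that encode each 3DM edge. As written, your proposal stops exactly at the crux. You call soundness of the sparsified Step~2 ``the main obstacle'', and then only gesture at unspecified ``separate base-$r$ digit fields''. The mechanism you sketch (``each bin takes one number from each of three classes'') is also not the Garey--Johnson reduction you say you are sparsifying. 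Its bins are $\{w_i,w_j,u_{ij}\}$, $\{w_k,w_l,\bar u_{ij}\}$ and $\{u_{ij},\bar u_{ij},\text{filler}\}$. So nothing in the proposal shows that a partition of your instance yields a 4-partition.

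\textbf{Step~0 is also wrong as stated.} The textbook Garey--Johnson 3-SAT $\to$ 3DM reduction is neither linear nor bounded-occurrence. Each variable's truth-setting ring has length equal to the total number of clauses, and the garbage-collection triples $(u_i[j],g_1[k],g_2[k])$ range over all $i,j,k$. A linear-size lower bound for 3DM is available (the paper uses Jansen et al.~\cite{Jansen2016}), but you should not then rely on bounded occurrence without checking it.

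\textbf{The missing idea: soundness is inherited, so no new digit design is needed.} Keep the Garey--Johnson numbers verbatim, but create $u_{ij},\bar u_{ij}$ only for pairs in a set $\mathcal P$, together with $|\mathcal P|-|T|$ fillers.
\begin{itemize}
\item Any partition of this pruned instance extends to a partition of the full Garey--Johnson instance. For every omitted pair, add the triple $\{u_{ij},\bar u_{ij},\text{filler}\}$; it sums to the target by construction, and the filler counts match exactly.
\item Hence soundness follows from their \textsc{4-Partition} $\to$ \ThreePartition{} theorem.
\item Completeness only requires that one pair of each intended 4-set lie in $\mathcal P$.
\end{itemize}
Since $u_{ij},\bar u_{ij}$ depend only on $a_i+a_j$, and all dummy copies of an element are equal, a small $\mathcal P$ suffices. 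For each 3DM triple $\ell$, keep the two pairs formed by $u_\ell$ with the actual copy and with one fixed dummy copy of its $W$-coordinate. This gives $|\mathcal P|\le 2|T|$ and at most $9|T|$ items, with $B'=\mathrm{poly}(q)$ and no bounded-occurrence assumption.

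With that argument supplied, your route is a valid alternative: it reuses the two textbook reductions instead of designing the paper's connector gadgets.
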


As a corollary of \Cref{thm:ETH-3Partition} and the reduction in our proof of \Cref{thm:main-NP-hardes}, we rule out a $2^{o(n)}$-time algorithm for metric planarity testing under ETH.

\begin{restatable}{corollary}{ETHhardnessPlanar}\label{thm:ETH-hardness}
There exists a constant $c$ such that given any metric space $(X,d_X)$ with $n$ vertices and $d_X(x,y) \in \{1,2,\ldots,n^{c}\}$ for $x\not= y \in X$, any algorithm testing planarity of $(X,d_X)$ requires $2^{\Omega(n)}$ time unless ETH fails.
\end{restatable}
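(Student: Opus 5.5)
The plan is to compose \Cref{thm:ETH-3Partition} with the reduction behind \Cref{thm:main-NP-hardes}, keeping track of instance size and distance magnitudes. Start from a \ThreePartition\ instance with $3m$ items and $B \le m^{c_0}$ for some constant $c_0$; \Cref{thm:ETH-3Partition} guarantees such instances exist and that, under ETH, they admit no $2^{o(m)}$-time algorithm (since $|B|^{O(1)} = m^{O(1)}$ here). The reduction of \Cref{sec:reduction} maps the instance in polynomial time to a metric $(X,d_X)$ with $n = O(m)$ points. This metric is the shortest-path metric of the graph $H$, whose edge weights are integers, and all distances are at most $O(mB)$. The metric is planar if and only if the instance is a yes-instance.

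Next, check the distance range. The distances are positive integers, since $H$ has positive integer weights and distinct terminals are at distance at least $1$. They are bounded by $O(mB) \le O(m^{c_0+1})$. We may assume $n \ge m$, for instance by padding: add extra points, each far away and attached via a pendant edge, which preserves planarity and keeps distances polynomial. Then all distances lie in $\{1,\dots,n^{c}\}$ for a suitable constant $c$ and $m$ large, and the constant $c$ in the statement is fixed by the reduction alone. If the reduction's weights were rational rather than integer, I would first scale by a common denominator, which is polynomial in $m$.

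Finally, run the argument by contradiction. Suppose planarity testing on such metrics took $2^{o(n)}$ time. Since $n = O(m)$, composing with the polynomial-time reduction would solve these \ThreePartition\ instances in $2^{o(m)} \cdot m^{O(1)} = 2^{o(m)}|B|^{O(1)}$ time, contradicting \Cref{thm:ETH-3Partition}. Hence testing requires $2^{\Omega(n)}$ time; more precisely, it admits no $2^{o(n)}$-time algorithm, which is the intended reading of the statement.

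The only real obstacle is confirming two properties of the reduction in \Cref{sec:reduction}: that the number of points is linear in $m$, independent of $B$, and that the edge weights are integral and polynomially bounded. The first needs checking because the item paths encode $x_i$ by length rather than by point count; the $2x_1$ label in \Cref{fig:reduction-overview} suggests lengths, not subdivisions. The second is the claimed $O(mB)$ bound. Both are asserted in the introduction, so the corollary should follow directly once they are confirmed.
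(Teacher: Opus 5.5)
Your proposal is correct and follows the paper's proof: compose the polynomial-time reduction of \Cref{lm:reductio-overall}, which has $|V(H)|=O(m)$ by \Cref{obs:distances-H}, with \Cref{thm:ETH-3Partition}, and additionally check that the integer distances are bounded by $O(mB)=n^{O(1)}$, a point the paper leaves implicit. The padding step is unnecessary but harmless, since $H$ already contains the $6m$ vertices $u_i,v_i$, so $n=\Theta(m)$ automatically.
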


We complement the ETH lower bound with two exponential-time algorithms for metric planar embedding. It is known that any planar metric with $n$ points has a planar embedding with $O(n^4)$ vertices~\cite{KNZ14}. This yields a simple $2^{\Tilde{O}(n^4)}$-time algorithm: enumerate all planar graphs with $O(n^4)$ vertices and solve a linear program for each candidate to find edge weights that realize the input metric. This approach uses only the sparsity of planar graphs. We exploit the structure of planar embeddings further to enumerate candidate planar embeddings in $2^{\Tilde{O}(n^2)}$ time. The key is the \EMPH{rotation system} of a simple drawing of the complete graph: the cyclic order of the curves incident to each terminal. The rotation system uniquely determines which pairs of curves cross~\cite{pach2006many,kynvcl2009enumeration}. A naive estimate bounds the number of rotation systems by at most $((n-2)!)^n=2^{O(n^2\log n)}$, so we can enumerate them and use linear programming to check whether the resulting candidates admit edge weights that realize the input metric. However, not all of these rotation systems correspond to simple drawings. Kyn\v{c}l~\cite{kynvcl2013improved} bounded the number of feasible rotation systems by $2^{n^2\alpha(n)^{O(1)}}$, where $\alpha(n)$ is the inverse Ackermann function. While his proof uses VC-dimension arguments and is existential, we show that these rotation systems can also be enumerated within the same asymptotic time bound. This gives the following result.

\begin{theorem}\label{thm:algorithm} Given any metric space $(X,d_X)$ with $n$ points whose distances are in $\{1,2,\ldots,\Delta\}$ for some integer $\Delta > 0$, we can construct a planar embedding for  $(X,d_X)$, if one exists, in time $2^{n^2\alpha(n)^{O(1)}}(\log \Delta)^{O(1)}$, where $\alpha(n)$ is the  inverse Ackermann function. 
\end{theorem}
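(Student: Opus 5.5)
The plan is to reduce the search for a planar embedding to the enumeration of combinatorial drawings of $K_n$ on the terminals. Fix a planar embedding $G$ of $(X,d_X)$. For every pair $x\neq y\in X$, choose a shortest $x$--$y$ path $P_{xy}$ in $G$, always using the same tie-breaking rule (for example, perturb the weights generically), so that any two chosen paths meet in a single subpath. Drawing each $P_{xy}$ slightly perturbed along $G$'s plane embedding gives a drawing of $K_n$ on $X$. Two paths sharing a subpath then either touch or cross once, and we may also assume no path passes through a terminal in its interior: if $z\in P_{xy}$, then $d(x,y)=d(x,z)+d(z,y)$ and we route through $z$ consistently. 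So after standard local redrawing the drawing is \emph{simple}: any two edges share at most one point, either a common endpoint or a proper crossing.

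\textbf{Step 1 (recovering a candidate graph from the rotation system).} Conversely, take any simple drawing $D$ of $K_n$. Let $H_D$ be its planarization: its vertices are the terminals and the crossing points, and its edges are the segments between consecutive such points. $H_D$ is planar and has $O(n^4)$ vertices. Give every segment an unknown weight $w_e\ge 0$ and consider the linear program with constraints $\sum_{e\in \text{edge }xy} w_e = d_X(x,y)$ for each pair, together with $d_{H_D}(x,y)\ge d_X(x,y)$. The latter is non-convex as a global condition, but it suffices to require that every $x$--$y$ walk in $H_D$ has length at least $d_X(x,y)$. I would encode this with potential variables: for each terminal $x$ and each vertex $v$ of $H_D$, a variable $\pi_x(v)$ with $\pi_x(x)=0$, $\pi_x(v)\le \pi_x(u)+w_{uv}$ for every segment $uv$, and $\pi_x(y)\ge d_X(x,y)$. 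This is a feasibility LP of polynomial size, solvable in $\mathrm{poly}(n)(\log\Delta)^{O(1)}$ time. By the first paragraph, if $(X,d_X)$ is planar, then the drawing induced by some embedding, with weights equal to the lengths of the subpaths, is a feasible solution. Conversely, any feasible solution gives a planar graph realizing $d_X$: the equalities give the upper bound on distances and the potentials give the lower bound.

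\textbf{Step 2 (enumeration).} The rotation system of a simple drawing determines which pairs of edges cross. I would additionally argue, following Kyn\v{c}l, that it determines the order of crossings along each edge, and hence the planarization up to isomorphism. If the rotation system fixed only the crossing pairs, we could instead enumerate the crossing orders along each edge, but that costs about $(n^2)!^{n^2}$ and is too expensive. So I rely on the known statement that simple drawings of $K_n$ are determined up to weak isomorphism, and in fact up to isomorphism of the planarization, by the rotation system, and that the planarization can be reconstructed in polynomial time. The remaining task is to list a superset of the realizable rotation systems of size $2^{n^2\alpha(n)^{O(1)}}$, so that each candidate can be checked in polynomial time: reconstruct a drawing, or reject.

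\textbf{Main obstacle: making Kyn\v{c}l's counting constructive.} The naive bound $((n-2)!)^n$ is a factor $\log n$ too large. Kyn\v{c}l's proof is existential: it bounds the number of realizable rotation systems using the combinatorial structure of the drawings (VC-dimension/Davenport--Schinzel type bounds). My plan is to turn it into a branching procedure. Add the terminals one at a time. When the vertex $v_k$ is added, its rotation, together with the rotations at earlier vertices, must be consistent with a simple drawing of $K_k$; consistency is a local condition on $5$-tuples, since a rotation system is realizable iff all $K_5$ subsystems are (Kyn\v{c}l). I would then show that the number of valid extensions at each step is $2^{k\alpha(k)^{O(1)}}$, by reproducing his encoding argument as an explicit encoding/decoding procedure: each extension is described by a short code, and we enumerate all codes, decode them, and filter the results with the $5$-tuple test. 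This gives total time $\prod_k 2^{k\alpha(k)^{O(1)}}\cdot \mathrm{poly}(n)=2^{n^2\alpha(n)^{O(1)}}$. Combined with the LP in Step 1, whose cost contributes the $(\log\Delta)^{O(1)}$ factor, this proves the theorem.
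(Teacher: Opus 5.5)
Your Step~0 and the LP in Step~1 are sound and match the paper's construction. The gap is in Step~2. You assert that the rotation system determines the order of crossings along each edge, and hence the planarization $H_D$ up to isomorphism. It does not: it determines only \emph{which} pairs of edges cross.

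Simple drawings with the same rotation system can differ by triangle flips, which reverse the order of the crossings along the three edges involved. For instance, take six points $v_1,\ldots,v_6$ in convex position. Moving $v_1$ slightly along the hull carries the segment $v_1v_4$ across the crossing of $v_2v_5$ and $v_3v_6$. The rotation system is unchanged, since for convex position it depends only on the cyclic order of the points. But the two crossings on $v_1v_4$ swap order.

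So your algorithm runs the LP on one drawing $D$ realizing the rotation system $R^*$ of the drawing $D^*$ induced by a planar embedding. Nothing in your argument shows that the LP for $H_D$ is feasible given that the one for $H_{D^*}$ is. As you note yourself, enumerating crossing orders instead is far too expensive. The missing ingredient is that feasibility depends only on the rotation system.

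The paper proves this by combining two things:
\begin{enumerate}
\item Gioan's theorem: any two simple drawings of $K_n$ with the same rotation system are connected by a sequence of triangle flips.
\item A weight-redistribution argument for a single flip. The old triangle sides have lengths $a,b,c$ satisfying the triangle inequalities, because each lies on a curve that is a shortest path. So one can write $a=\lambda_p+\lambda_q$, $b=\lambda_p+\lambda_r$, $c=\lambda_q+\lambda_r$ with $\lambda_p,\lambda_q,\lambda_r\ge 0$. Give the three new sides length $0$, and add $\lambda_s$ to the two exterior edges formerly incident to crossing $s$. This preserves the length of every curve and shortens no walk.
\end{enumerate}

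Your enumeration step is also unproved, and it is unnecessary. The claim that each realizable system on $k$ vertices has only $2^{k\alpha(k)^{O(1)}}$ realizable extensions does not follow from Kyn\v{c}l's bound. That bound controls the total number $|\mathcal{R}_k|$ of realizable systems, not the extensions of each individual one. Making his VC-dimension argument constructive would be a project in itself.

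Your product bound $\prod_k$ ignores that you only branch from realizable systems. Suppose that at level $k$ you try all $(k-1)^k(k-1)!=2^{O(k\log k)}$ extensions of each $R\in\mathcal{R}_k$, and keep those accepted by Kyn\v{c}l's polynomial-time realizability test. Then the work at level $k$ is $|\mathcal{R}_k|\cdot 2^{O(k\log k)}\cdot k^{O(1)}=2^{k^2\alpha(k)^{O(1)}}$, using Kyn\v{c}l's existential bound purely as a black box. This is how the paper proceeds, and it removes your ``main obstacle'' entirely.
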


We leave closing the gap between the $2^{\Omega(n)}$ ETH lower bound and the running time in \Cref{thm:algorithm} as an open problem. Interestingly, if we apply the algorithm in \Cref{thm:algorithm} to the hard instance in our ETH-hardness reduction, it runs in $2^{\Tilde{O}(m)}$ time for a very specific reason: we only need to enumerate the ordering of the item paths around \EMPH{a single vertex}, specifically vertex $a$ in \Cref{fig:reduction-overview}(b). If one can cook up a hard instance where we really have to enumerate the ordering of paths of $\Omega(n)$ vertices, then one could likely get a $2^{\Omega(n^2)}$ ETH lower bound  (to match the running time in \Cref{thm:algorithm}), though this is wildly conjectural.

Finally, if we allow a $(1+\epsilon)$-approximation, we can get a singly exponential-time algorithm. In this algorithm, the input metric $(X,d_X)$ need not be planar, as long as a $(1+\epsilon/2)$-approximation exists.

\begin{theorem}
\label{thm:approximate-algo}
    For any $\epsilon\in (0,1)$, and any given metric space $(X,d_X)$ with $n$ points whose distances are in $\{1,2,\ldots,\Delta\}$ for some integer $\Delta > 0$,  we can find in $2^{n(\log n/\epsilon)^{O(1)}}(\log \Delta)^{O(1)}$ time an edge-weighted planar graph $G = (V\supseteq X, E, w)$ such that:
    \begin{equation}
        d_X(x,y) \leq d_G(x,y) \leq (1+\eps)d_X(x,y)\qquad x\not=y \in X,
    \end{equation}
    if such a graph with $(1+\eps/2)$-approximation exists.
\end{theorem}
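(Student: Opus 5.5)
We prove \Cref{thm:approximate-algo} by reducing the search space to planar graphs with only $n(\log n/\epsilon)^{O(1)}$ vertices, and then enumerating those graphs and solving a linear program for each.

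\textbf{Structural step.} Suppose some planar $G^*$ satisfies $d_X \leq d_{G^*} \leq (1+\epsilon/2)d_X$. We show there is also a planar $G'$ with $|V(G')| = n(\log n/\epsilon)^{O(1)}$ and $d_X \leq d_{G'} \leq (1+\epsilon)d_X$. The construction proceeds as follows.
\begin{enumerate}[label=(\roman*)]
\item Take the union $U$ of shortest paths in $G^*$ between all pairs of terminals. This is planar.
\item Apply the terminal-preserving distance approximation for planar graphs, a planar $(1+\delta)$-emulator of size $\tilde O(n\,\mathrm{poly}(1/\delta))$. Such emulators are known, e.g.\ from the Chang--Krauthgamer--Tan line of work on planar emulators, or via shortest-path separators with portals in the style of Thorup and Klein. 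Use $\delta = \Theta(\epsilon)$ so that $(1+\epsilon/2)(1+\delta) \leq 1+\epsilon$.
\item Edge weights only need to preserve distances from above and below, so the lower bound $d_X \leq d_{G'}$ is inherited as long as the emulator never shortens distances.
\end{enumerate}
If we prefer not to cite an emulator, we build one directly. Recursively split $U$ with balanced shortest-path separators; there are $O(\log n)$ levels, each splitting off terminal-balanced pieces. On each separator path, place $O(\log n/\epsilon)$ portals per terminal, at geometric spacing. Terminal pairs are then routed through portals. Finally, contract every degree-2 Steiner chain and every non-portal part that no route uses. This leaves a planar minor of $U$ with $n(\log n/\epsilon)^{O(1)}$ branching vertices, whose distances are sandwiched as required.

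\textbf{Algorithmic step.} Let $N = n(\log n/\epsilon)^{O(1)}$. We enumerate all planar graphs on $N$ vertices together with an injective labelling of $n$ of them by $X$. There are $2^{O(N\log N)} = 2^{n(\log n/\epsilon)^{O(1)}}$ such labelled graphs, since planar graphs have $O(N)$ edges. For each candidate $H$ we solve a linear program:
\begin{itemize}
\item the variables are the edge weights $w_e \geq 0$;
\item the constraints are $d_H(x,y) \geq d_X(x,y)$ and $d_H(x,y) \leq (1+\epsilon)d_X(x,y)$ for all $x \neq y \in X$.
\end{itemize}
The upper bounds are existential over paths, so we handle them by additionally guessing, for each terminal pair, a simple path in $H$ to serve as the witness. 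This multiplies the count by $N^{O(nN)}$, which is too large. Instead we guess a shortest-path tree from each terminal: $n$ trees, each costing $2^{O(N\log N)}$, which gives $2^{nN\log N}$. That is too much as well.

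The better route is to express the whole condition as a single LP in potentials. Introduce variables $p_x(v)$ for every $x\in X$ and $v\in V(H)$, together with the constraints
\begin{equation*}
|p_x(u)-p_x(v)| \leq w_{uv}, \qquad p_x(x)=0, \qquad p_x(y)\geq d_X(x,y).
\end{equation*}
These enforce the lower bound $d_H(x,y)\geq d_X(x,y)$, since any feasible potential is a lower bound on distance. The upper bound $d_H(x,y)\leq(1+\epsilon)d_X(x,y)$ is not linear in $w$ as it stands. We fix this with the standard trick of guessing which edges are used. Guess, for each terminal, a shortest-path tree structure only on the small emulator. Because the emulator from the structural step can be built so that the needed paths form a forest of $\tilde O(n)$ branch paths, the guess costs $2^{n(\log n/\epsilon)^{O(1)}}$. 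Given the guessed paths, the path lengths are linear in $w$, so the upper bounds become linear constraints.

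\textbf{Running time.} Each LP is solved in time polynomial in $N$ and $\log\Delta$, which yields the stated $2^{n(\log n/\epsilon)^{O(1)}}(\log\Delta)^{O(1)}$ bound. If any candidate is feasible we output it; otherwise the structural step guarantees that no planar $(1+\epsilon/2)$-approximation exists.

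\textbf{Main obstacle.} We expect the structural step to be the hardest part: obtaining a planar $(1+\delta)$-emulator with near-linear size in $n$, while keeping the union of the witness paths describable with only $n\,\mathrm{polylog}$ bits of guessing. Our first attempt would be to cite the known near-linear planar emulator result and verify that its construction exposes the routing paths explicitly.
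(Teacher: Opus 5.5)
\textbf{Gap in the algorithmic step.} Your structural step matches the paper's: invoke the Chang--Krauthgamer--Tan near-linear planar emulator with $\delta=\Theta(\epsilon)$. Your potential constraints also correctly certify the lower bounds $d_H(x,y)\ge d_X(x,y)$.

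The problem is certifying the upper bounds $d_H(x,y)\le(1+\epsilon)d_X(x,y)$, which are disjunctive in the weights. As you compute yourself, guessing witness paths, or one shortest-path tree per terminal, in a graph with $N=n(\log n/\epsilon)^{O(1)}$ vertices costs roughly $2^{nN}=2^{n^2(\log n/\epsilon)^{O(1)}}$ candidates. Your rescue is the claim that the emulator ``can be built so that the needed paths form a forest of $\tilde O(n)$ branch paths.'' You give no argument for it. The emulator theorem is purely a statement about terminal distances and exposes no such structure, and nothing shows that the $\binom{n}{2}$ witness paths admit an encoding with $n(\log n/\epsilon)^{O(1)}$ bits.

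Your fallback ``direct'' construction has a similar hole. Routing terminal pairs through portals on shortest-path separators inside $U$ keeps planarity. However, pairwise intersections of $n\,\mathrm{polylog}(n)$ shortest paths can already create quadratically many branch vertices, so the claimed size bound does not follow. Getting planarity and near-linear size at the same time is exactly the hard content of the cited result. As written, the algorithm does not achieve the stated running time.

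\textbf{The missing idea: discretize the edge weights.} This removes the need for both the LP and any path guessing, and it is what the paper does.
\begin{enumerate}
\item Let $G$ be the emulator, so $d_X\le d_G\le(1+\epsilon_1)d_X$ with $1+\epsilon_1=(1+\epsilon/2)(1+\delta)$ and $|V(G)|=q$.
\item Fix a shortest path $P_{xy}$ for each terminal pair and delete edges lying on no $P_{xy}$.
\item For each remaining edge set $\beta_e=\min\{d_X(x,y): e\in P_{xy}\}$, so that $w(e)\le(1+\epsilon_1)\beta_e$.
\item Round each $w(e)$ up to a multiple of $\eta\beta_e/q$. This never decreases distances. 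It adds at most $\eta\, d_X(x,y)$ to each $P_{xy}$, since $P_{xy}$ has at most $q$ edges, each with $\beta_e\le d_X(x,y)$.
\end{enumerate}
With $\delta=\eta=\epsilon/8$ the result is still a $(1+\epsilon)$-approximation. All weights now lie in a set $W$ of size $O(n^2q/\epsilon)$ that is computable from the input distances.

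You can then enumerate the $q^{O(q)}$ terminal-labelled planar graphs together with all $|W|^{O(q)}$ weight assignments. Both inequalities are verified directly by all-pairs shortest paths, for a total time of $2^{n(\log n/\epsilon)^{O(1)}}(\log\Delta)^{O(1)}$.
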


The idea of the algorithm is very simple: it is known that any planar metric $(X,d_X)$ has a planar embedding that preserves distances up to a factor of $1+\epsilon$ of size $n(\log n/\eps)^{O(1)}$ \cite{chang2022almost}. Thus, we can enumerate this emulator in time $2^{n(\log n/\epsilon)^{O(1)}}$.

\paragraph{On notation.} Here we only use standard graph terminology. In particular, $\pi_G(u,v)$ denotes a shortest path from $u$ to $v$ in $G$, and $d_G(u,v)$ denotes the distance between $u$ and $v$ in $G$. We will use the terms weight and length interchangeably.

\section{ETH Hardness of Metric Planarity Testing}\label{sec:reduction}

We reduce from the \ThreePartition~ problem: given a set of $3m$ items $\{x_1,\ldots, x_{3m}\}$ and an integer $B$ (sufficiently larger than $1$), we construct an edge-weighted (non-planar) graph $H$ in three steps. The output metric $(X,d_X)$ is the shortest-path metric of $H$.  Let:
\begin{equation}\label{eq:def-L}
     L = m(4B+2)
\end{equation}
Recall that the total sum of all the numbers is $mB$, so $L$ is larger than the sum. For now, one should think of $L$ as a number sufficiently larger than $m$ and $B$ (and all the $x_i$'s).  

\begin{figure}[t]
  \centering
  \begingroup
  \definecolor{PMRblue}{RGB}{65,156,220}
  \definecolor{PMRpurple}{RGB}{143,50,139}
  \definecolor{PMRorange}{RGB}{230,158,53}
  \definecolor{PMRgray}{RGB}{235,235,235}

  \resizebox{\linewidth}{!}{%
  \begin{tikzpicture}[
      x=1cm,y=1cm,font=\large,
      line cap=round,line join=round,
      frame edge/.style={
        draw=black,line width=.85pt
      },
      major vertex/.style={
        circle,draw=black!80,fill=PMRgray,
        line width=.8pt,minimum size=12pt,inner sep=0pt
      },
      subdivision vertex/.style={
        circle,fill=black,draw=none,
        minimum size=4.5pt,inner sep=0pt
      },
      item vertex/.style={
        circle,draw=PMRpurple,fill=white,
        line width=1.05pt,minimum size=6.5pt,inner sep=0pt
      },
      spine/.style={
        draw=PMRblue,line width=1.5pt
      },
      marker link/.style={
        draw=PMRblue,line width=1.2pt,
        dash pattern=on 0pt off 3.4pt
      },
      item edge/.style={
        draw=PMRpurple,line width=.95pt,
        dash pattern=on 3.3pt off 3pt
      },
      edge label/.style={
        fill=white,inner sep=1.5pt
      },
      panel label/.style={
        font=\large,align=center
      }
    ]

    \newcommand{\PMRframe}[1]{%
      \coordinate (#1-a) at (-1.7,0);
      \coordinate (#1-b) at ( 1.7,0);
      \coordinate (#1-s) at (0, 2);
      \coordinate (#1-t) at (0,-2);
      \coordinate (#1-o) at (0, 0);

      \foreach \u/\v in {a/s,s/b,b/t,t/a,o/a,o/b,o/s,o/t}{
        \draw[frame edge] (#1-\u)--(#1-\v);
        \foreach \r in {.25,.50,.75}{
          \node[subdivision vertex]
            at ($(#1-\u)!\r!(#1-\v)$) {};
        }
      }

      \foreach \v in {a,b,s,t,o}{
        \node[major vertex] at (#1-\v) {};
      }

      \node[below left=4pt]  at (#1-a) {$a$};
      \node[below right=4pt] at (#1-b) {$b$};
      \node[above right=5pt] at (#1-o) {$o$};
    }

    \newcommand{\PMRspine}[1]{%
      \draw[spine,shorten <=6.3pt,shorten >=6.3pt]
        (#1-s)
        .. controls (-5.4,2.8) and (-5.4,-2.8) ..
        (#1-t)
        coordinate[pos=.065] (#1-qone)
        coordinate[pos=.135] (#1-qtwo)
        coordinate[pos=.370] (#1-qj)
        coordinate[pos=.775] (#1-bin);

      \foreach \r in
        {.065,.135,.21,.29,.37,.46,.55,.64,.73,.82,.91}{
        \path
          (#1-s)
          .. controls (-5.4,2.8) and (-5.4,-2.8) ..
          (#1-t)
          node[pos=\r,subdivision vertex] {};
      }

      \draw[marker link,shorten >=6.3pt]
        (#1-qj)--(#1-a);

      \draw[marker link,shorten >=6.3pt]
        (#1-qj)
        .. controls (-5.9,6.25) and (4.3,6.0) ..
        (#1-b);

      \node[left=5pt] at (#1-qj) {$q_j$};
      \node[PMRpurple,edge label,below left=3pt]
        at (#1-bin) {\small{$4B+2$}};
    }

    \draw[PMRorange!80,line width=.65pt]
      (2.75,-3.6)--(2.75,5.05);
    \draw[PMRorange!80,line width=.65pt]
      (10.35,-3.6)--(10.35,5.05);

    \begin{scope}
      \PMRframe{F}

      \node[above=6pt] at (F-s) {$s$};
      \node[below=6pt] at (F-t) {$t$};

      \node[PMRblue,edge label]
        at (-1.55,1.48) {$L+3$};
      \node[PMRblue,edge label]
        at (1.26,-1.47) {$L+3$};



      \node[PMRorange,edge label,below=5pt]
        at (-.65,0) {\small{$L+2$}};
      \node[PMRorange,edge label,below=5pt]
        at (.65,0) {\small{$L+2$}};

      \draw[
        -{Stealth[length=4pt]},
        PMRorange,line width=.8pt
      ]
        (1.33,1.53)
        --($(F-s)!.37!(F-b)+(.10,.06)$);

      \node[PMRpurple,anchor=west]
        at (1.10,1.87) {$\le B/2$};

    \end{scope}

    \node[panel label] at (0,-3.42)
      {(a) Frame $F$};

    \begin{scope}[xshift=7.4cm]
      \PMRframe{J}
      \PMRspine{J}

      \node[above=7pt] at (J-s) {$s=q_0$};
      \node[below=7pt] at (J-t) {$t=q_m$};
      \node[above=5pt] at (J-qone) {$q_1$};
      \node[above=5pt] at (J-qtwo) {$q_2$};

      \node[PMRblue,edge label]
        at (-2.50,.81) {$L+3$};
      \node[PMRblue,edge label]
        at (-2.88,4.59) {$L+3$};

      \node[PMRblue,align=center]
        at (-2.63,-2.64) {Spine $Q$};
    \end{scope}

    \node[panel label] at (6.62,-3.42)
      {(b) Graph $J$};

    \begin{scope}[xshift=15cm]
      \PMRframe{H}
      \PMRspine{H}

      \node[above=7pt] at (H-s) {$s$};
      \node[below=7pt] at (H-t) {$t$};

      \coordinate (H-ui) at (-2.75,1.05);
      \coordinate (H-vi) at (-3.35,2.52);
      \coordinate (H-uk) at (-1.50,1.25);
      \coordinate (H-vk) at (-1.30,3.00);

      \draw[item edge,shorten <=6.3pt]
        (H-a)--(H-ui);
        \node[PMRpurple,edge label]
             at (-2.60,.32) {$2x_i$};

      \draw[item edge,shorten <=6.3pt]
        (H-a)--(H-uk);

      \draw[item edge,shorten >=6.3pt]
        (H-vi)
        .. controls (-.60,5.00) and (3.30,3.70) ..
        (H-b);
        \node[PMRpurple,edge label]
        at (-1.90,3.92) {$2x_i$};

      \draw[item edge,shorten >=6.3pt]
        (H-vk)
        .. controls (-.10,4.40) and (2.80,3.10) ..
        (H-b);

      \draw[
        item edge,
        preaction={draw=white,line width=3.8pt,solid}
      ]
        (H-ui)--(H-vi)
        node[
          pos=.58,left=3pt,
          PMRpurple,edge label
        ] {\small{$2L+4$}};

      \draw[
        item edge,
        preaction={draw=white,line width=3.8pt,solid}
      ]
        (H-uk)--(H-vk);

      \node[
        item vertex
      ] at (H-ui) {};

      \node[
        item vertex
      ] at (H-vi) {};

      \node[
        item vertex
      ] at (H-uk) {};

      \node[
        item vertex
      ] at (H-vk) {};
    \node at (-2.30,1.05) {$u_i$};
    \node at (-2.85,2.50) {$v_i$};
    \node at (-1.05,1.32) {$u_1$};
    \node at (-0.95,2.75) {$v_1$};

    \end{scope}

    \node[panel label] at (14.22,-3.42)
      {(c) Graph $H$};

  \end{tikzpicture}%
  }
  \endgroup

  \caption{%
    (a) The frame $F$ is a subdivided wheel: $d_F(o,a) = d_F(o,b) = L+2$ while other distances $d_F(a,s) = d_F(a,t) = d_F(s,b) = d_F(b,t) = d_F(o,s) = d_F(o,t)= L+3$; each subdivision edge has a positive integer length at most $B/2$.
    (b) Adding the spine $Q=(q_0=s,\ldots,q_m=t)$ of total length $L$ where each edge has length $4B+2$; two edges $(a,q_j)$ and $(b,q_j)$ of length $L+3$ each shown in the figure. Clearly $J$ is planar. (c) Adding one item path $(a,u_i, v_i, b)$ per item $x_i$; the three edge lengths are  $2x_i,2L+4,2x_i$. The graph $H$ is non-planar; the figure shows crossings between item paths and the spine, but there is no vertex at each crossing.
  }
  \label{fig:frame-spine-items}
\end{figure}

\hdashrule{0.95\linewidth}{1pt}{2pt}

\paragraph{Step 1: The frame $F$.} First, construct a graph on vertex set $\{a,b,s,t, o\}$ with 8 edges, where $\{a,s,b,t\}$ forms a cycle and $o$ has 4 edges to all other four vertices, i.e., a wheel graph. Every edge has length $L+3$, except for $(o,a)$ and $(o,b)$, which have length $L+2$ only.  For every edge, we subdivide it into a path such that each edge on the path has an integer length $\floor{B/2}$; the last edge could have length smaller than $\floor{B/2}$, and in this case we subdivide so that this edge is not incident to either $a$ or $b$. Each path has $O(L/B) = O(m)$ edges.  The resulting graph is called the \EMPH{frame}, denoted by $F$. See \Cref{fig:frame-spine-items}(a). 

\paragraph{Step 2: The spine $Q$ and graph $J$.} We add to $F$ a path $Q = (q_0 = s, q_1,\ldots, q_m = t)$ of $m$ edges where each edge has length $4B+2$, so that the total length of the path is exactly $m(4B+2) = L$ by \Cref{eq:def-L}. (At a high level, each edge plays the role of a bin in the solution of \ThreePartition; we will make this intuition more precise later.)  Finally, for each internal spine vertex $q_j \in Q\setminus \{s,t\}$, we add two edges $(a,q_j)$ and $(q_j,b)$, each of length exactly  $L+3$. Let $J$ be the resulting graph. Observe that $J$ is planar; see \Cref{fig:frame-spine-items}(b).  

Note that previously, the distance between $s$ and $t$ in the frame $F$ is $2(L+3)$. However, by attaching the spine $Q$ (of length $L$) to the frame, the distance between $s$ and $t$ is now $L$. Some distances between subdivided vertices in $F$ near $s$ and $t$ also get shorter, but it is not hard to show that the distances between $a,b,o$ and from $\{a,b,o\}$ to $\{s,t\}$ remain the same. 

\paragraph{Step 3: Item paths and graph $H$.} For each item $x_i$, we attach to $J$ a new path $(a,u_i,v_i,b)$ between $a$ and $b$, and set the weight of the three edges $(a,u_i)$, $(u_i,v_i)$ and $(v_i,b)$ to be $2x_i, 2L + 4, 2x_i$, respectively. This path has length exactly $2L + 4x_i + 4$, larger than the distance between $a$ and $b$ in $J$ (which is $2L+4$), so the distances between vertices in $J$ remain the same. Let $H$ be the resulting graph (with edge weights to be the assigned lengths). We note that $H$ is non-planar; see \Cref{fig:frame-spine-items}(c).  

\paragraph{The metric $(X,d_X)$.} The resulting metric is $(X,d_X) \coloneqq (V(H), d_H)$, the shortest-path metric of $H$. 

\hdashrule{0.95\linewidth}{1pt}{2pt}

The idea of the item paths is the following: roughly speaking, in the solution of the \ThreePartition~ problem, three items, say $x_i, x_k, x_{\ell}$, are put into the same bin, say the $j$-th bin \EMPH{if and only if} in the planar graph $G$ realizing $(V(H),d_H)$, the item paths corresponding to $x_i, x_k$ and $x_{\ell} $ intersect the edge $(q_{j-1}, q_{j})$ (at some vertex of $G$), and no other item path intersects $(q_{j-1}, q_{j})$. Here we implicitly assume that $G$ is a planarization of $H$, obtained by introducing a vertex for the crossing between each item path and the spine.  Indeed, in the yes case (completeness), $G$ is a planarization of $H$. The bulk of the soundness analysis is showing that $G$ is close to a planarization of $H$. Before going into the details, we collect several important facts about the distances in $F$, $J$, and $H$. (On a first reading, we suggest that readers skip this observation and refer back to it later.)

\begin{observation}\label{obs:distances-H} $|V(H)| = O(m)$ and:
\begin{align}
    d_F(a,b) = d_H(a,b) &= 2L+4  &\label{eq:a-b-dist}\\
    d_F(s,t) = 2L+6,~  d_H(s,t) &= L& \label{eq:s-t-dist}\\
    d_H(a, q_j) = d_H(b,q_j) &= L+3  &\qquad\forall j \in [0,m] \label{eq:ab-qj-dist}\\
    d_H(u_i, z) = 2x_i + d_J(z,a),~ d_H(v_i, z) &= 2x_i + d_J(z,b) & \qquad\forall z\in V(J) \text{ and } i \in [3m] \label{eq:ui-z-inJ}\\
    d_H(u_i,v_i) &= 2L+ 4  &\qquad\forall i \in [3m]\label{eq:ui-vi-dist}\\
    d_H(a,v_i) = d_H(b,u_i) &= 2L + 4 +2x_i  &\qquad\forall i \in [3m] \label{eq:a-vi-b-u_i-dist}\\
    d_H(u_i,v_k) &= 2L+4+2x_i + 2x_k   &\qquad\forall i\not=k \in [3m] \label{eq:u_i-v_k-dist}\\
    d_H(u_i, q_j) = d_H(v_i,q_j) &= L+3+2x_i  &\qquad\forall i \in [3m], j \in [0,m]  \label{eq:u_i-q_j-dist}\\
    d_H(q_{j}, q_k) = d_J(q_j,q_k) &=  |j-k|(4B+2) \leq L &\qquad \forall i, k \in [0,m]\label{eq:q_j-q_k-dist}
\end{align}
and by the choice of $L$, we have:
\begin{align}
    B/4 < x_i < B/2 \ll L \qquad\forall i \in [3m] \label{eq:xi-B-L}
\end{align}
\end{observation}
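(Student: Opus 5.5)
We verify each identity by exhibiting a path of the claimed length and proving a matching lower bound via a cut or potential argument. The size bound is immediate: $F$ has 8 subdivided paths of $O(L/B)=O(m)$ edges each, the spine adds $m+1$ vertices, and the items add $6m$ vertices, so $|V(H)|=O(m)$.

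\textbf{Distances within $F$ and $J$.} For \eqref{eq:a-b-dist}, the route $a\to o\to b$ has length $2L+4$, and every other $a$--$b$ route in $F$ has length at least $2(L+3)$. In $J$, a route through some $q_j$ costs $2(L+3)$. A route that uses the spine must first reach the spine; since every edge incident to a spine vertex from $\{a,b,o\}$ or from the frame costs at least $L+3$ before entering and again after leaving, such a route costs at least $2L+6$. Item paths cost $2L+4x_i+4>2L+4$. Hence item paths never shorten distances between vertices of $J$, and $d_H=d_J$ on $V(J)$; this also gives the first equality of \eqref{eq:q_j-q_k-dist}.

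For \eqref{eq:s-t-dist}, the value in $F$ is $\min(2(L+3), (L+3)+(L+3))$ over the routes through $a$, $b$ and $o$, which equals $2L+6$. In $H$ the spine gives at most $L$. For the lower bound I will use the potential $\phi(z)=d_J(z,s)$, or argue directly: any route avoiding the interior of the spine passes through a hub in $\{a,b,o\}$ and costs at least $2(L+2)>L$.

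For \eqref{eq:q_j-q_k-dist}, the spine subpath has length $|j-k|(4B+2)$. Any detour leaving the spine either goes through a hub, costing at least $2(L+3)>L$, or goes along the frame from $s$ to $t$, costing at least $2L+6$. Both exceed $L\ge |j-k|(4B+2)$.

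For \eqref{eq:ab-qj-dist}, the direct edges give $L+3$ for internal $j$, and the frame edges give $L+3$ for $j\in\{0,m\}$. The lower bound holds because every edge leaving $a$ has the subdivided structure: any route from $a$ reaching the spine must traverse a full frame edge of length at least $L+2$, or an edge $(a,q_{j'})$ of length $L+3$. The worry is the route $a\to o\to$ frame $\to s$, which costs $(L+2)+(L+3)$ and is far too long. The route $a\to q_{j'}\to$ spine $\to q_j$ costs at least $L+3$. I must also check that a route along the frame edge $a$--$s$ and then along the spine costs at least $L+3$, which is clear.

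\textbf{Item distances.} For \eqref{eq:ui-z-inJ}, the key point is that $u_i$ has only two neighbours. Leaving $u_i$ through $v_i$ costs $2L+4$ plus $2x_i$ to reach $b$, so the route $u_i\to v_i\to b\to z$ costs $2L+4+2x_i+d_J(b,z)\ge 2x_i+d_J(a,z)$, using $d_J(a,z)\le d_J(a,b)+d_J(b,z)=2L+4+d_J(b,z)$. Leaving through $a$ gives exactly $2x_i+d_J(a,z)$, provided the remainder of the route does not pass through another item path. This holds because item paths do not shorten distances within $J$, as shown above. The same reasoning, with the roles of $a$ and $b$ swapped, handles $v_i$. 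Then \eqref{eq:u_i-q_j-dist} follows from \eqref{eq:ui-z-inJ} and \eqref{eq:ab-qj-dist}, and \eqref{eq:a-vi-b-u_i-dist} follows from \eqref{eq:ui-z-inJ} with $z=b$ or $z=a$ together with \eqref{eq:a-b-dist}.

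For \eqref{eq:ui-vi-dist}, the direct edge has length $2L+4$. The alternative route $u_i\to a\to b\to v_i$ costs $4x_i+2L+4$, which is longer. For \eqref{eq:u_i-v_k-dist}, any $u_i$--$v_k$ route with $i\neq k$ must exit $u_i$ and enter $v_k$. Exiting through $a$ and entering through $b$ costs $2x_i+(2L+4)+2x_k$. Any other choice uses an edge $(u_i,v_i)$ or $(u_k,v_k)$ of length $2L+4$ plus at least $d(v_i,v_k)$, which is at least $4x_k$-ish; this is larger. Finally, \eqref{eq:xi-B-L} is the \ThreePartition\ hypothesis combined with $L=m(4B+2)$.

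\textbf{Main obstacle.} The step needing the most care is the lower bounds involving subdivision vertices near $s$ and $t$, which the spine shortens. I would handle these uniformly with a cut argument: every edge incident to $\{a,b,o\}$ leads into a path of length at least $L+2$ before reaching another hub or the spine.
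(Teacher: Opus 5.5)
Your proof is correct and takes the same approach as the paper: exhibit a witness path for each distance and rule out shorter ones. You go further than the paper by actually supplying the lower bounds (degree-2 subdivision vertices, any excursion off the spine costing at least $2L+6$, item paths being longer than $d_J(a,b)$), which the paper dismisses as following directly from the construction. One small slip is in the $u_i$--$v_k$ case: routes through $(u_i,v_i)$ or $(u_k,v_k)$ are not strictly longer but tie, since $d_H(v_i,v_k)=d_H(u_i,u_k)=2x_i+2x_k$ rather than ``$4x_k$-ish'' (indeed $u_i\to a\to u_k\to v_k$ is the paper's witness path), so the lower bound you need still holds.
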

\begin{proof} The first three equations, namely \Cref{eq:a-b-dist,eq:s-t-dist,eq:ab-qj-dist} follow directly from the construction. Next, for every $z\in J$, there is a shortest path from $u_i$ to $z$ going through $a$ and then from $a$ to $z$ inside $J$. Since $(u_i,a)$ has weight $2x_i$, the total length of the path is $2x_i + d_J(z,a)$ as in \Cref{eq:ui-z-inJ}. The same argument applies to $d_H(v_i, z)$ by symmetry between $a$ and $b$.  

\Cref{eq:ui-vi-dist} follows from the fact that the edge $(u_i,v_i)$ of length $2L+4$ is the shortest path between $u_i$ and $v_i$; any other path must go through $a$ and $b$ and by \Cref{eq:a-b-dist} has length strictly larger than $2L+4$. For \Cref{eq:a-vi-b-u_i-dist}, notice that the shortest path between $a$ and $v_i$ is the subpath $(a,u_i,v_i)$ of the item path, which has length $2L+4+2x_i$; the same bound holds for $d_H(b,u_i)$ for the same reason.  For $d_H(u_i,v_k)$, one shortest path is from $u_i$  to $a$ (of length $2x_i$), to $u_k$ (of length $2x_k$), and finally to $v_k$ (of length $2L+4$), totaling $2L+4+2x_i + 2x_k$, giving \Cref{eq:u_i-v_k-dist}. For   $ d_H(u_i, q_j)$,  the shortest path is from $u_i$ to $a$ (of length $2x_i$), and then $a$ to $q_j$ (of length $L+3$), totaling $L+3+2x_i$ as in \Cref{eq:u_i-q_j-dist}. For \Cref{eq:q_j-q_k-dist}, the shortest path between $q_j$ and $q_k$ is the subpath $Q[q_j,q_k]$ of the spine, which has length exactly $|j-k|(4B+2)$. The last equation follows from the definition of $x_i$ and $L$.
\end{proof}

Our analysis in \Cref{subsec:completeness} and \Cref{subsec:soundness}  is to show:

\begin{lemma}\label{lm:reductio-overall}  The set of items $\{x_1,\ldots, x_{3m}\}$ admits a partition into $m$ triples, each summing up to exactly $B$, if and only if the metric $(V(H),d_H)$ is planar.
\end{lemma}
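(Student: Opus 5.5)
The plan is to prove the two directions separately. For completeness we build an explicit planarization of $H$. For soundness we show that any planar realization $G$ forces every item path to cross the spine inside a single bin edge, and that the crossings use up disjoint pieces of the spine of length about $4x_i$ each.

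\emph{Completeness.} Suppose the bins are $T_1,\ldots,T_m$, each with sum $B$. We start from the planar embedding of $J$ in \Cref{fig:frame-spine-items}(b). For an item $x_i$ in bin $T_j$, we draw the item path $(a,u_i,\cdot,v_i,b)$ so that it crosses the edge $(q_{j-1},q_j)$ once, at a new vertex $z_i$. We subdivide the spine edge accordingly and replace the edge $(u_i,v_i)$ by the two edges $(u_i,z_i)$ and $(z_i,v_i)$, of lengths $L+2+\alpha_i$ and $L+2-\alpha_i$. This keeps $d(u_i,v_i)=2L+4$, and the item paths stay pairwise non-crossing because they all leave $a$ on the spine side in the same cyclic order as their crossing points.

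The free parameters are the positions $p_i$ of the crossing points along the spine and the offsets $\alpha_i$. We choose them so that no new path is shorter than the distances listed in \Cref{obs:distances-H}. The binding constraints are the following.
\begin{itemize}[nosep]
\item Paths from $u_i$ or $v_i$ to the spine or to the frame through $z_i$ must not beat \Cref{eq:u_i-q_j-dist,eq:ui-z-inJ}. This forces $z_i$ to lie at spine-distance at least about $2x_i$ from the nearest of $q_{j-1},q_j$ on each side, up to the offsets $\alpha_i$.
\item Paths $u_i\to z_i\to z_k\to v_k$ must not beat \Cref{eq:u_i-v_k-dist}. This requires $|p_i-p_k| \pm(\alpha_i-\alpha_k)\ge 2x_i+2x_k$.
\end{itemize}
So each item occupies an interval of spine length about $4x_i$. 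Since the three items of a bin have total $4B$ and the bin edge has length $4B+2$, they fit, and the two units of slack handle the $+1$ terms coming from $L+3$ versus $L+2$. The remaining check is that no other distance decreases. This is routine: any new route passes through some $z_i$, and both of its halves are then long.

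\emph{Soundness.} Let $G$ be a planar realization. First we fix shortest paths in $G$ between the terminals of the frame and of the spine. Because the distances in \Cref{eq:a-b-dist,eq:s-t-dist,eq:ab-qj-dist} are tight, $\pi_G(s,t)$ passes through every $q_j$ in order, and the $a$–$b$ geodesics through $o$ and through the $q_j$ behave rigidly. The denseness of the subdivision, with steps of at most $B/2$, pins the frame cycles down. Planarity should then give a closed curve, consisting of the spine $\pi_G(s,t)$ together with the frame path $s$–$o$–$t$, that separates $u_i$ (which is near $a$) from $v_i$ (which is near $b$). Consequently the geodesic $\pi_G(u_i,v_i)$, of length $2L+4$, must meet this curve.

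Meeting the frame part is excluded by distance counting: it would produce an $a$–$b$ or $u_i$–$v_i$ route that is too long or would violate \Cref{eq:ui-z-inJ}. Hence $\pi_G(u_i,v_i)$ meets the spine at some point $z_i$. Applying \Cref{eq:u_i-q_j-dist} to $u_i$ and to $v_i$ at $z_i$ shows that $z_i$ is at spine-distance at least roughly $2x_i$ from every $q_j$. In particular $z_i$ lies strictly inside a single bin edge, which we take as item $i$'s bin.

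For two items $i,k$ crossing the same bin edge, \Cref{eq:u_i-v_k-dist} applied to $u_i\to z_i\to z_k\to v_k$ and to $u_k\to z_k\to z_i\to v_i$ gives $|p_i-p_k|\ge 2x_i+2x_k$. Together with the $2x$-buffers at the two ends of the edge, the items in one bin have $\sum 4x_i\le 4B+2$, so by integrality their sum is at most $B$. Since there are $m$ bins and the total is $mB$, every bin sums to exactly $B$. Finally, $B/4<x_i<B/2$ forces exactly three items per bin.

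\emph{Main obstacle.} The hardest step is the topological one: showing that in an arbitrary realization with Steiner points, $\pi_G(u_i,v_i)$ must genuinely hit the spine geodesic and not slip around it. My first attempt would be a Jordan-curve argument on the cycle formed by $\pi_G(s,t)$ and $\pi_G(s,o)\cup\pi_G(o,t)$. I would use the dense subdivision vertices of the frame and the edges $(a,q_j),(b,q_j)$ to argue that $a$ and $b$ lie in different faces of this cycle. Any crossing of the frame portion would then be ruled out by the slack of $1$ between $L+2$ and $L+3$.
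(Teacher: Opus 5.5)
Your completeness sketch is the paper's construction once you set $\alpha_i=0$. Nonzero offsets only hurt: the $u_i$- and $v_i$-constraints together require $d_Q(z_i,q_j)\ge 2x_i+1+|\alpha_i|$. The metric part of your soundness argument is also fine. The two walk bounds, followed by the per-bin inequality $4\sum x_i+2\le 4B+2$, integrality, and the total $mB$, are a valid alternative to the paper's exact covering of $\widehat{Q}$ by intervals.

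The gap is the step you flag yourself. Nothing in the proposal shows that $\pi_G(u_i,v_i)$ meets the spine geodesic, and without that there is nothing to count. With arbitrary Steiner points, geodesics may share vertices, so a Jordan-curve argument needs four facts:
(i) the geodesics realizing the frame edges are internally vertex-disjoint, so they form a subdivision of the wheel, which is 3-connected and hence has a unique embedding; this is what identifies the faces.
(ii) the spine geodesic meets this subdivided frame only at $s,t$, so it lies in the face bounded by $a,s,b,t$, and $\widehat{Q}\cup\pi(s,o)\cup\pi(o,t)$ is a simple closed curve separating $a$ from $b$.
(iii) the item path $\pi(a,u_i)\circ\pi(u_i,v_i)\circ\pi(v_i,b)$ touches the frame only at $a,b$.
(iv) $\pi(a,u_i)$ and $\pi(v_i,b)$ avoid the spine, so the forced intersection lies on $\pi(u_i,v_i)$.
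Phrases like ``denseness pins the frame cycles down'' and ``$u_i$ is near $a$'' do not establish any of these.

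The missing tool is the uncrossing (anti-Monge) lemma: if $d(x,u)+d(v,y)>d(x,y)+d(u,v)$ or $d(x,v)+d(u,y)>d(x,y)+d(u,v)$, then every shortest $x$--$y$ path is vertex-disjoint from every shortest $u$--$v$ path. Otherwise a common vertex would let you exchange tails. Each of (i)--(iv) is an instance of it:
\begin{itemize}
\item For (i), two non-adjacent frame subdivision edges satisfy the strict inequality because $H$ has girth at least $L+2>2B$. Two adjacent edges $(u,v),(u,y)$ cannot share a vertex other than $u$, because $v$--$u$--$y$ is a shortest path.
\item For (ii), use the same girth argument in $F+(s,t)$, whose girth is $3L+6>2L+B$.
\item For (iii), $d(u_i,x)+d(v_i,y)\ge 4x_i+d(a,b)-d(x,y)>2L+4+d(x,y)$ for every subdivision edge $(x,y)$. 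An analogous computation handles $\pi(a,u_i)$ and $\pi(v_i,b)$.
\item For (iv), $d(s,u_i)+d(t,a)=2L+6+2x_i>L+2x_i=d(s,t)+d(a,u_i)$.
\end{itemize}

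Two of your suggested ingredients are misplaced. What keeps $\pi(u_i,v_i)$ off the frame is $x_i>B/4$ together with subdivision steps of length at most $B/2$. It is not the unit slack between $L+2$ and $L+3$; that slack only produces the $+1$ in the $2x_i+1$ buffer. Likewise, the edges $(a,q_j),(b,q_j)$ are not needed for the separation at all. Their role is to fix $d(u_i,q_j)=L+3+2x_i$.
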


Given \Cref{lm:reductio-overall}, we can now prove our main theorem, restated below.

\ETHhardnessPlanar*
\begin{proof}
    By \Cref{obs:distances-H}, $|V(H)| = O(m)$. Thus, by \Cref{lm:reductio-overall}, an algorithm with running time $2^{o(n)}$ for metric planarity testing will imply an algorithm with running time $2^{o(|V(H)|)} = 2^{o(m)}$ for \ThreePartition~ when $B  = m^{O(1)}$, which contradicts~\Cref{thm:ETH-3Partition}.
\end{proof}

\subsection{Completeness}\label{subsec:completeness}

\begin{figure}[!htb]
  \centering
  \begingroup
  \definecolor{PMRblue}{RGB}{65,156,220}
  \definecolor{PMRpurple}{RGB}{143,50,139}
  \definecolor{PMRgray}{RGB}{235,235,235}

  \resizebox{.75\linewidth}{!}{%
  \begin{tikzpicture}[
    x=1cm,y=1cm,font=\normalsize,
    line cap=round,line join=round,
    frame edge/.style={
      draw=black,line width=.9pt
    },
    major vertex/.style={
      circle,draw=black!80,fill=PMRgray,
      line width=.9pt,minimum size=17pt,inner sep=0pt
    },
    terminal/.style={
      circle,fill=black,draw=none,
      minimum size=6pt,inner sep=0pt
    },
    item vertex/.style={
      circle,draw=PMRpurple,fill=white,
      line width=1.1pt,minimum size=8pt,inner sep=0pt
    },
    crossing vertex/.style={
      rectangle,draw=black,fill=white,
      line width=.8pt,minimum size=5.5pt,inner sep=0pt
    },
    spine/.style={
      draw=PMRblue,line width=1.5pt
    },
    item edge/.style={
      draw=PMRpurple,line width=.9pt,
      dash pattern=on 3pt off 2.7pt
    },
    vertex label/.style={
      inner sep=1pt
    },
    length label/.style={
      fill=white,inner sep=1pt,font=\small
    },
    spine label/.style={
      PMRblue,length label,font=\scriptsize
    }
  ]


  \coordinate (a) at (-2.55,0);
  \coordinate (b) at ( 2.55,0);
  \coordinate (s) at (0, 3);
  \coordinate (t) at (0,-3);
  \coordinate (o) at (0, 0);

  \coordinate (ui) at (-2.90,1.90);
  \coordinate (vi) at (-3.65,4.05);

  \coordinate (uk) at (-3.60,1.45);
  \coordinate (vk) at (-5.10,3.35);

  \coordinate (ul) at (-4.30,.72);
  \coordinate (vl) at (-6.25,2.66);

  \path
    (s)
    .. controls (-8.1,4.2) and (-8.1,-4.2) ..
    (t)
    coordinate[pos=.10] (qprev)
    coordinate[pos=.16] (zi)
    coordinate[pos=.24] (zk)
    coordinate[pos=.32] (zl)
    coordinate[pos=.40] (qj);


  \foreach \u/\v in {a/s,s/b,b/t,t/a,o/a,o/b,o/s,o/t}{
    \draw[frame edge] (\u)--(\v);

    \foreach \r in {.25,.50,.75}{
      \node[terminal]
        at ($(\u)!\r!(\v)$) {};
    }
  }

  \draw[spine]
    (s)
    .. controls (-8.1,4.2) and (-8.1,-4.2) ..
    (t);

  \foreach \r in {.55,.73,.89}{
    \path
      (s)
      .. controls (-8.1,4.2) and (-8.1,-4.2) ..
      (t)
      node[pos=\r,terminal] {};
  }


  \foreach \u/\z/\v in {ui/zi/vi,uk/zk/vk,ul/zl/vl}{
    \draw[item edge] (a)--(\u);
    \draw[item edge] (\u)--(\z)--(\v);
  }

  \draw[item edge]
    (vi)
    .. controls (-1.2,5.7) and (4.3,4.5) ..
    (b);

  \draw[item edge]
    (vk)
    .. controls (-4.3,6.4) and (5.0,5.9) ..
    (b);

  \draw[item edge]
    (vl)
    .. controls (-6.0,7.8) and (6.5,6.9) ..
    (b);


  \foreach \v in {a,b,s,t,o}{
    \node[major vertex] at (\v) {};
  }

  \foreach \v in {qprev,qj}{
    \node[terminal] at (\v) {};
  }

  \foreach \v in {ui,vi,uk,vk,ul,vl}{
    \node[item vertex] at (\v) {};
  }

  \foreach \v in {zi,zk,zl}{
    \node[crossing vertex] at (\v) {};
  }

  %

  \node[vertex label] at (-2.83,-.46) {$a$};
  \node[vertex label] at ( 2.86,-.43) {$b$};
  \node[vertex label] at ( .05, 3.53) {$s$};
  \node[vertex label] at ( .02,-3.53) {$t$};
  \node[vertex label] at ( .34, .42) {$o$};

  \node[vertex label]
    at (-1.86,3.48) {$q_{j-1}$};

  \node[vertex label]
    at (-6.25,.77) {$q_j$};

  \node[vertex label] at (-2.50,1.61) {$u_i$};
  \node[vertex label] at (-3.67,4.41) {$v_i$};

  \node[vertex label] at (-3.22,1.43) {$u_k$};
  \node[vertex label] at (-5.48,3.59) {$v_k$};

  \node[vertex label] at (-3.95,.85) {$u_\ell$};
  \node[vertex label] at (-6.66,2.94) {$v_\ell$};

  \node[vertex label] at (-2.94,2.78) {$z_i$};
  \node[vertex label] at (-4.00,2.27) {$z_k$};
  \node[vertex label] at (-4.86,1.55) {$z_\ell$};

  \node[spine label,rotate=9]
    at (-2.75,3.35) {\tiny{$2x_i+1$}};

  \node[spine label,rotate=23]
    at (-3.95,3.01) {\tiny{$2x_k+2x_i$}};

  \node[spine label,rotate=41]
    at (-5.12,2.38) {\tiny{$2x_\ell+2x_k$}};

  \node[spine label,rotate=58]
    at (-5.87,1.47) {\tiny{$2x_\ell+1$}};

  \node[PMRpurple,length label]
    at (-.20,4.39) {$2x_i$};

  \node[PMRpurple,length label]
    at (-.40,5.28) {$2x_k$};

  \node[PMRpurple,length label]
    at (-1.00,6.20) {$2x_\ell$};

  \node[PMRpurple,length label]
    at (-2.52,2.33) {$L+2$};

  \node[PMRpurple,length label]
    at (-3.06,3.78) {$L+2$};

  \pgfresetboundingbox
  \path[use as bounding box]
    (-7,-3.8) rectangle (3.6,6.6);

  \end{tikzpicture}%
  }
  \endgroup

  \caption{%
    Planarizing the three item paths for
    $T_j=\{x_i,x_k,x_\ell\}$. The white squares $z_i,z_k,z_\ell$, the crossing vertices, occur in this order
    on $Q[q_{j-1},q_j]$; the blue labels give the four
    consecutive spine distances, whose sum is $4B+2$. For $r\in\{i,k,\ell\}$, the purple path is $(a,u_r,z_r,v_r,b)$: its two outer edges have length
    $2x_r$, and its two middle edges have length $L+2$.
  }
  \label{fig:planarizing-triple}
\end{figure}

Suppose that the $3m$ items can be partitioned into a set  $\cT = \{T_1,\ldots, T_m\}$ of $m$ triples, each summing to $B$. We construct a planarization $G$ of $H$ as follows:
\begin{itemize}
    \item First, $G$ contains the planar drawing of $J$ as shown in \Cref{fig:frame-spine-items}(b).
    \item Next, we draw item paths. See \Cref{fig:planarizing-triple}. For each $j\in [m]$, we draw the three item paths for the three items in each triple $T_j$ that cross the spine edge $(q_{j-1},q_{j})$. Specifically, let $T_{j} = \{x_i,x_k,x_{\ell}\}$ for some $1\leq i < k < \ell \leq 3m$. We pick three vertices $z_i,z_k,z_{\ell}$ in that order along the spine segment $Q[q_{j-1}, q_{j}]$ such that:  
    \begin{equation}\label{eq:zike-pick}
        \begin{split}
            d_{Q}(z_i,q_{j-1}) &= 2x_i+1\\
        d_{Q}(z_k,z_{i}) &= 2x_i + 2x_k\\
        d_{Q}(z_{k},z_{\ell}) &= 2x_k + 2x_{\ell}\\
        d_{Q}(z_{\ell},q_{j}) &= 2x_{\ell}+1
        \end{split}
    \end{equation}
    This is possible since $x_i + x_k + x_{\ell} = B$ and $w(Q[q_{j-1}, q_{j}]) = 4B+2$. Then we planarize the three item paths  $(a, u_i, v_i,b)$, $(a, u_k, v_k,b)$, and $(a, u_\ell, v_\ell,b)$ by drawing them through $z_{i}, z_{k}, z_{\ell}$, respectively. Specifically, we draw $(a,u_i)$ inside the face of $J$ that contains the spine and $a$ on the boundary, then draw $(u_i,v_i)$ crossing $Q[q_{j-1},q_{j}]$ at $z_i$ and finally draw $(v_i,b)$ on the infinite face of $J$. The weights of the two edges $(u_i,z_i), (z_i,v_i)$ are the same, and set to be $L+2$, which sums to the weight of the edge $(u_i, v_i)$. We call $\{z_1,\ldots, z_{3m}\}$ the \EMPH{crossing vertices}.
\end{itemize}

Finally, we argue that the planarization step does not change the distances between vertices of $H$.  The proof is mostly case analysis, relying on facts from \Cref{obs:distances-H}.  

\begin{restatable}{lemma}{DistancePreservation}\label{lm:dist-preserve}
For every $x,y\in V(H)$, $d_G(x,y) = d_H(x,y)$.
\end{restatable}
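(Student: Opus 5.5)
We prove the two inequalities $d_G(x,y)\geq d_H(x,y)$ and $d_G(x,y)\le d_H(x,y)$ separately.

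\textbf{Upper bound.} This direction is immediate. Every edge of $H$ is either an edge of $G$ with the same weight, or an edge $(u_i,v_i)$ that $G$ replaces by the path $(u_i,z_i,v_i)$ of total weight $(L+2)+(L+2)=2L+4$. Hence every path in $H$ maps to a path in $G$ of the same length, and $d_G\le d_H$ on $V(H)$.

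\textbf{Lower bound: reduction to a few shortcut types.} The only new feature of $G$ is that the crossing vertex $z_i$ lies both on the spine and on the middle of item path $i$. Any path in $G$ between vertices of $H$ that avoids all crossing vertices, or passes through some $z_i$ while staying on its item path or staying on the spine, maps back to a path of $H$ of equal length. So it suffices to bound the potential shortcuts, namely subpaths that enter some $z_i$ from item path $i$ and leave along the spine, or vice versa. Such a subpath splices a segment $u_i\to z_i$ or $v_i\to z_i$, of length $L+2$, onto a spine segment.

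The cleanest way to handle all of these at once is to exhibit a potential. For every $y\in V(H)$, set $\phi_y(z_i)$ to the $H$-distance that $z_i$ ``should'' have. Concretely, $z_i$ should behave like a point at distance $L+2$ from both $u_i$ and $v_i$ and at distance $d_Q(z_i,q_{j-1})$ plus the appropriate term from the spine vertices. We then verify that $\phi_y$ extended by $d_H(\cdot,y)$ on $V(H)$ is $1$-Lipschitz across every edge of $G$. By the standard argument, this gives $d_G(x,y)\ge d_H(x,y)$.

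In practice I would organize the check by cases on the pair of endpoints of the shortcut:
\begin{enumerate}
\item $u_i$ or $v_i$ to a spine vertex or a frame vertex via $z_i$. This costs at least $L+2+d_Q(z_i,q)$, which must be compared with $L+3+2x_i$ from \Cref{eq:u_i-q_j-dist} and \Cref{eq:ui-z-inJ}. Here the choice $d_Q(z_i,q_{j-1})\ge 2x_i+1$ and the neighbor gaps $2x_i+2x_k$ in \Cref{eq:zike-pick} give exactly the needed $+1$ and $+2x_i$.
\item Item vertex to item vertex via two crossing vertices $z_i,z_k$ on the spine. The path $u_i\to z_i\to z_k\to v_k$ costs $(L+2)+(2x_i+2x_k)+(L+2)=2L+4+2x_i+2x_k$ when $z_i,z_k$ are adjacent. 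This is tight, but not shorter than $d_H(u_i,v_k)$ in \Cref{eq:u_i-v_k-dist}. Non-adjacent pairs are only longer. The analogous paths $u_i\to u_k$ and $v_i\to v_k$ must be compared with $2x_i+2x_k$ via $a$ or $b$, and are far longer since $L\gg B$.
\item Paths $a$--$v_i$ and $b$--$u_i$ through other crossings. These are again at least $2L+4+2x_i$ by the same tight inequality.
\item Spine/frame pairs using a detour through item paths. Such a detour costs at least $2(L+2)$ per item path used, which exceeds $L\ge d_H$ between any two spine vertices (\Cref{eq:q_j-q_k-dist}). For frame vertices, $a\to u_i\to z_i$ costs $2x_i+L+2$, which must be at least $d_J(a,z)$ to nearby spine vertices. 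This holds because $d_H(a,q_j)=L+3$ and $d_Q(z_i,\{q_{j-1},q_j\})\ge 2x_i+1$, so the detour never beats going directly via $q_{j-1}$ or $q_j$.
\end{enumerate}

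The main obstacle is cases (1)--(2): the inequalities are tight, so the exact offsets chosen in \Cref{eq:zike-pick} (the $+1$ at the bin ends and the sums $2x_i+2x_k$ between consecutive crossings) must be tracked carefully. I would handle these first, by computing $d_G(z_i,\cdot)$ explicitly for all targets, and then derive all other cases from these values together with \Cref{obs:distances-H}.
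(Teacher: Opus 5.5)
Your plan is sound, and the inequalities you isolate are exactly the paper's. Your case (1) is its $(u_i,q_j)$ bullet: $L+2+d_Q(z_i,q)\ge L+3+2x_i$ because of the $+1$ offsets. Your case (2) is its $(u_i,v_k)$ bullet, using the gap $2x_i+2x_k$ between consecutive crossings. Your cases (3)--(4) correspond to its $(u_i,b)$ bullet and to its final step, which splits a shortest path at an item vertex or otherwise contracts the crossings back into $J$.

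The genuine difference is your organizing device, a $1$-Lipschitz potential. As written, however, it is never defined, and ``computing $d_G(z_i,\cdot)$ explicitly'' would be circular, since that is essentially the content of the lemma. Take $z_i$ on $Q[q_{j-1},q_j]$ and set $\phi_y=d_H(\cdot,y)$ on $V(H)$. A choice that works is
\[
\phi_y(z_i)=\min\bigl\{d_Q(z_i,q_{j-1})+d_H(q_{j-1},y),\ d_Q(z_i,q_j)+d_H(q_j,y),\ L+2+d_H(u_i,y),\ L+2+d_H(v_i,y)\bigr\}.
\]
Edges of $G$ between vertices of $H$ are edges of $H$, so only the edges at crossing vertices need checking. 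These checks use only:
\begin{itemize}
\item $d_H(u_i,q)=d_H(v_i,q)=L+3+2x_i\le L+2+d_Q(z_i,q)$ for $q\in\{q_{j-1},q_j\}$;
\item $d_H(u_i,v_i)=2L+4$ and $d_H(q_{j-1},q_j)=4B+2$;
\item for a spine edge $z_iz_k$ between consecutive crossings, $d_H(u_i,u_k)=d_H(v_i,v_k)=2x_i+2x_k$.
\end{itemize}

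This buys rigor cheaply. Paths that chain several detours, which the paper handles somewhat informally (e.g.\ ``the new shortest path must go through some spine vertex''), are covered by edge-local checks, and your endpoint-based cases (3)--(4) become unnecessary. The paper's route, in return, makes visible which detours are tight.

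One small fix to your upper bound: the spine edges $(q_{j-1},q_j)$ of $H$ are also subdivided in $G$. You should note that the subdivided segment still has length $4B+2$ by the choice of the crossing vertices.
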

\begin{proof}
   Observe that $d_G(x,y) \leq d_H(x,y)$, and that only the planarization step can potentially shortcut distances in $H$. Note that the shortest paths between vertices of the frame $F$ and vertices in $V(H)\setminus V(F)$ must still pass through one of $a,b,s,t$, and therefore, by the triangle inequality, it suffices to focus our argument on pairs that are not in $(V(F)\setminus\{a,b,s,t\})\times (V(H)\setminus V(F))$, which includes pairs of vertices within the frame $F$, and within the spine and item paths.

    The most representative case is when either $x$ or $y$ is $u_i$ for some $i\in [3m]$.  For $(x,y)$ in the set $\{(u_i,a), (u_i,u_k), (u_i,v_i)\}$, the crossing vertices do not help shorten the distances. Three types of distances that can be shortcut are $(u_i,v_k)$ for some $k\in [3m]$, or $(u_i,q_j)$ for some $j\in [0,m]$ (including the case $q_0 = s$ and $q_m = t$), or $(u_i,b)$:
    \begin{itemize}
        \item \textbf{For  $(u_i,v_k)$} with $i\not= k$, there is a new path from $u_i$ to $v_k$ going through crossing vertices, namely $$u_i\rightarrow z_i\rightarrow Q[z_i,z_k]\rightarrow v_k.$$ 
   However, since $Q[z_i,z_k] \geq 2x_i + 2x_k$ by the choice of the crossing vertices, this path has length at least $2L+4 + 2x_i+2x_k =d_H(u_i,v_k)$ in \Cref{eq:u_i-v_k-dist}. Thus, such paths do not give distances shorter than $d_H(u_i,v_k)$.
   
   \item  \textbf{For  $(u_i,q_j)$,}  there is a new shortest path from $u_i$ to $q_j$ going through $z_i$. This path has length at least $L+2 + d_G(z_i, q_j)\geq L+2+2x_i + 1 = L+3 + 2x_i$ by the choice of the crossing points. But $d_H(u_i,q_j) =  L+3 + 2x_i$ by \Cref{eq:u_i-q_j-dist}, so the distance does not change. 
   \item  \textbf{For  $(u_i,b)$,} the new shortest path must go through some spine vertex $q_j$ via $z_i$, but as shown in the previous case, $d_G(u_i,q_j) = d_H(u_i,q_j)$ and so the new shortest path will not be shorter.
    \end{itemize}

    Due to the symmetry between $\{u_i\}_{i\in [3m]}$ and $\{v_i\}_{i\in [3m]}$, the same argument excludes all pairs that have one endpoint in $\{v_i\}_{i\in [3m]}$.
    
    Now if $\pi_G(x,y)$ contains a vertex $v\in \{u_i,v_i\}_{i\in [3m]}$, then:
    \begin{equation*}
        d_G(x,y) = d_G(x,v) + d_G(v,y) = d_H(x,v)+d_H(v,y) \geq d_H(x,y)~.
    \end{equation*}
   Since $d_G(x,y)\leq d_H(x,y)$, it must be that $d_G(x,y) = d_H(x,y)$.
    
    If $\pi_G(x,y)$ contains no such vertex, then it lies entirely in the subdivision of $J$ obtained by inserting the crossing vertices along the spine. Contracting the crossing vertices gives an $x$-to-$y$ path in  $J$ of the same length.
\end{proof}

\subsection{Soundness}\label{subsec:soundness}

Here we are given a planar graph $\widehat{G}$ with edge-weight function $\hat{w}$ that realizes $(V(H),d_H)$. We call vertices of $H$ in $\widehat{G}$ \EMPH{terminals}.  We assume that $\widehat{G}$ contains a minimal number of vertices, and specifically, $\widehat{G}$ does not contain any non-terminal vertex of degree 2 since we can always contract such a vertex. Our goal is to extract a solution to the \ThreePartition~ problem from $\widehat{G}$. At a very high level, the strategy is to show that $\widehat{G}$ is ``close'' to a planarization of $H$. 

Let  $\EMPH{\text{$\widehat{F}$}} = \cup_{(u,v)\in E(F)}\pi_{\widehat{G}}(u,v)$ be the copy of the frame $F$ in $\widehat{G}$. Let $\EMPH{\text{$\widehat{Q}$}} = \cup_{(u,v)\in E(Q)}\pi_{\widehat{G}}(u,v)$ be the copy of the spine $Q$ in $\widehat{G}$. For every $i\in [3m]$, let $\EMPH{\text{$\widehat{P}_i$}}= \pi_{\widehat{G}}(a,u_i)\circ \pi_{\widehat{G}}(u_i,v_i)\circ \pi_{\hat{G}}(v_i,b)$, which is the copy of the item path $(a,u_i,v_i,b)$.  The following lemma formalizes the intuition that $\widehat{G}$ is close to a planarization of $H$. Specifically:

\begin{lemma}\label{lm:planarization-similar} The following claims hold:
\begin{enumerate}
    \item \textbf{Frame does not contain extra terminals:} $V(\widehat{F})\cap V(H)\setminus V(F) = \varnothing$. 
    \item  \textbf{Paths of disjoint frame edges are (internally) disjoint:} For any two distinct edges $(a,b)$ and $(x,y)$ in $E(F)$, $\pi_{\widehat{G}}(a,b)\cap \pi_{\widehat{G}}(x,y) = \{a,b\}\cap \{x,y\}$.
    \item \textbf{Spine is embedded outside the frame:}  $V(\widehat{Q})\cap V(\widehat{F}) = \{s,t\}$.
    \item \textbf{Item paths avoid the frame:} for every $i\in [3m]$,  $\widehat{P}_i \cap V(\widehat{F}) = \{a,b\}$.    
    \item  \textbf{Item paths are simple:}  for every $i\in [3m]$, $\widehat{P}_i$ is a simple path.
\end{enumerate}
\end{lemma}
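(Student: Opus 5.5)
The plan is to prove all five claims by distance arguments in $\widehat{G}$, using only that $\widehat{G}$ realizes $d_H$ on terminals and the values collected in \Cref{obs:distances-H}. Throughout, any terminal $z$ lying on a shortest path $\pi_{\widehat{G}}(x,y)$ must satisfy the tight triangle inequality $d_H(x,z)+d_H(z,y)=d_H(x,y)$. Any common vertex $w$ of two shortest paths $\pi_{\widehat{G}}(x,y)$ and $\pi_{\widehat{G}}(x',y')$ yields the four-point inequality
\[
d_H(x,x')+d_H(y,y')\le d_H(x,w)+d_H(w,x')+d_H(y,w)+d_H(w,y')=d_H(x,y)+d_H(x',y'),
\]
together with the analogous one for the pairing $(x,y'),(x',y)$. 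The whole proof consists of checking that these tight or four-point inequalities fail for the relevant pairs, given the integrality of weights and $x_i\ll L$.

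For Claim~1, take a frame edge $(u,v)$, so that $\pi_{\widehat{G}}(u,v)$ has length $\le B/2$ in the subdivided frame. If a terminal $z\notin V(F)$ lay on it, we would need $d_H(u,z)+d_H(z,v)=d_H(u,v)\le B/2$. Every non-frame terminal is a spine vertex $q_j$, $u_i$, or $v_i$. By \Cref{eq:ui-z-inJ,eq:ab-qj-dist} and the spine distances, each such terminal is far from all frame vertices, except that $q_j$ may be close to frame vertices near $s$ or $t$. For that case I would use the subdivision rule that short leftover edges are not incident to $a,b$, together with the fact that $Q$ attaches only at $s,t$; then $d_H(u,q_j)+d_H(q_j,v)>d_H(u,v)$, because a shortest path into $Q$ leaves $F$ only through $s$ or $t$. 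Claim~4 for the inner edge $\pi(u_i,v_i)$ and Claim~3 are argued in the same way: a frame vertex $f$ on $\pi_{\widehat{G}}(u_i,v_i)$ would force $2x_i+d_J(f,a)+2x_i+d_J(f,b)=2L+4$, which is impossible since $d_J(f,a)+d_J(f,b)\ge 2L+4$. The only exception is a vertex $f$ on $\pi(a,b)$ inside $F$, meaning on the $a$-$o$-$b$ route, and then the sum exceeds $2L+4$ by $4x_i>0$. For the outer edges $\pi(a,u_i)$ of length $2x_i<B/2$, I would note that a vertex of $\widehat{F}$ at distance $<B/2$ from $a$ lies within a single subdivision edge at $a$; the four-point inequality with that frame edge, combined with $d_H(u_i,\cdot)=2x_i+d_J(\cdot,a)$, then shows the intersection can only be $a$.

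Claim~2 is the main obstacle, because interior non-terminal Steiner vertices can be shared in a planar realization even when no terminals are. I would apply the four-point inequality to a common vertex $w$ of the realized paths of two frame edges. Each frame edge here means an original wheel edge refined into its subdivision pieces, and the claim is about the pieces. Since all subdivision edges are short and the frame is a subdivided wheel whose distances are realized by $F$ itself (the spine only changes distances near $s,t$, which I would handle explicitly), the distance between endpoints of disjoint pieces exceeds the sum of the piece lengths unless they share an endpoint. Adjacent pieces sharing an endpoint $p$ are handled by tightness: if $w\ne p$ were common, then $\pi(p,\cdot)$ overlap gives $d(x,y)<d(x,p)+d(p,y)$, contradicting that $p$ lies on a shortest path of $F$, which it does because every subdivided frame path is a shortest path. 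The degree-3 hubs require care near $o$ and on the cycle; there the strict inequalities follow from $d(o,a)=L+2<L+3$ and the integrality of lengths.

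Finally, Claim~5 follows from the pieces above. The three subpaths of $\widehat{P}_i$ are shortest paths, so each is simple, and a repeated vertex $w$ would lie on two of them. If it lies on $\pi(a,u_i)$ and $\pi(v_i,b)$, then $d(a,b)\le d(a,w)+d(w,b)\le 2x_i+2x_i<2L+4$, a contradiction. If it lies on $\pi(a,u_i)$ and $\pi(u_i,v_i)$ with $w\ne u_i$, then $d(a,v_i)<2x_i+2L+4$, contradicting \Cref{eq:a-vi-b-u_i-dist}. The remaining case is symmetric.
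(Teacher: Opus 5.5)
\paragraph{Gap in Claims 3 and 4 (spine and middle segment).} For the spine and for the middle segment $\pi_{\widehat{G}}(u_i,v_i)$, you only rule out a \emph{terminal} $f\in V(F)$ lying on the path. Your identity uses $d_J(f,a)$, which exists only for $f\in V(J)$. But $V(\widehat{F})$ also contains the Steiner vertices of the realized frame paths $\pi_{\widehat{G}}(x,y)$. As you observe yourself in Claim 2, two shortest paths can share a Steiner vertex even when neither contains a terminal of the other, and the tight triangle inequality says nothing about this. What you need is the four-point (anti-Monge) inequality against every frame edge $(x,y)$.

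For the middle segment this takes one line:
$d_H(u_i,x)+d_H(v_i,y)=4x_i+d_H(a,x)+d_H(b,y)\ge 4x_i+2L+4-d_H(x,y)>d_H(u_i,v_i)+d_H(x,y)$, since $4x_i>B\ge 2d_H(x,y)$.

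For the spine, split the frame edges into two cases.
\begin{itemize}
\item Edges at $s$ or $t$: use strict tightness. A shared vertex $w\neq s$ on $\widehat{Q}$ and $\pi_{\widehat{G}}(s,y)$ gives $d_{\widehat{G}}(t,y)<L+d_H(s,y)=d_H(t,y)$.
\item Every other edge: you must show $d_H(s,x)+d_H(t,y)>L+d_H(x,y)$. The paper gets this uniformly from the girth $3L+6$ of $K=F\cup\{(s,t)\}$. Otherwise $\pi_K(s,x)\cup\pi_K(t,y)\cup\{(s,t),(x,y)\}$ would have weight at most $2L+B$, hence be a tree. In a tree the cross sum exceeds $w(s,t)+w(x,y)$ by twice the connecting path.
\end{itemize}

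\paragraph{Gap in the outer segments.} You use $2x_i<B/2$, but $x_i>B/4$ gives $2x_i\in(B/2,B)$. So $\pi_{\widehat{G}}(a,u_i)$ is long enough to reach vertices of $\widehat{F}$ on pieces not incident to $a$; such vertices can be at distance only $\lfloor B/2\rfloor<2x_i$ from $a$. Your localization to a single piece at $a$ therefore fails.

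The fix is again per edge:
\begin{itemize}
\item For every frame edge $(x,y)$ with $a\notin\{x,y\}$: $d_H(u_i,x)+d_H(a,y)=2x_i+d_H(a,x)+d_H(a,y)>2x_i+d_H(x,y)=d_H(a,u_i)+d_H(x,y)$. Anti-Monge then gives disjointness.
\item For edges $(a,y)$: strict tightness gives $d_{\widehat{G}}(u_i,y)<2x_i+d_H(a,y)=d_H(u_i,y)$ for any shared $w\neq a$.
\end{itemize}

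\paragraph{What is sound.} Your direct proof of Claim 1 works and is simpler than the paper's derivation from Claims 3--4. Every non-frame terminal is at distance $>B/2$ from all of $V(F)$. Interior $q_j$ are even at distance $\ge 4B+2$, so your caveat about $q_j$ near $s,t$ is unnecessary.

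Your triangle-inequality proof of Claim 5 is a valid alternative to the paper's anti-Monge argument.

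In Claim 2, note that what exceeds the sum of the piece lengths is the cross-pairing sum $d_H(u,x)+d_H(v,y)$, not the distance between nearest endpoints. The paper's girth argument ($T$ has weight $\le 2B<L+2$, so it is a tree) makes this rigorous without the case analysis you defer.
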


 A key idea in the proof is the following anti-Monge inequality: it forces two paths to be disjoint if their terminal distances satisfy one of the strict inequalities.

\begin{wrapfigure}{r}{0.5\textwidth}
  \centering
  \begingroup
  \definecolor{PMRblue}{RGB}{34,102,138}
  \definecolor{PMRorange}{RGB}{181,82,45}
  \definecolor{PMRdark}{RGB}{39,55,67}

  \resizebox{\linewidth}{!}{%
  \begin{tikzpicture}[
    x=1cm,y=1cm,font=\normalsize,
    line cap=round,line join=round,
    terminal/.style={
      circle,draw=none,fill=PMRdark,
      minimum size=5.5pt,inner sep=0pt
    },
    intersection/.style={
      circle,draw=PMRdark,fill=white,
      line width=.9pt,minimum size=6.5pt,inner sep=0pt
    },
    distance label/.style={
      fill=white,inner sep=1.5pt
    }
  ]

    \coordinate (x) at (0, .95);
    \coordinate (u) at (0,-.95);
    \coordinate (v) at (6.8, .95);
    \coordinate (y) at (6.8,-.95);
    \coordinate (z) at (3.4,0);

    \draw[PMRblue,line width=1.2pt]
      (x)
      .. controls (2.6,.95) and (4.2,-.95) ..
      (y);

    \draw[PMRorange,line width=1.2pt]
      (u)
      .. controls (2.6,-.95) and (4.2,.95) ..
      (v);

    \foreach \p in {x,u,v,y}{
      \node[terminal] at (\p) {};
    }
    \node[intersection] at (z) {};

    \node at (-.32, .97) {$x$};
    \node at (-.32,-.95) {$u$};
    \node at (7.12, .97) {$v$};
    \node at (7.12,-.95) {$y$};
    \node at (3.50, .36) {$z$};

    \node[PMRblue,distance label]
      at (1.45,.91) {$d_{\widehat{G}}(x,z)$};

    \node[PMRblue,distance label]
      at (5.35,-.91) {$d_{\widehat{G}}(y,z)$};

    \node[PMRorange,distance label]
      at (1.45,-.91) {$d_{\widehat{G}}(u,z)$};

    \node[PMRorange,distance label]
      at (5.35,.91) {$d_{\widehat{G}}(v,z)$};

  \end{tikzpicture}%
  }
  \endgroup

  \caption{%
    $\pi_{\widehat{G}}(x,y)$ and $\pi_{\widehat{G}}(u,v)$
    intersect at a vertex $z$.
  }
  \label{fig:path-switching}
\end{wrapfigure}

\begin{lemma}[Anti-Monge Inequality]\label{lm:anti-monge} Let $x,y,u,v$ be four vertices in  $\widehat{G}$. 
If \EMPH{at least one} of the following inequalities is satisfied 
\begin{equation}
    \begin{split}
        d_{\widehat{G}}(x,u) + d_{\widehat{G}}(v,y) &>    d_{\widehat{G}}(x,y) +   d_{\widehat{G}}(u,v)\\
     d_{\widehat{G}}(x,v) + d_{\widehat{G}}(u,y) &>    d_{\widehat{G}}(x,y) +   d_{\widehat{G}}(u,v)~,
    \end{split}
\end{equation}
then $\pi_{\widehat{G}}(x,y)$ and $\pi_{\widehat{G}}(u,v)$ are vertex-disjoint.
\end{lemma}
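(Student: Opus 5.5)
We prove the contrapositive: assume $\pi_{\widehat{G}}(x,y)$ and $\pi_{\widehat{G}}(u,v)$ share a vertex $z$, and derive that neither strict inequality can hold. The plan is the standard path-switching (uncrossing) argument illustrated in \Cref{fig:path-switching}.

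Since $z$ lies on a shortest $x$-$y$ path, every subpath of it is a shortest path. Hence
\[
d_{\widehat{G}}(x,y) = d_{\widehat{G}}(x,z) + d_{\widehat{G}}(z,y).
\]
The same reasoning applied to the shortest $u$-$v$ path gives
\[
d_{\widehat{G}}(u,v) = d_{\widehat{G}}(u,z) + d_{\widehat{G}}(z,v).
\]
Adding the two identities yields
\[
d_{\widehat{G}}(x,y) + d_{\widehat{G}}(u,v) = d_{\widehat{G}}(x,z)+d_{\widehat{G}}(z,y)+d_{\widehat{G}}(u,z)+d_{\widehat{G}}(z,v).
\]

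Next we regroup the four terms on the right in the two possible ways and apply the triangle inequality through $z$ to each pair.
\begin{itemize}[nosep]
\item Pairing $x$ with $u$ and $v$ with $y$ gives $d_{\widehat{G}}(x,z)+d_{\widehat{G}}(z,u) \ge d_{\widehat{G}}(x,u)$ and $d_{\widehat{G}}(v,z)+d_{\widehat{G}}(z,y)\ge d_{\widehat{G}}(v,y)$. Therefore $d_{\widehat{G}}(x,y)+d_{\widehat{G}}(u,v) \ge d_{\widehat{G}}(x,u)+d_{\widehat{G}}(v,y)$.
\item Pairing $x$ with $v$ and $u$ with $y$ gives, in the same way, $d_{\widehat{G}}(x,y)+d_{\widehat{G}}(u,v) \ge d_{\widehat{G}}(x,v)+d_{\widehat{G}}(u,y)$.
\end{itemize}
Both conclusions contradict the respective strict inequalities in the statement. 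So if either inequality holds, the two paths cannot share a vertex.

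There is no real obstacle here. The only point needing care is that the argument works for whichever fixed shortest paths $\pi_{\widehat{G}}(\cdot,\cdot)$ were chosen, since we used only that they are shortest. Degenerate cases, such as $z$ being an endpoint or terminals coinciding, are covered automatically, because the triangle inequality and the subpath-optimality identity still hold there.
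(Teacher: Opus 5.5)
Your proposal is correct and follows the paper's own argument: you take a shared vertex $z$, split both shortest paths at $z$ using subpath optimality, regroup the four terms, and apply the triangle inequality through $z$ for each pairing. This gives $d_{\widehat{G}}(x,y)+d_{\widehat{G}}(u,v)$ as an upper bound on both cross sums, and the claim follows by contraposition, exactly as in the paper.
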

\begin{proof}
Suppose otherwise, let $z \in\pi_{\widehat{G}}(x,y)\cap \pi_{\widehat{G}}(u,v)$ be a shared vertex; see \Cref{fig:path-switching}.  Then by the triangle inequality:
\begin{equation*}
    \begin{split}
        d_{\widehat{G}}(x,u) + d_{\widehat{G}}(v,y)  &\leq  d_{\widehat{G}}(x,z) +  d_{\widehat{G}}(z,u)  + d_{\widehat{G}}(v,z) + d_{\widehat{G}}(z,y)\\
        &= (d_{\widehat{G}}(x,z) +  d_{\widehat{G}}(z,y)) + (d_{\widehat{G}}(u,z) +  d_{\widehat{G}}(z,v))\\
        &=  d_{\widehat{G}}(x,y) +   d_{\widehat{G}}(u,v)
    \end{split}
\end{equation*}

By the same argument, $d_{\widehat{G}}(x,v) + d_{\widehat{G}}(u,y) \leq  d_{\widehat{G}}(x,y) +   d_{\widehat{G}}(u,v)$. Then the lemma follows by contraposition.
\end{proof}

Now proving  \Cref{lm:planarization-similar} comes down to showing that paths in $\widehat{G}$ realizing edges of $H$ are vertex disjoint, and we can simply check whether their distances satisfy at least one of the inequalities in \Cref{lm:anti-monge}. The proof is somewhat mechanical, and we delay it to the end, in \Cref{subsec:planrization-proof}.

The first two items of \Cref{lm:planarization-similar}, together with the fact that $\widehat{G}$ is minimal, imply that  $\widehat{F}$ is a subdivision of $F$. 

Note that $F$ admits a unique embedding on the sphere---if we contract all degree-2 vertices in $F$, we obtain a 3-connected graph on the vertex set $\{s,a,b,t,o\}$ which therefore has a unique embedding by Whitney's embedding theorem. Thus, $\widehat{F}$ also has a unique embedding on the sphere. By items 3 and 4 in \Cref{lm:planarization-similar}, the spine copy $\widehat{Q}$ and the item paths $\{\widehat{P}_i\}_{i\in [3m]}$ are embedded ``outside'' the frame $\widehat{F}$. 

\begin{figure}[!htb]
\centering
\begingroup
\definecolor{PMRblue}{RGB}{40,145,218}
\definecolor{PMRpurple}{RGB}{143,50,139}
\definecolor{PMRorange}{RGB}{230,158,53}
\definecolor{PMRgray}{RGB}{235,235,235}
\definecolor{PMRgreen}{RGB}{165,244,149}
\definecolor{PMRpink}{RGB}{249,166,175}
\definecolor{PMRred}{RGB}{195,35,45}

\resizebox{\linewidth}{!}{%
\begin{tikzpicture}[
  x=1cm,y=1cm,font=\large,
  line cap=round,line join=round,
  frame edge/.style={
    draw=black,line width=.8pt
  },
  major vertex/.style={
    circle,draw=black!80,fill=PMRgray,
    line width=.8pt,minimum size=14pt,inner sep=0pt
  },
  terminal/.style={
    circle,fill=black,draw=none,
    minimum size=4.7pt,inner sep=0pt
  },
  item vertex/.style={
    circle,draw=PMRpurple,fill=white,
    line width=1pt,minimum size=7pt,inner sep=0pt
  },
  crossing vertex/.style={
    rectangle,draw=black,fill=white,
    line width=.8pt,minimum size=5.2pt,inner sep=0pt
  },
  spine/.style={
    draw=PMRblue,line width=1.2pt
  },
  item path/.style={
    draw=PMRpurple,line width=.8pt,
    dash pattern=on 2.5pt off 2pt
  },
  green walk/.style={
    draw=PMRgreen,line width=9pt
  },
  pink walk/.style={
    draw=PMRpink,line width=9pt
  },
  vertex label/.style={
    inner sep=1pt
  }
]


\newcommand{\PMRsetupWalks}{%
  \coordinate (a) at (-1.7,0);
  \coordinate (b) at ( 1.7,0);
  \coordinate (s) at (0, 2);
  \coordinate (t) at (0,-2);
  \coordinate (o) at (0,0);

  \coordinate (zi) at (-3.30,2.02);
  \coordinate (zk) at (-3.65,-.60);

  \coordinate (qtop) at (-2.55,2.01);
  \coordinate (qj) at (-1.15,1.77);
  \coordinate (qmone) at (-3.65,1.08);
  \coordinate (qmtwo) at (-3.35,.15);

  \coordinate (qlone) at (-3.73,-1.19);
  \coordinate (qltwo) at (-3.02,-1.68);
  \coordinate (qlthree) at (-2.36,-2.27);
  \coordinate (qlfour) at (-1.76,-3.27);
  \coordinate (qlfive) at (-.97,-2.53);
}


\def\PMRziQj{%
  (zi)
  .. controls (-3.04,2.18) and (-2.83,2.15) .. (qtop)
  .. controls (-2.05,1.85) and (-1.78,1.33) .. (qj)
}

\def\PMRmiddleQ{%
  (zi)
  .. controls (-3.70,1.75) and (-3.82,1.39) .. (qmone)
  .. controls (-3.56,.57) and (-3.11,.67) .. (qmtwo)
  .. controls (-3.35,-.17) and (-3.55,-.36) .. (zk)
}

\def\PMRlowerQ{%
  (zk)
  .. controls (-3.75,-.80) and (-3.77,-1.02) .. (qlone)
  .. controls (-3.70,-1.78) and (-3.39,-1.77) .. (qltwo)
  .. controls (-2.26,-1.39) and (-2.09,-1.43) .. (qlthree)
  .. controls (-2.80,-3.31) and (-2.51,-3.38) .. (qlfour)
  .. controls (-1.16,-3.31) and (-1.19,-3.11) .. (qlfive)
  .. controls (-.71,-1.62) and (-.34,-2.81) .. (t)
}

\def\PMRviB{%
  (vi)
  .. controls (-1.90,4.05) and (-1.09,2.62) .. (-.10,3.05)
  .. controls (1.92,3.70) and (2.30,2.69) .. (1.74,1.81)
  .. controls (1.27,1.16) and (2.43,.50) .. (b)
}


\newcommand{\PMRdrawBase}{%
  \foreach \u/\v in {a/s,s/b,b/t,t/a,o/a,o/b,o/s,o/t}{
    \draw[frame edge] (\u)--(\v);

    \foreach \r in {.25,.50,.75}{
      \node[terminal] at ($(\u)!\r!(\v)$) {};
    }
  }

  \draw[spine] \PMRziQj;

  \draw[spine]
    (qj)
    .. controls (-.53,2.40) and (-.48,1.96) ..
    (s);

  \draw[spine] \PMRmiddleQ;
  \draw[spine] \PMRlowerQ;
}

\newcommand{\PMRbaseVertices}{%
  \foreach \v in {qtop,qj,qmone,qmtwo,qlone,qltwo,qlthree,qlfour,qlfive}{
    \node[terminal] at (\v) {};
  }

  \foreach \v in {a,b,s,t}{
    \node[major vertex] at (\v) {};
  }

  \node[major vertex,fill=PMRgray] at (o) {};

  \node[vertex label] at (-1.96,0.35) {$a$};
  \node[vertex label] at ( 1.97,-.35) {$b$};
  \node[vertex label] at ( .02, 2.43) {$s$};
  \node[vertex label] at ( .04,-2.46) {$t$};
  \node[vertex label] at ( .25, .38) {$o$};
  \node[vertex label] at (-1.16,2.13) {$q_j$};

  \node at (1.42,-1.43) {$\widehat{F}$};

  \node[PMRblue]
    at (-.80,-3.25) {$\widehat{Q}$};

  \node[PMRred]
    at (-1.40,-1.78) {$f_{\widehat{Q}}$};
}

\draw[PMRorange!80,line width=.65pt]
  (2.85,-3.70)--(2.85,4.45);


\begin{scope}
  \PMRsetupWalks

  \coordinate (ui) at (-2.80,.97);
  \coordinate (vi) at (-3.00,2.92);

  \def\PMRuiZi{%
    (ui)
    .. controls (-3.20,.61) and (-3.43,.78) .. (-3.36,1.27)
    .. controls (-3.30,1.52) and (-2.90,1.48) .. (zi)
  }

  \def\PMRziVi{%
    (zi)
    .. controls (-3.60,2.27) and (-4.10,2.17) .. (-4.04,2.75)
    .. controls (-4.08,3.24) and (-3.30,3.35) .. (vi)
  }

  \draw[green walk] \PMRuiZi;
  \draw[green walk] \PMRziVi;
  \draw[green walk] \PMRziQj;

  \PMRdrawBase

  \draw[item path]
    (a)
    .. controls (-2.40,-.70) and (-2.95,-.12) .. (-2.40,.47)
    .. controls (-2.01,.92) and (-2.27,1.42) .. (ui);

  \draw[item path] \PMRuiZi;
  \draw[item path] \PMRziVi;
  \draw[item path] \PMRviB;

  \PMRbaseVertices

  \node[item vertex] at (ui) {};
  \node[item vertex] at (vi) {};
  \node[crossing vertex] at (zi) {};

  \node[vertex label]
    at (-2.78,1.35) {$u_i$};

  \node[vertex label]
    at (-3.00,3.34) {$v_i$};

  \node[vertex label]
    at (-3.43,2.45) {$\hat z_i$};

  \node[PMRpurple]
    at (-1.90,2.85) {$\widehat{P}_i$};

  \node[PMRorange]
    at (-4.32,1.67) {$\widehat{W}_i$};

  \node at (-1.05,-3.94) {(a)};
\end{scope}


\begin{scope}[xshift=9.7cm]
  \PMRsetupWalks

  \coordinate (ui) at (-2.74,1.12);
  \coordinate (vi) at (-3.04,2.93);
  \coordinate (uk) at (-3.00,-.43);
  \coordinate (vk) at (-4.58,2.97);

  \def\PMRuiZi{%
    (ui)
    .. controls (-2.72,1.66) and (-2.88,2.09) ..
    (zi)
  }

  \def\PMRziVi{%
    (zi)
    .. controls (-4.47,1.68) and (-4.61,2.28) .. (-4.10,2.79)
    .. controls (-3.69,3.18) and (-3.35,3.12) .. (vi)
  }

  \def\PMRukZk{%
    (uk)
    .. controls (-3.24,-.69) and (-3.50,-.79) ..
    (zk)
  }

  \def\PMRzkVk{%
    (zk)
    .. controls (-4.06,-.38) and (-3.57,.71) .. (-4.35,.59)
    .. controls (-6.52,.29) and (-6.34,2.37) .. (vk)
  }

  \begin{scope}[xshift=2.8pt]
    \draw[green walk] \PMRmiddleQ;
  \end{scope}

  \begin{scope}[xshift=-2.8pt]
    \draw[pink walk] \PMRmiddleQ;
  \end{scope}

  \draw[green walk] \PMRuiZi;
  \draw[green walk] \PMRzkVk;

  \draw[pink walk] \PMRukZk;
  \draw[pink walk] \PMRziVi;

  \PMRdrawBase

  \draw[item path]
    (a)
    .. controls (-2.55,-.71) and (-2.97,.23) .. (-2.28,.54)
    .. controls (-1.75,1.12) and (-2.15,1.57) .. (ui);

  \draw[item path]
    (a)
    .. controls (-2.08,-.67) and (-2.44,-.74) ..
    (uk);

  \draw[item path] \PMRuiZi;
  \draw[item path] \PMRziVi;
  \draw[item path] \PMRukZk;
  \draw[item path] \PMRzkVk;

  \draw[item path] \PMRviB;

  \draw[item path]
    (vk)
    .. controls (-1.46,6.01) and (4.35,3.83) .. (2.38,1.56)
    .. controls (2.07,.97) and (2.76,.31) .. (b);

  \PMRbaseVertices

  \foreach \v in {ui,vi,uk,vk}{
    \node[item vertex] at (\v) {};
  }

  \foreach \v in {zi,zk}{
    \node[crossing vertex] at (\v) {};
  }

  \node[vertex label]
    at (-2.54,1.43) {$u_i$};

  \node[vertex label]
    at (-3.04,3.34) {$v_i$};

  \node[vertex label]
    at (-2.82,-.80) {$u_k$};

  \node[vertex label]
    at (-4.65,3.37) {$v_k$};

  \node[vertex label]
    at (-3.23,2.43) {$\hat z_i$};

  \node[vertex label]
    at (-4.13,-.52) {$\hat z_k$};

  \node[PMRorange]
    at (-5.03,.10) {$\widehat{W}_1$};

  \node[PMRorange]
    at (-4.64,1.63) {$\widehat{W}_2$};

  \node at (-1.05,-3.94) {(b)};
\end{scope}

\pgfresetboundingbox
\path[use as bounding box]
  (-4.75,-4.20) rectangle (12.50,4.65);

\end{tikzpicture}%
}
\endgroup

\caption{%
  Walks used to bound distances along $\widehat{Q}$.
  (a) The green highlight shows $\widehat{W}_i$:
  $u_i\to\hat z_i\to q_j\to\hat z_i\to v_i$;
  the spine segment from $\hat z_i$ to $q_j$ is traversed twice.
  (b) Green shows $\widehat{W}_1$ from $u_i$ through
  $\hat z_i,\hat z_k$ to $v_k$; pink shows $\widehat{W}_2$
  from $u_k$ through $\hat z_k,\hat z_i$ to $v_i$. Purple dashed curves depict the item paths.%
}
\label{fig:intersection-walks}
\end{figure}

Let \EMPH{$f_{\widehat{Q}}$} be the face in the embedding of $\widehat{Q}\cup \widehat{F}$ inherited from $\widehat{G}$ containing $a$ and $\widehat{Q}$ on the boundary; see \Cref{fig:intersection-walks}(a). On the plane, by designating the face containing $b$ and $\widehat{Q}$ on the boundary as the infinite face of $\widehat{Q}\cup \widehat{F}$, $f_{\widehat{Q}}$ is finite. Next, we show exactly how each item path $\widehat{P}_i$ embeds on the plane with respect to $\widehat{Q}\cup \widehat{F}$. Specifically, we show that the intersection of $\widehat{P}_i$ and $\widehat{Q}$ must be between $u_i$ and $v_i$, which also means that $u_i$ is embedded strictly inside $f_{\widehat{Q}}$ while $v_i$ is embedded outside.

\begin{lemma}\label{lm:itempath-spine-intersect} For every $i\in [3m]$, $\varnothing\not= \widehat{P}_i\cap \widehat{Q} \subseteq \widehat{P}_i[u_i,v_i]$.
\end{lemma}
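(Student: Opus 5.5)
We prove the two parts separately: that the item path $\widehat{P}_i$ meets the spine copy $\widehat{Q}$ at all, and that it can meet $\widehat{Q}$ only along the middle subpath $\pi_{\widehat{G}}(u_i,v_i)$.

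\textbf{Containment.} We show that the two outer subpaths $\pi_{\widehat{G}}(a,u_i)$ and $\pi_{\widehat{G}}(v_i,b)$ avoid $\widehat{Q}$. Here we invoke the Anti-Monge Inequality (\Cref{lm:anti-monge}) against each spine edge path $\pi_{\widehat{G}}(q_{j-1},q_j)$. For $x=a$, $y=u_i$, $u=q_{j-1}$, $v=q_j$, \Cref{obs:distances-H} gives
\[
d(a,q_{j-1})+d(q_j,u_i)=(L+3)+(L+3+2x_i)=2L+6+2x_i,
\]
whereas
\[
d(a,u_i)+d(q_{j-1},q_j)=2x_i+4B+2.
\]
The strict inequality $2L+6+2x_i>2x_i+4B+2$ holds since $L\gg B$. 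Hence $\pi_{\widehat{G}}(a,u_i)$ is vertex-disjoint from every $\pi_{\widehat{G}}(q_{j-1},q_j)$, and therefore from $\widehat{Q}$. This covers $a$ and $u_i$ too, which is consistent: by \Cref{lm:planarization-similar}(3), $a\notin\widehat{Q}$, and $u_i$ is a terminal at positive distance from the spine. The subpath $\pi_{\widehat{G}}(v_i,b)$ is handled by the symmetric computation with $b$ and $v_i$. Consequently $\widehat{P}_i\cap\widehat{Q}\subseteq \pi_{\widehat{G}}(u_i,v_i)=\widehat{P}_i[u_i,v_i]$, using \Cref{lm:planarization-similar}(5) to identify this subpath.

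\textbf{Nonemptiness.} This is the main obstacle, and we plan a topological argument based on the unique embedding of $\widehat{F}$.
\begin{enumerate}
\item By \Cref{lm:planarization-similar}(3,4), the paths $\widehat{Q}$ and $\widehat{P}_i$ lie outside $\widehat{F}$ except at $s,t$ and $a,b$ respectively. Now $\widehat{Q}$ runs from $s$ to $t$ in the region of the sphere bounded by the frame cycle $a$–$s$–$b$–$t$ opposite to $o$. This region is a single face of $\widehat{F}$, because $\widehat{F}$ is a subdivided wheel whose outer cycle is $asbt$.
\item $\widehat{Q}$ splits this face into two regions. The vertex $a$ lies on the boundary of one of them, namely $f_{\widehat{Q}}$, and $b$ lies on the boundary of the other. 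We must also check that $a$ and $b$ are not both on the same side. This follows because, on the outer cycle, $a$ lies on the $s$–$t$ arc through $a$ and $b$ lies on the other arc.
\item $\widehat{P}_i$ is a curve from $a$ to $b$ that avoids $\widehat{F}$ apart from its endpoints. By the Jordan curve theorem it must cross the closed curve formed by $\widehat{Q}$ together with the frame arc $s$–$a$–$t$. Since $\widehat{P}_i$ avoids the frame arc, it must meet $\widehat{Q}$.
\end{enumerate}
Because $\widehat{G}$ is a plane graph, two paths can cross only at a shared vertex, so this intersection is a vertex of $\widehat{G}$.

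The delicate points are the following. First, the interior of $\widehat{P}_i$ must lie entirely in this one face of $\widehat{F}$, which holds because the path is connected and avoids $\widehat{F}$ internally. Second, the two frame arcs from $s$ to $t$ must separate $a$ from $b$ within that face, which follows from the cyclic order $a,s,b,t$ on the outer cycle.
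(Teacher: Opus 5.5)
Your proof is correct and follows essentially the paper's route. For containment you use the anti-Monge inequality against each spine-edge path $\pi_{\widehat{G}}(q_{j-1},q_j)$, with slack $2L+4>4B$. The paper instead applies it once to $\widehat{Q}$ viewed as a shortest $s$--$t$ path, using $d(s,u_i)+d(t,a)=2L+6+2x_i>L+2x_i$. For nonemptiness, both arguments use Jordan-curve separation in the unique embedding of $\widehat{F}$.

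One step needs a wording fix. Your closing curve $\widehat{Q}\cup\pi_{\widehat{F}}(s,a)\cup\pi_{\widehat{F}}(a,t)$ passes through $a$, and $\widehat{P}_i$ starts there. So the Jordan curve theorem alone does not force $\widehat{P}_i$ to cross it, and ``$\widehat{P}_i$ avoids the frame arc'' is false at $a$. There are two ways to repair this:
\begin{itemize}
\item Rely on your face-splitting argument. The interior of $\widehat{P}_i$ lies in the face of $\widehat{F}$ bounded by the cycle through $a,s,b,t$, and $a$ and $b$ each lie on the boundary of only one of the two halves cut out by $\widehat{Q}$.
\item Close $\widehat{Q}$ through $o$, as the paper does. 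The curve $\widehat{Q}\circ\pi_{\widehat{F}}(t,o)\circ\pi_{\widehat{F}}(o,s)$ contains neither $a$ nor $b$ and separates them.
\end{itemize}
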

\begin{proof}
    Observe that $d_H(s,u_i) = d_H(s,a) + d_H(a,u_i) = L+3 + 2x_i$, and $d_H(t,a) = L+3$. Observe that:
    \begin{equation*}
        \begin{split}
             d_{\widehat{G}}(s,u_i) +  d_{\widehat{G}}(t,a)  &= L+3 + 2x_i + L+3 =  2L + 6 +2x_i\\
             &> L + 2x_i = d_H(s,t) + d_H(a,u_i)\\
             &= d_{\widehat{G}}(s,t) + d_{\widehat{G}}(a,u_i)
        \end{split}
    \end{equation*}
    Thus, $\widehat{Q} = \pi_{\widehat{G}}(s,t)$ and $\widehat{P}_i[a,u_i] = \pi_{\widehat{G}}(a,u_i)$ are vertex disjoint by \Cref{lm:anti-monge}. By the same argument, $\hat{Q}$ and $\widehat{P}_i[b,v_i]$ are vertex disjoint. Therefore, $\widehat{P}_i\cap \widehat{Q} \subseteq \widehat{P}_i[u_i,v_i]$.  We note that $a$ and $b$ are separated by the Jordan curve $\widehat{Q}\circ \pi_{\widehat{F}}(t,o)\circ \pi_{\widehat{F}}(o,s)$ and hence  $\widehat{P}_i\cap \widehat{Q} \not=\varnothing$.
\end{proof}

The next lemma shows that the intersections of the item paths and the spine $\{\widehat{P}_i\cap \widehat{Q}\}$ are far from each other, and far from $V(Q)$.

\begin{lemma}\label{lm:intersection-far} For every $i\in [3m]$, pick (an arbitrary) $\hat{z}_i \in \widehat{P}_i\cap \widehat{Q}$. Then:
\begin{align}
    \hat{w}(\widehat{Q}[\hat{z}_i,q_j]) &\geq 2x_i + 1  &\qquad \forall j\in [0,m] \label{eq:qj-zi-dist}\\
    \hat{w}(\widehat{Q}[\hat{z}_i, \hat{z}_k])  &\geq 2x_i + 2x_k  &\qquad \forall i\not=k \in [3m]  \label{eq:zi-zk-dist}
\end{align}
\end{lemma}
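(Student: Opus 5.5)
The plan follows the walks drawn in \Cref{fig:intersection-walks}: build a short walk between two terminals through $\hat z_i$ (and $\hat z_k$), then compare its length with the true distance from \Cref{obs:distances-H}. Since $\widehat{Q}=\pi_{\widehat{G}}(s,t)$ is a shortest path and $q_j\in\widehat{Q}$, for any two vertices $p,p'$ on $\widehat{Q}$ we have $\hat w(\widehat{Q}[p,p'])=d_{\widehat{G}}(p,p')$. By \Cref{lm:itempath-spine-intersect}, $\hat z_i$ lies on $\widehat{P}_i[u_i,v_i]=\pi_{\widehat{G}}(u_i,v_i)$. Hence
\[
d_{\widehat{G}}(u_i,\hat z_i)+d_{\widehat{G}}(\hat z_i,v_i)=d_H(u_i,v_i)=2L+4 .
\]

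\emph{Proof of \Cref{eq:qj-zi-dist}.} Consider the walk $\widehat{W}_i$: $u_i\to\hat z_i\to q_j\to\hat z_i\to v_i$. We get
\[
d_{\widehat{G}}(u_i,q_j)+d_{\widehat{G}}(q_j,v_i)\le d_{\widehat{G}}(u_i,\hat z_i)+d_{\widehat{G}}(\hat z_i,v_i)+2\,d_{\widehat{G}}(\hat z_i,q_j)=2L+4+2\,\hat w(\widehat{Q}[\hat z_i,q_j]).
\]
By \Cref{eq:u_i-q_j-dist}, the left side equals $2(L+3+2x_i)$. Therefore $2\,\hat w(\widehat{Q}[\hat z_i,q_j])\ge 4x_i+2$, which is the claimed bound $\hat w(\widehat{Q}[\hat z_i,q_j])\ge 2x_i+1$.

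\emph{Proof of \Cref{eq:zi-zk-dist}.} Use the two walks $\widehat{W}_1$: $u_i\to\hat z_i\to\hat z_k\to v_k$ and $\widehat{W}_2$: $u_k\to\hat z_k\to\hat z_i\to v_i$. Adding their lengths gives
\[
d_H(u_i,v_k)+d_H(u_k,v_i)\le \bigl(d_{\widehat{G}}(u_i,\hat z_i)+d_{\widehat{G}}(\hat z_i,v_i)\bigr)+\bigl(d_{\widehat{G}}(u_k,\hat z_k)+d_{\widehat{G}}(\hat z_k,v_k)\bigr)+2\,d_{\widehat{G}}(\hat z_i,\hat z_k).
\]
The two bracketed terms each equal $2L+4$, so the right side is $2(2L+4)+2\,\hat w(\widehat{Q}[\hat z_i,\hat z_k])$. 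By \Cref{eq:u_i-v_k-dist}, the left side equals $2(2L+4+2x_i+2x_k)$. This yields $\hat w(\widehat{Q}[\hat z_i,\hat z_k])\ge 2x_i+2x_k$.

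\emph{Main obstacle.} The only delicate point is justifying that $\hat z_i$ lies on the $u_i$--$v_i$ shortest-path portion of $\widehat{P}_i$, so that the two distances through $\hat z_i$ sum exactly to $2L+4$. This is precisely what \Cref{lm:itempath-spine-intersect} provides. Once that is in place, the rest is direct arithmetic with the distances from \Cref{obs:distances-H}.
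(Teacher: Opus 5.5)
Your proof is correct and is essentially the paper's argument: the same walk $\widehat{W}_i$ through $\hat z_i$ and $q_j$, and the same pair of walks $\widehat{W}_1,\widehat{W}_2$, compared against the distances in \Cref{obs:distances-H}. You also state explicitly that \Cref{lm:itempath-spine-intersect} places $\hat z_i$ on $\pi_{\widehat{G}}(u_i,v_i)$, a step the paper uses only implicitly.
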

\begin{proof}
    Let $\widehat{W}_i$ be the walk from $u_i$ to $q_j$ and $q_j$ to $v_i$ going through $\hat{z}_i$; see \Cref{fig:intersection-walks}(a). Then we have:
    \begin{equation*}
       \begin{split}
            \hat{w}(\widehat{W}_i) &\geq  d_H(u_i,q_j) +  d_H(q_j,v_i) \\
        &= 2(L+3+2x_i) \qquad \text{(by \Cref{eq:u_i-q_j-dist})}
       \end{split}
    \end{equation*}
However, $ \hat{w}(\widehat{W}_i)  =  d_H(u_i,v_i)  + 2\hat{w}(\widehat{Q}[\hat{z}_i,q_j])$, giving:
\begin{equation*}
    \begin{split}
    \hat{w}(\widehat{Q}[\hat{z}_i,q_j]) &\geq (2(L+3+2x_i) -  d_H(u_i,v_i))/2 \\
        &\geq (2(L+3+2x_i) -  (2L+4))/2 \qquad \text{(by \Cref{eq:ui-vi-dist})}\\
        &= 2x_i+1
    \end{split}
\end{equation*}
as claimed. 

For the second inequality, we take the walk $\widehat{W}_{1}$ from $u_i$ to $v_k$ going from $u_i$ to $\hat{z}_i$, then to $\hat{z}_k$ (along $\widehat{Q}$), and finally to $v_k$, and the walk $\widehat{W}_{2}$ from $u_k$ to $v_i$ going from $u_k$ to $\hat{z}_k$, then to $\hat{z}_i$ (along $\widehat{Q}$), and finally to $v_i$. See \Cref{fig:intersection-walks}(b). Observe that:
\begin{equation*}
   \begin{split}
        \hat{w}(\widehat{W}_1) + \hat{w}(\widehat{W}_2) &= d_H(u_i,v_i) +  d_H(u_k,v_k) + 2 \hat{w}(\widehat{Q}[\hat{z}_i, \hat{z}_k])\\
        &= 2(2L+4) +   2 \hat{w}(\widehat{Q}[\hat{z}_i, \hat{z}_k])\qquad \text{(by \Cref{eq:ui-vi-dist})}
   \end{split}
\end{equation*}
However, by definition of $\widehat{W}_1$ and $\widehat{W}_2$,
\begin{equation*}
\begin{split}
     \hat{w}(\widehat{W}_1) + \hat{w}(\widehat{W}_2) &\geq d_H(u_i,v_k) +    d_H(u_k,v_i) \\
     &= 2(2L+4+2x_i + 2x_k) \qquad \text{(by \Cref{eq:u_i-v_k-dist})}
\end{split}
\end{equation*}
Thus, we have:
\begin{equation*}
    2\hat{w}(\widehat{Q}[\hat{z}_i, \hat{z}_k]) \geq 2(2L+4+2x_i + 2x_k) - 2(2L+4) = 4x_i + 4x_k~,
\end{equation*}
giving $\hat{w}(\widehat{Q}[\hat{z}_i, \hat{z}_k])\geq 2x_i + 2x_k$ as claimed.
\end{proof}

We emphasize that \Cref{eq:qj-zi-dist} holds for every $q_j$, not just vertices closest to $\hat{z}_i$.

\paragraph{Extracting a solution to \ThreePartition:}  For every $i \in [3m]$, let $I_i$ be the closed interval of $\widehat{Q}$ centered at $\hat{z}_i$ and length $4x_i$. \Cref{eq:qj-zi-dist} in \Cref{lm:intersection-far} means that $I_i$ does not contain any spine vertex $q_j$. For every $j\in [m-1]$, let $I'_j$ be the interval of $\widehat{Q}$ centered at $q_j$ and length $2$. Let $I'_0$ and $I'_{m}$ be the prefix and suffix of $\widehat{Q}$ of length $1$; they contain $q_0 = s$ and $q_{m}=t$ as endpoints, respectively. By \Cref{lm:intersection-far}, all the intervals are internally disjoint. Thus:
\begin{equation}
   \begin{split}
        m(4B+2) &= \hat{w}(\widehat{Q})  \\
        &\geq \sum_{i\in [3m]} |I_i| + \sum_{j\in [0,m]} |I'_j|\\
        &= 4(\sum_{i}x_i)  + 2m = 4mB + 2m = m(4B+2)
   \end{split}
\end{equation}
Thus, $\widehat{Q}$ is completely covered by these intervals. 

For every $j\in [m]$, note that $\hat{w}(\widehat{Q}[q_{j-1},q_j]) = 4B+2$. Since $x_{i} < B/2$ for every $i\in [3m]$, $\widehat{Q}[q_{j-1},q_j]$ must contain at least 3 intervals in $\{I_i\}_{i\in [3m]}$, as otherwise, the total length of 2 intervals is less than $4B$. However, since there are exactly $m$ such $\widehat{Q}[q_{j-1},q_j]$, each  $\widehat{Q}[q_{j-1},q_j]$ must contain exactly 3 intervals in $\{I_i\}_{i\in [3m]}$, say $I_{i}, I_{k}, I_{\ell}$. Then it must be that $$4(x_i + x_k + x_{\ell}) + 2 = \hat{w}(\widehat{Q}[q_{j-1},q_j]) = 4B+2,$$ giving $x_i+x_k+x_{\ell} = B$, a solution of \ThreePartition.

\subsection[Proof of the planarization lemma]{Proof of \Cref{lm:planarization-similar}}\label{subsec:planrization-proof}

 Item 1 of \Cref{lm:planarization-similar} follows from items 3, 4, and 5; therefore, we focus on proving items 2-5.  As mentioned above, the key idea of the proof is the anti-Monge inequality in \Cref{lm:anti-monge}. 
 
\paragraph{2) Paths of disjoint frame edges are disjoint.}    For any two distinct edges $(u,v)$ and $(x,y)$ in $E(F)$, we want to show that $\pi_{\widehat{G}}(u,v)\cap \pi_{\widehat{G}}(x,y) = \{u,v\}\cap \{x,y\}$. Consider the case where the two edges share the same endpoint, say $x = u$. In $H$, the shortest path between $v$ and $y$ consists of only two edges $(v,u=x), (x,y)$. If $\pi_{\widehat{G}}(u,v)$
  and $\pi_{\widehat{G}}(x=u,y)$ intersect at any vertex $z\not= u$, then $\hat{w}(\pi_{\widehat{G}}(x=u,z)) > 0$ and hence $$d_{H}(v,y) = d_{\widehat{G}}(v,y) < d_{\widehat{G}}(v,u) + d_{\widehat{G}}(x,y) = d_{H}(v,u) + d_{H}(x,y)~,$$ a contradiction.

  Next, we consider the case where $(u,v)$ and $(x,y)$ do not share any endpoint.  If $d_H(u,x) + d_H(v,y) > d_H(u,v) + d_H(x,y)$, then  $d_{\widehat{G}}(u,x) + d_{\widehat{G}}(v,y) > d_{\widehat{G}}(u,v) + d_{\widehat{G}}(x,y)$, and by the anti-Monge inequality in \Cref{lm:anti-monge}, $\pi_{\widehat{G}}(u,v)$ and $\pi_{\widehat{G}}(x,y)$ are vertex-disjoint. Otherwise,  $d_H(u,x) + d_H(v,y) \leq d_H(u,v) + d_H(x,y)$. Consider $T = \pi_H(u,x)\cup \pi_H(y,v)\cup \{(u,v), (x,y)\}$ which has total edge weight at most $2(d_H(u,v) + d_H(x,y)) \leq 2(\floor{B/2} + \floor{B/2}) \leq 2B$. However, the (weighted) girth of $H$ is at least $L+2 > 2B$; thus $T$ contains no cycle; it is a tree. Let $W$ be the shortest path in $T$ connecting two edges $(u,v)$ and $(x,y)$. Note that  $w(W) > 0$ since the two edges do not share any endpoint.  Then:
  \begin{equation*}
     \begin{split}
          d_H(u,x) + d_H(v,y) &=   d_T(u,x) + d_T(v,y) \\
          &= w(u,v)+ w(x,y) + 2w(W) \\
          &>  w(u,v)+ w(x,y) = d_H(u,v) + d_H(x,y) 
     \end{split}
  \end{equation*}
  contradicting the assumption that $d_H(u,x) + d_H(v,y) \leq d_H(u,v) + d_H(x,y)$.

\paragraph {3) Spine is embedded outside the frame.} Our goal is to show that $V(\widehat{Q})\cap V(\widehat{F}) = \{s,t\}$. Let $K = F\cup \{(s,t)\}$ be the graph obtained from $F$ by adding an edge $(s,t)$ of weight exactly $L$. 
Let $(x,y)$ be any edge of $F$. If $s = x$, observe that in $K$, the shortest path from $t$ to $y$ must be $(t,s)\circ (s,y)$. Thus, if  $\widehat{Q}\cap \pi_{\widehat{G}}(s,y)\not=\{s\}$, by the same argument in proving 2), $d_{H}(t,y) = d_{\widehat{G}}(t,y) < d_{H}(t,y)$, we get a contradiction. 

Herein, we assume that $\{x,y\}\cap  \{s,t\} = \varnothing$. The argument is similar to the proof of 2) for the case where two edges share no endpoint. First, we observe that:
\begin{observation}\label{obs:K-girth} $K$ has (weighted) girth $3L+6$.
\end{observation}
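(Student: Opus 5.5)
The plan is a direct enumeration of cycles, after reducing to a small weighted graph.

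\textbf{Reduction to five vertices.} Every subdivision vertex of $F$ has degree $2$ in $K$. So every cycle of $K$ either passes through such a vertex together with its whole subdivided path, or avoids it. Hence cycles of $K$ correspond bijectively, with the same weight, to cycles of the multigraph $K'$ on $\{a,b,s,t,o\}$ obtained by contracting the subdivision paths. By Step~1, the edge weights in $K'$ are:
\begin{itemize}
\item $w(o,a)=w(o,b)=L+2$;
\item $w(a,s)=w(s,b)=w(b,t)=w(t,a)=w(o,s)=w(o,t)=L+3$;
\item $w(s,t)=L$.
\end{itemize}
$K'$ has no parallel edges and $a,b$ are non-adjacent. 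So it suffices to compute the minimum cycle weight in this 9-edge graph.

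\textbf{Cycles with at least four edges.} Every edge has weight at least $L$, and only $(s,t)$ has weight $L$; all others have weight at least $L+2$. A cycle with $\ge 4$ edges therefore has weight at least $L+3(L+2)=4L+6>3L+6$.

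\textbf{Triangles.} Since $a$ and $b$ are non-adjacent, the triangles of $K'$ are $\{o,a,s\},\{o,s,b\},\{o,b,t\},\{o,t,a\},\{s,t,a\},\{s,t,b\},\{s,t,o\}$.
\begin{itemize}
\item The first four each have weight $(L+2)+2(L+3)=3L+8$.
\item Each of the last three contains $(s,t)$ and two edges of weight $L+3$, giving weight $L+2(L+3)=3L+6$.
\end{itemize}

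The minimum over all cycles is therefore exactly $3L+6$, attained for example by the cycle $s\to t\to a\to s$. There is no real obstacle here; the only care needed is the first step, namely noting that degree-2 subdivision vertices cannot create new cycles, so enumerating cycles of the contracted graph is complete.
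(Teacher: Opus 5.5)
Your proof is correct and follows essentially the paper's argument: both view cycles of $K$ as cycles through the branch vertices $\{a,b,s,t,o\}$ and lower-bound their weight by the number and weights of the subdivided paths they use. The only difference is cosmetic. The paper splits on whether the cycle uses $(s,t)$, while you split on cycle length and enumerate the seven triangles, which also explicitly exhibits a cycle of weight exactly $3L+6$.
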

\begin{proof}
    A cycle in $K$ includes either (i) edge $(s,t)$, and at least two paths between $\{a,b,s,t,o\}$, totaling $L+2(L+3) = 3L+6$, or (ii) at least three paths between $\{a,b,s,t,o\}$, totaling at least $2(L+2) +  L+3 = 3L+7$.
\end{proof}
If $d_K(s,x)+d_K(t,y)> d_K(s,t) + d_K(x,y)$, since \(K\) and \(\widehat G\) agree on frame-terminal distances, by the anti-Monge inequality, $\pi_{\widehat{G}}(s,t)\cap \pi_{\widehat{G}}(x,y)=\varnothing$ as claimed. Thus, we assume that $d_K(s,x)+d_K(t,y) \leq d_K(s,t) + d_K(x,y)$; our goal is to derive a contradiction. 

Let $T = \pi_K(s,x)\cup  \pi_K(t,y) \cup \{(s,t),(x,y)\}$. Then the total length of all edges of $T$ is at most $2(d_K(s,t) + d_K(x,y)) \leq 2(L + B/2) = 2L+B$. By \Cref{obs:K-girth}, the girth of $K$ is strictly larger than $2L+B$ and hence $T$ is a tree.  Let $W$ be the shortest path in $T$ connecting two edges  $(s,t)$ and $(x,y)$. Then:
 \begin{equation*}
     \begin{split}
          d_K(s,x) + d_K(t,y) &=   d_T(s,x) + d_T(t,y) \\
          &= d_T(s,t)+ d_T(x,y) + 2w(W) \\
          &>   d_T(s,t)+ d_T(x,y)  = d_K(s,t) + d_K(x,y) 
     \end{split}
  \end{equation*}
a contradiction.  

\paragraph{4) Item paths avoid the frame.} We want to show that for every $i\in [3m]$,  $\widehat{P}_i \cap V(\widehat{F}) = \{a,b\}$. Let $(x,y)$ be any edge of $F$.  

First, we focus on $\widehat{P}_i[a,u_i]$; the same argument applies to  $\widehat{P}_i[v_i,b]$ by the symmetry between $a$ and $b$. If $a\not\in \{x,y\}$, then $d_H(a,x) + d_H(a,y) > d_H(x,y)$. We have:
\begin{equation*}
\begin{split}
    d_H(u_i,x) + d_H(a,y) &= 2x_i + d_J(a,x) + d_H(a,y) \qquad \text{(by \Cref{eq:ui-z-inJ})}\\
    &= 2x_i + d_H(a,x) + d_H(a,y)\\
    &>   2x_i + d_H(x,y) = d_H(a,u_i) + d_H(x,y)
\end{split}
\end{equation*}
Thus, $d_{\widehat{G}}(u_i,x) + d_{\widehat{G}}(a,y) >  d_{\widehat{G}}(a,u_i) + d_{\widehat{G}}(x,y)$ and hence by the anti-Monge inequality, $\widehat{P}_i[a,u_i]\cap \pi_{\widehat{G}}(x,y) = \varnothing$. If $a\in \{x,y\}$ (w.l.o.g. $a = x$), then $d_H(u_i,y) = 2x_i + d_H(a,y)$, and by the same argument in proving 2), an intersection between $\widehat{P}_i[a,u_i]$ and $\pi_{\widehat{G}}(a,y)$ other than $a$ will result in $d_{\widehat{G}}(u_i,y) < \hat{w}(\widehat{P}_i[a,u_i]) + d_H(a,y) = 2x_i +d_H(a,y) = d_H(u_i,y)$, a contradiction. 

Finally, we focus on $\widehat{P}_i[u_i,v_i]$, which has length $2L+4$. Again, the tool here is the anti-Monge inequality, and specifically, we will show that:
\begin{equation*}
    d_H(u_i,x) + d_H(v_i,y) > d_H(u_i,v_i) + d_H(x,y) 
\end{equation*}
Observe by \Cref{eq:ui-z-inJ}, we have:
\begin{equation*}
    \begin{split}
         d_H(u_i,x) + d_H(v_i,y)  &= 2x_i + d_J(a,x) + 2x_i + d_J(b,y)\\
         &= 4x_i + d_H(a,x) + d_H(b,y)\\
         &\geq 4x_i  + d_H(a,b) - d_H(x,y) \qquad \text{(triangle inequality)}\\
         &= 4x_i  +  2L+4  - d_H(x,y)\qquad \text{(by \Cref{eq:a-b-dist})}\\
         &> 4(B/4)+  2L+4  - d_H(x,y) \qquad \text{(by \Cref{eq:xi-B-L})}\\
         &\geq 2L+4   + d_H(x,y) \qquad \text{(since $(x,y)$ is an edge of weight $\leq B/2$)}\\
         &=  d_H(u_i,v_i) + d_H(x,y) 
    \end{split}
\end{equation*}
as claimed.

\paragraph{5) Item paths are simple.} We want to show that for every $i\in [3m]$, $\widehat{P}_i$ is a simple path. Since $\widehat{P}_i[a,v_i]$ and $\widehat{P}_i[b,u_i]$ are shortest paths between their endpoints, there is no self-intersection. The only possibility left is that  $\widehat{P}_i[a,u_i]$  and $\widehat{P}_i[b,v_i]$, each of length exactly $2x_i$, intersect. Then this case can also be eliminated by the anti-Monge inequality, as follows:
\begin{equation*}
\begin{split}
    d_H(a,v_i) + d_H(b,u_i) &=  (2L+4+2x_i) +    (2L+4+2x_i) \qquad \text{(by \Cref{eq:a-vi-b-u_i-dist})}\\
    &> 4x_i = d_H(a,u_i)  + d_H(b,v_i)
\end{split}
\end{equation*}
as desired.

\section{Exact algorithm for Metric Planarity Embedding}

 One can think of a metric $(X,d_X)$ as an (edge-weighted) complete graph $K_{|X|}$ with vertex set $X$. From this point of view, embedding a planar metric can be seen as a two-step process: (i) drawing a complete graph on the plane and (ii) assigning weights to the edges of the drawing.  Step (i) is the key difficulty; implementing step (ii) is rather standard, using linear programming, e.g.,~\cite{ChenTan2025,LeTiernoYang2026}.  Clearly, not all planar drawings of the complete graph $K_{|X|}$ can be assigned weights to realize the metric distances (otherwise, the problem is in P); such a drawing is called \EMPH{infeasible}. Here we seek to enumerate an (exponentially large) collection of all unique planar drawings such that at least one drawing in the collection is feasible. Fortunately, the pair of pioneering papers by Kynčl~\cite{kynvcl2009enumeration,kynvcl2013improved} has laid much groundwork for this step.  A central concept is a \EMPH{simple drawing} of a complete graph $K_n$.
 
\begin{definition}\label{def:feasible-drawing}
A \EMPH{simple drawing} $\Dr$ of the complete graph with vertex set $X$ is a drawing on the plane where: 
\begin{itemize}
     \item between any two pair of vertices $x,y\in X$, there exists a simple curve $\pi_\Dr(x,y)$ joining $x$ to $y$,
     \item any two curves intersect at most once: the intersection point is either a common endpoint or a proper crossing,
     \item no three curves cross at the same point. 
\end{itemize}
\end{definition}

Vertices in $X$ are called \EMPH{terminals} of the drawing $\Dr$.  The  \EMPH{graph of a simple drawing} $\Dr$ is a planar graph $G_{\Dr}$ that has a vertex corresponding to each terminal and each crossing point of any two curves. For each curve joining a pair of terminals, there is an edge in $G_{\Dr}$ between every two consecutive vertices on the curve. For any assignment of weights to the edges of $G_{\Dr}$, and pair of terminals $x,y\in X$, we shall let $w(\pi_{\Dr}(x,y))$ denote the sum of the weights of all the edges on the path $\pi_{\Dr}(x,y)$.

\begin{figure}[!htb]
  \centering
  \begingroup
  \definecolor{PMRblue}{RGB}{40,145,218}
  \definecolor{PMRpurple}{RGB}{143,50,139}
  \definecolor{PMRorange}{RGB}{230,158,53}
  \definecolor{PMRgreen}{RGB}{46,150,100}
  \resizebox{.96\linewidth}{!}{%
  \begin{tikzpicture}[
    x=1cm,y=1cm,font=\normalsize,
    line cap=round,line join=round,
    blue curve/.style={draw=PMRblue,line width=1.6pt,rounded corners=2pt},
    purple curve/.style={draw=PMRpurple,line width=1.1pt,rounded corners=2pt},
    orange curve/.style={draw=PMRorange,line width=1.3pt,rounded corners=2pt},
    green curve/.style={draw=PMRgreen,line width=1.3pt,rounded corners=2pt},
    proper crossing/.style={rectangle,draw=black,fill=white,
      line width=.65pt,minimum size=4.5pt,inner sep=0pt},
    terminal/.style={circle,fill=black,minimum size=6pt,inner sep=0pt},
    disk boundary/.style={draw=black!45,line width=.65pt,dashed},
    panel label/.style={font=\small,align=center}
  ]
    \begin{scope}
      \fill[black!10] (-3.15,-1.25) rectangle (-1.6,1.25);
      \fill[black!10] (1.6,-1.25) rectangle (3.15,1.25);
      \fill[black!3] (0,0) circle[radius=2.2];
      \draw[disk boundary] (0,0) circle[radius=2.2];
      \draw[blue curve] (-3.15,1.05)--(-1.48,1.05)--(-1.12,.35)
        --(-.98,.35)--(-.62,-.35)--(-.48,-.35)--(-.12,-1.05)
        --(3.15,-1.05);
      \draw[purple curve] (-3.15,.35)--(-1.48,.35)--(-1.12,1.05)
        --(.02,1.05)--(.38,.35)--(.52,.35)--(.88,-.35)--(3.15,-.35);
      \draw[orange curve] (-3.15,-.35)--(-.98,-.35)--(-.62,.35)
        --(.02,.35)--(.38,1.05)--(1.02,1.05)--(1.38,.35)--(3.15,.35);
      \draw[green curve] (-3.15,-1.05)--(-.48,-1.05)--(-.12,-.35)
        --(.52,-.35)--(.88,.35)--(1.02,.35)--(1.38,1.05)--(3.15,1.05);
      \foreach \cx/\cy in {-1.3/.7,-.8/0,-.3/-.7,.2/.7,.7/0,1.2/.7}{
        \node[proper crossing] at (\cx,\cy) {};
      }
      \node at (0,2.48) {$B_v$};
      \node[font=\small,black!65] at (-2.85,1.58) {edge strip};
      \node[font=\small,black!65] at (2.85,1.58) {edge strip};
      \node[panel label] at (0,-2.55) {(a) Non-terminal vertex $v\notin X$};
    \end{scope}

    \begin{scope}[xshift=8cm]
      \fill[black!10] (-3.15,-.55) rectangle (-1.6,1.25);
      \fill[black!10] (1.6,-.55) rectangle (3.15,1.25);
      \fill[black!3] (0,0) circle[radius=2.2];
      \draw[disk boundary] (0,0) circle[radius=2.2];
      \draw[blue curve] (-3.15,1.05)--(-1.6,1.05)--(0,0);
      \draw[orange curve] (3.15,1.05)--(1.6,1.05)--(0,0);
      \draw[purple curve] (-3.15,-.35)--(-1.6,-.35)
        --(-.1,.8)--(1.6,.35)--(3.15,.35);
      \draw[green curve] (-3.15,.35)--(-1.6,.35)
        --(.5,-.6)--(1.6,-.35)--(3.15,-.35);
      \foreach \cx/\cy in {-1.02578/.09024,-.61611/.40432,.83996/.55122}{
        \node[proper crossing] at (\cx,\cy) {};
      }
      \node[terminal] at (0,0) {};
      \node[above=5pt] at (0,0) {$x$};
      \node at (0,2.48) {$B_x$};
      \node[font=\small,black!65] at (-2.85,1.58) {edge strip};
      \node[font=\small,black!65] at (2.85,1.58) {edge strip};
      \node[panel label] at (0,-2.55) {(b) Terminal vertex $x\in X$};
    \end{scope}
  \end{tikzpicture}%
  }
  \endgroup
  \caption{Routing strands inside a vertex disk. White squares mark proper crossings: each pair crosses at most once
    inside the disk, and no three curves cross at a point. Strands in each edge strip are disjoint.}
  \label{fig:simple-drawing}
\end{figure}

The following lemma establishes the existence of a simple drawing whose graph preserves the metric distances. The basic idea is to duplicate edges shared by multiple shortest paths and make the paths cross inside small disks around the vertices. Similar routing ideas were also developed previously \cite{CEX14,pupyrev2016edge}, but in different contexts. 

\begin{lemma}\label{lm:simple-drawing}
If $(X,d_X)$ is planar, then there exists a simple
drawing $\Dr$ of $K_{|X|}$ and nonnegative edge weights on its graph
$G_{\Dr}$ such that $d_{G_{\Dr}}(x,y)
    = w(\pi_{\Dr}(x,y))
    = d_X(x,y)$, for every $x,y\in X$.
\end{lemma}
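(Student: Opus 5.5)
Start from a planar embedding $G$ of $(X,d_X)$, fixed with a plane drawing, and choose a shortest path $P_{xy}=\pi_G(x,y)$ for every pair $x,y\in X$. The goal is to turn the family $\{P_{xy}\}$ into a simple drawing of $K_{|X|}$ without increasing any length, and then to weight the resulting graph. Before building anything, make the shortest paths \emph{consistent}: perturb the weights infinitesimally (or break ties lexicographically) so that shortest paths in $G$ are unique. Consistency then says any two chosen paths $P_{xy}$ and $P_{uv}$ intersect in a single (possibly empty) common subpath. This matters because two curves that share two separate stretches would be forced to cross more than once. Degenerate zero-length edges can be contracted, or handled by the perturbation, so we may assume positive weights.

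Next comes the routing step, following \Cref{fig:simple-drawing}. Replace each vertex $v$ of $G$ by a small disk $B_v$ and each edge $e$ by a thin strip. Each path $P_{xy}$ using $e$ gets its own parallel strand inside the strip for $e$; strands within a strip are pairwise disjoint and carry weight $w(e)$ each. Inside a disk $B_v$, strands entering from incident strips are connected by arbitrary non-intersecting-with-themselves curves. For a terminal $v=x$, the strands of paths ending at $x$ terminate at the point $x$ in the centre.

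The crossing pattern is fixed by the rule that two curves cross at most once. Consider curves $\gamma_{xy}$ and $\gamma_{uv}$ whose paths share the common subpath $S$ (possibly a single vertex). Order the strands inside each strip along $S$ consistently, so that the two strands stay parallel along all of $S$. The curves then cross once, inside the disk at one end of $S$, if $P_{uv}$ enters and leaves $S$ on opposite sides of $P_{xy}$ in the rotation at those vertices. If it touches from one side, they do not cross. Realising all these requirements simultaneously inside each disk is the same as drawing a pseudoline-type arrangement. Concretely, inside each disk the strands are chords between boundary points, and two chords cross exactly when their endpoints interleave on the boundary circle. Because we choose the cyclic order of strands within each strip, we can make the interleaving pattern realise exactly one crossing per transversally crossing pair. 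Chords can be drawn as straight segments in general position, so at most pairwise crossings occur, each once inside a disk, and no three chords meet at a point. Curves sharing an endpoint terminal must not cross: at a terminal disk, strands ending at $x$ are radial segments, which meet only at $x$.

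The graph $G_\Dr$ has terminals and crossing points as vertices. Assign each strand segment within a strip weight $w(e)$, and each piece inside a disk weight $0$. Then $w(\pi_\Dr(x,y))=w(P_{xy})=d_X(x,y)$. Also $d_{G_\Dr}(x,y)\ge d_X(x,y)$, since contracting every disk to a point and merging parallel strands maps $G_\Dr$ onto $G$ without changing path lengths. Hence $d_{G_\Dr}(x,y)=w(\pi_\Dr(x,y))=d_X(x,y)$.

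The main obstacle is the global consistency of the strand orderings. One must choose, for each strip, a linear order of the paths through it such that, in every disk, the resulting chord endpoints interleave exactly for the pairs that should cross there, and such that overlapping paths stay parallel along their whole common subpath. I would handle this by fixing a total order from the uniqueness of shortest paths: sort paths in a strip by where they diverge, comparing their continuations via the planar rotation at the divergence vertex, as in the non-crossing shortest path orderings of~\cite{CEX14,pupyrev2016edge}. I would then verify that this order is inherited identically along each common subpath. Since $S$ is a single subpath, the relative order of two strands changes at most once, which gives the ``at most one crossing'' property.
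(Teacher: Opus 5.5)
Your construction is the paper's: vertices become disks and edges become strips, each chosen shortest path gets its own strand, all crossings sit inside vertex disks, disk pieces get weight $0$ and strip pieces get the original lengths, and walks in $G_\Dr$ map back to $G$. Your weighting and distance argument are correct. Your consistency step is a real improvement over the paper's write-up, which takes arbitrary shortest paths. You perturb so that shortest paths are unique, and then any two chosen paths meet in a single subpath. Without this, two shortest paths can meet at $p$ and $q$ through edge-disjoint equal-length detours, with their four edges interleaved in the rotation at both $p$ and $q$. Such paths must cross in both disks however the strands are joined locally.

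The gap is the step you flag as the main obstacle; the paper only asserts it, and your proposed rule does not close it as stated.

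\begin{itemize}
\item \textbf{Which end to use.} For two strands in a strip there are two divergence vertices, the ends $p,q$ of their common subpath $S$. For a transversally crossing pair, the rotations at $p$ and at $q$ prescribe opposite orders. So ``sort by where they diverge'' must say which end counts, and that choice is the whole difficulty.
\item \textbf{End fixed per strip.} Suppose you orient each edge and look forward. Take $S=(p,v,q)$ with both edges oriented towards $v$. A transversal pair gets the order that is right at $q$ on the strip at $p$, and the order that is right at $p$ on the strip at $q$. It then crosses at $p$, at $v$ and at $q$.
\item \textbf{End chosen per pair.} This keeps the order constant along $S$, which is the invariant you actually need. ``Changes at most once'' is not enough, since one internal switch plus a forced end crossing already gives two crossings. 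But the relation on a strip need not be transitive: three strands can give $1<2$ and $2<3$ from their backward ends and $3<1$ from their forward ends.
\item \textbf{Terminal disks.} A straight chord passes the centre on a side fixed by geometry, and this decides which radial strands it crosses. You do not account for this constraint.
\end{itemize}

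Here is one way to finish.

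\begin{enumerate}
\item Hang each terminal on a new leaf via a zero-length pendant edge. Then no chosen path passes through a terminal disk.
\item Take strand orders that minimize the total number of straight-chord crossings.
\item Suppose some pair crosses twice, and take two crossings consecutive along $S$. By consistency, both arcs between them run through the same strips. Exchanging the arcs and smoothing removes two crossings while keeping every curve on its own path. Re-straightening the chords never adds crossings, since a straight chord only realizes the interleaving parity. This gives strand orders with fewer crossings, contradicting minimality.
\item Likewise, a pair with a common endpoint that crosses can exchange its arcs from that endpoint to the first crossing, saving one crossing.
\end{enumerate}

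Hence a minimizer is a simple drawing.
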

\begin{proof}
Let $H$ be a planar embedding of $(X,d_X)$. We construct a simple drawing $\Dr$ of $K_{|X|}$ as follows.  First, we replace each vertex $v\in V(H)$
by a small disk $B_v$, and each edge $e\in E(H)$ by a narrow strip between its
endpoint disks, such that distinct edges have disjoint strips. For every pair of terminals $x,y\in X$, we choose a shortest $x$-$y$ path $\pi_H(x,y)$, and draw a curve $\pi_{\Dr}(x,y)$ alongside the path $\pi_H(x,y)$ such that inside each edge strip, the \EMPH{strands} of distinct curves are disjoint. Thus, if an edge $e$ belongs to $k$ different shortest paths, its strip has $k$ strands. See \Cref{fig:simple-drawing}. We now describe how to construct the crossings inside the vertex disks.

Consider a non-terminal vertex $v\not\in X$. We connect each incoming strand at $v$ to its outgoing strand in such a way that two curves cross at most once, and no three curves cross at the same point; see \Cref{fig:simple-drawing}(a). For a terminal vertex $x\in X$, we create a node of $\Dr$ at the center of the disk $B_x$ and connect all curves that have $x$ as an endpoint to this node. For all other curves (that pass through $x$), by the definition of a simple drawing, we have to reroute them to avoid $x$. We do so by connecting their incoming strand to their outgoing strand while avoiding the center of $B_x$, and ensuring that two curves cross at most once, and no three curves cross at the same point.  See \Cref{fig:simple-drawing}(b). These curves give a simple drawing $\Dr$ of $K_{|X|}$.

Now consider an edge $e$ of $G_{\Dr}$. If the edge $e$ completely lies inside some vertex disk $B_v$, we give it a weight of $0$. Else, the edge $e$ corresponds to a sequence of contiguous strands of some curve $\pi_{\Dr}(x,y)$ between two vertices of $\pi_H(x,y)$. We define the weight of $e$ to be the length of the subpath of $\pi_H(x,y)$ between these vertices. Hence we have that $w(\pi_{\Dr}(x,y))=d_H(\pi_H(x,y))=d_H(x,y)$.
Moreover, any walk from $x$ to $y$ in $G_{\Dr}$ can be traced in $H$ by
concatenating the corresponding subpaths, without changing its length.
Therefore, every $x$-$y$ walk in $G_{\Dr}$ has length at least
$d_H(x,y)=d_X(x,y)$. Therefore, it holds that $d_{G_{\Dr}}(x,y)
    = w(\pi_{\Dr}(x,y))
    = d_X(x,y)$, for every $x,y\in X$, and we are done.
\end{proof}

Due to the above lemma, it suffices to look at simple drawings of the complete graph on $|X|$ vertices. Our strategy would be to enumerate all simple drawings of $K_{|X|}$, and for each such drawing, decide whether a feasible edge-weight assignment exists that realizes all the pairwise metric distances. To this end, we define the notion of the \EMPH{rotation} at a terminal, and the \EMPH{rotation system} of a simple drawing.

\begin{definition}[Rotation and Rotation system]
     Given a simple drawing $\Dr$ of $K_{|X|}$, the \EMPH{rotation} of a terminal $x\in X$ is the clockwise cyclic order of the curves incident to $x$ in the drawing $\Dr$. The \EMPH{rotation system} of $\Dr$ is the set of rotations of all the terminals.
\end{definition}

It is well-known that the rotation system uniquely determines which pairs of curves in a simple drawing cross \cite{pach2006many,kynvcl2009enumeration}. However, it is a priori not clear whether all crossing orders would yield graphs that can realize the metric distances. We shall show in \Cref{lm:triangle-flip} that this is indeed the case, and hence it would suffice to consider any arbitrary simple drawing realizing a particular rotation system. For this, we first define the notion of \EMPH{triangle flips} in a simple drawing. Let $\Dr$ be a simple drawing and consider three curves that pairwise cross each other in $\Dr$. A triangle flip is a local modification in which one of three curves is moved across the crossing of the other two, while keeping the drawing unchanged outside the triangular region bounded by the three crossings (see \Cref{fig:triangle-flip-unweighted}). We shall use the following result of Gioan \cite{Gioan2022}.

\begin{lemma}[Gioan\cite{Gioan2022}]
\label{lem:gioan}
  Any two simple drawings with the same rotation system can be transformed into each other by
a sequence of triangle flips.
\end{lemma}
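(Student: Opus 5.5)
This is Gioan's theorem, which the paper cites rather than proves. The proof I would follow is a topological induction on the number of terminals $n$, carried out on the sphere so that the choice of outer face plays no role.

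\textbf{Setting up the induction.} Let $\Dr_1,\Dr_2$ be two simple drawings of $K_n$ with the same rotation system. By the fact recalled above \cite{pach2006many,kynvcl2009enumeration}, the rotation system determines which pairs of curves cross. So $\Dr_1$ and $\Dr_2$ have the same crossing pairs. They can differ only in the order in which the crossings appear along each curve, and hence in the cell structure.

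Fix a terminal $x$. Deleting $x$ together with its $n-1$ incident curves leaves two simple drawings of $K_{n-1}$ with equal rotation systems. By induction, a sequence of triangle flips turns one into the other. Each such flip involves three curves not incident to $x$. The base case $n\le 4$ is a small finite check: for $K_4$, simple drawings are determined up to homeomorphism by the rotation system.

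\textbf{Lifting flips back to $K_n$.} I would replay each flip of the smaller drawing inside $\Dr_1$. The triangular cell where the flip happens may be cut by curves incident to $x$. Such a curve enters the triangle through one side and leaves through another. Before performing the flip, I would push each such $x$-curve out of the triangle. This is done by a sequence of smaller triangle flips, each using the $x$-curve together with two of the triangle's sides, ordered by an induction on the number of curves inside. The goal is that after replaying all lifted flips, the two drawings agree on $K_{n-1}$ up to homeomorphism.

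\textbf{Fixing the star at $x$.} It then remains to show that two drawings which agree on $K_{n-1}$, and agree on the rotations of all terminals, differ only in how the star at $x$ is routed, and that these routings are connected by triangle flips. Each $x$-curve $\pi(x,y)$ crosses a fixed set of curves, determined by the rotation system. I would straighten the $x$-curves one at a time, using flips that move a crossing past the crossing of two other curves.

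\textbf{Main obstacle.} The hardest step is the lifting in the second step. A triangle may contain $x$ itself, or contain parts of $x$-curves that cannot be pushed out without passing through $x$. For this I would use the sphere freedom and the fact that the crossing pairs are fixed to argue that such triangles can be avoided. The argument is that the minimal triangle formed by three pairwise crossing curves is empty of terminals whenever the rotation system is preserved. If this direct argument fails, I would instead follow Gioan's original route via pseudoline-arrangement-type encodings.
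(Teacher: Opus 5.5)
\paragraph{The gap is in the lifting step.} It is the one you flag yourself, and your proposed way around it is false. A triangular face $T$ of the current drawing of $K_{n-1}$ can contain $x$. For example, place six terminals near the corners of a regular hexagon, perturbed so that the three long diagonals $e_1,e_2,e_3$ bound a tiny triangle $T$ near the center, and put $x$ inside $T$. The straight-line drawing of these seven points is a simple drawing of $K_7$, and $T$ is a face of the $K_6$ subdrawing, since the sides and short diagonals stay far from the center.

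In this situation the flip of $T$ cannot be lifted by any sequence of flips in $K_n$, not even after rerouting the star:
\begin{itemize}
\item A triangle flip never moves a terminal.
\item A flip of any other triple changes the closed curve formed by the segments of $e_1,e_2,e_3$ only inside an empty triangular cell, so $x$ stays on its side of that curve.
\item The order of $e_1\cap e_2$ and $e_1\cap e_3$ along $e_1$ changes only when the triple $\{e_1,e_2,e_3\}$ itself is flipped.
\item That flip needs one side of the closed curve to be free of terminals. Curves can be pushed out of a triangle, but terminals cannot. On the sphere one side contains $x$ and the other contains all remaining terminals.
\end{itemize}
So the inductive hypothesis as you use it, namely \emph{some} flip sequence from $\Dr_1-x$ to $\Dr_2-x$, is too weak. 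Nothing prevents that sequence from flipping a triangle that contains $x$ (for instance, flipping it and later unflipping it), and such a step has no counterpart in $\Dr_1$. Passing to the sphere does not help, nor does taking a ``minimal'' triangle. Deferring to Gioan's encoding is not an argument.

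\paragraph{What works and what is still missing.} The case $x\notin T$ does work as you describe. The $x$-curves pairwise share $x$, so they do not cross, and each meets $T$ in at most one chord cutting off a corner. An innermost chord, together with the two sides at that corner, bounds an empty triangle that can be flipped.

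The star step, however, is only asserted. You would still need to show that $x$ lies in corresponding faces of the common drawing of $K_{n-1}$, and that two routings of the star with the same crossing pairs are flip-equivalent.

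To repair the induction you need a stronger statement that controls which triangles get flipped, e.g.\ a flip sequence in $K_{n-1}$ that never flips a triangle containing $x$. At the very least, this requires showing that the rotation system of $K_n$ forces the orientation of every triple whose triangle contains $x$, and then producing a sequence respecting that. Alternatively, carry out one of the complete proofs in the literature (Gioan's, or that of Arroyo, McQuillan, Richter and Salazar). For comparison, the paper does not prove this lemma at all; it imports it from Gioan.
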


\begin{figure}[!htb]
  \centering
  \begingroup
  \definecolor{flipblue}{RGB}{40,145,218}
  \definecolor{flippurple}{RGB}{143,50,139}
  \definecolor{fliporange}{RGB}{230,158,53}
  \resizebox{.8\linewidth}{!}{%
  \begin{tikzpicture}[
    x=1cm,y=1cm,font=\normalsize,
    line cap=round,line join=round,
    fixed blue/.style={draw=flipblue,line width=1.6pt},
    fixed orange/.style={draw=fliporange,line width=1.3pt},
    moving curve/.style={draw=flippurple,line width=1.1pt,
      dash pattern=on 3pt off 2.5pt},
    crossing/.style={circle,fill=black,inner sep=0pt,minimum size=3.5pt}
  ]
    \begin{scope}
      \fill[black!7] (1.4,-.7)--(2.4,0)--(3.4,-.7)--cycle;
      \draw[fixed blue] (0,-1.68)--(4.8,1.68);
      \draw[fixed orange] (0,1.68)--(4.8,-1.68);
      \draw[moving curve] (0,0)
        .. controls (.4,-.7) and (.8,-.7) .. (1.4,-.7)
        --(3.4,-.7)
        .. controls (4,-.7) and (4.4,-.7) .. (4.8,0);
      \node[crossing] at (1.4,-.7) {};
      \node[crossing] at (3.4,-.7) {};
      \node[crossing,label=above:$q$] at (2.4,0) {};
      \node at (2.4,-2.05) {(a) Before the flip};
    \end{scope}

    \draw[-{Stealth[length=6pt]},black!65,line width=.8pt]
      (5.15,0)--(6.85,0)
      node[midway,above=6pt,font=\small,align=center]
        {move purple curve\\across $q$};

    \begin{scope}[xshift=7.2cm]
      \fill[black!7] (1.4,.7)--(2.4,0)--(3.4,.7)--cycle;
      \draw[fixed blue] (0,-1.68)--(4.8,1.68);
      \draw[fixed orange] (0,1.68)--(4.8,-1.68);
      \draw[moving curve] (0,0)
        .. controls (.4,.7) and (.8,.7) .. (1.4,.7)
        --(3.4,.7)
        .. controls (4,.7) and (4.4,.7) .. (4.8,0);
      \node[crossing] at (1.4,.7) {};
      \node[crossing] at (3.4,.7) {};
      \node[crossing,label=below:$q$] at (2.4,0) {};
      \node at (2.4,-2.05) {(b) After the flip};
    \end{scope}
  \end{tikzpicture}%
  }
  \endgroup
  \caption{A triangle flip. The purple curve moves across
    the crossing $q$ of the two fixed curves. Each pair of curves still
    crosses exactly once, but the order of the crossings along each curve
    reverses.}
  \label{fig:triangle-flip-unweighted}
\end{figure}

The above result concerns transformations between simple drawings with the same rotation system, without imposing metric constraints. The next lemma shows that feasibility of realizing $(X,d_X)$ by non-negative edge weights is also preserved under these transformations.

\begin{lemma}\label{lm:triangle-flip}
Let $\Dr$ and $\Dr'$ be simple drawings of $K_{|X|}$ with the same rotation
system, and $G_{\Dr}$ and $G_{\Dr'}$ their graphs, respectively. 
Then $G_{\Dr}$ admits nonnegative edge lengths such that $d_{G_{\Dr}}(x,y)
    = w(\pi_{\Dr}(x,y))
    = d_X(x,y)$, for all $x,y\in X$,
if and only if $G_{\Dr'}$ admits such edge lengths.
\end{lemma}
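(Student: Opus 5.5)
By \Cref{lem:gioan}, $\Dr'$ is reached from $\Dr$ by a finite sequence of triangle flips. It therefore suffices to show that a single triangle flip preserves the existence of a feasible weighting. Since a flip is an involution, one direction suffices.

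Fix a flip in which curve $\gamma_3$ moves across the crossing $q$ of $\gamma_1$ and $\gamma_2$. Before the flip, let $p_{13}=\gamma_1\cap\gamma_3$ and $p_{23}=\gamma_2\cap\gamma_3$. The triangle is bounded by the segments $\gamma_1[q,p_{13}]$, $\gamma_2[q,p_{23}]$ and $\gamma_3[p_{13},p_{23}]$. By simplicity and the definition of a flip, no other curve enters the triangle. Hence these three segments are single edges of $G_{\Dr}$, say with weights $\alpha=w(q,p_{13})$, $\beta=w(q,p_{23})$ and $\gamma=w(p_{13},p_{23})$. After the flip the new crossings $p'_{13},p'_{23}$ lie on the other side of $q$. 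Each of $\gamma_1,\gamma_2$ still consists of the same outer edges, but the order of the crossings along it is reversed: along $\gamma_1$ the sequence $(\dots,e_{\rm in},q,p_{13},e_{\rm out},\dots)$ becomes $(\dots,e_{\rm in},p'_{13},q,e_{\rm out},\dots)$, and similarly for $\gamma_2$. Along $\gamma_3$, its two outer edges now attach at $p'_{13},p'_{23}$, with the middle edge $p'_{13}p'_{23}$.

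I would define the new weights as follows. Every edge outside the triangle keeps its weight, except the four edges of $\gamma_1,\gamma_2$ incident to the triangle. On $\gamma_3$, give the outer edges the same weights as before and set $w'(p'_{13},p'_{23})=\gamma$. On $\gamma_1$ we must redistribute so that the lengths of $\gamma_1[\text{start},q]$ and $\gamma_1[q,\text{end}]$ in $\Dr'$ equal those in $\Dr$. The natural choice is to shift the weight $\alpha$ across: the edge from the previous vertex to $p'_{13}$ gets weight $w(e_{\rm in})$, the edge $p'_{13}q$ gets weight $0$, and the edge $q$ to the next vertex gets $\alpha+w(e_{\rm out})$. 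A cleaner alternative is to first subdivide by \emph{inserting zero-length slack}. Either way $w'(\pi_{\Dr'}(x,y))=w(\pi_{\Dr}(x,y))$ for the three curves, and the remaining curves are untouched. The point to worry about is whether the three moved curves keep their full lengths. Along $\gamma_3$ the new route uses $p'_{13},p'_{23}$ instead of $p_{13},p_{23}$, and $\gamma_3$'s length is $w(\text{outer})+\gamma$, which is preserved.

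The main obstacle is the shortest-path condition $d_{G_{\Dr'}}\ge d_X$: the new crossings create new junctions, so walks can switch curves at different places. I would handle this by a walk-tracing argument like the one in \Cref{lm:simple-drawing}. Any terminal-to-terminal walk in $G_{\Dr'}$ is mapped to a walk in $G_{\Dr}$ of no greater length, so its length is at least $d_{G_{\Dr}}\ge d_X$. A walk segment inside the new local configuration enters and exits through the six boundary edges. For each pair of boundary edges, I would compare the shortest connection inside the new gadget, as a function of $\alpha,\beta,\gamma$ and the zero weights, with that inside the old one, and check that the old connection is never longer. If that fails for the naive choice of weights, I would instead choose the gadget weights so that old paths are simulated: set $w'(p'_{13},q)=\alpha$ and $w'(p'_{23},q)=\beta$, take $\gamma'=\gamma$, and adjust the outer edges by $\pm\alpha,\pm\beta$. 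Nonnegativity here uses the triangle inequality in the old drawing, $\gamma\le\alpha+\beta$ and $\alpha\le\beta+\gamma$, together with the fact that each curve is a shortest path. Verifying this finite case analysis, with nonnegativity maintained, is the step I expect to require the most care. It may force the slack trick of inserting a zero-length subdivision and arguing via the shortest-path property of $\gamma_1,\gamma_2,\gamma_3$ that any shortcut would already shortcut some $d_X(x,y)$ in $\Dr$.
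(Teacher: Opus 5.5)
\textbf{There is a genuine gap.} You reduce correctly to a single flip via \Cref{lem:gioan}, and you correctly identify the shortest-path condition as the obstacle. But neither concrete weighting you propose works, and the decisive step is left as an unverified ``finite case analysis''.

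\textbf{Your natural choice fails.} It gives the new triangle sides $w'(p'_{13}q)=w'(p'_{23}q)=0$ and $w'(p'_{13}p'_{23})=\gamma$. Take the walk that follows $\gamma_3$ but detours $p'_{13}\to q\to p'_{23}$. Its length is the weight of $\gamma_3$'s outer edges plus $0$, which is strictly less than $d_X$ of $\gamma_3$'s endpoints whenever $\gamma>0$.

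\textbf{Your fallback fails for a structural reason.} The fallback keeps sides $\alpha,\beta,\gamma$ and shifts the outer edges by $\pm\alpha,\pm\beta$. After the flip, each new crossing is incident to two exterior edges that previously ended at two \emph{different} old crossings. (The order along $\gamma_3$ reverses too.) For example, $e_{\rm in}$ of $\gamma_1$, formerly at $q$, and the exterior edge of $\gamma_3$ formerly at $p_{23}$ both end at $p'_{13}$.
\begin{itemize}
\item For your local walk-mapping comparison, the increments on these two edges must sum to at least $\beta$, the old distance from $q$ to $p_{23}$.
\item The same holds at $q$ (edges formerly at $p_{13},p_{23}$, requiring $\gamma$) and at $p'_{23}$ (requiring $\alpha$). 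So the six increments must total at least $\alpha+\beta+\gamma$.
\item Preserving each curve's length with unchanged side lengths forces that curve's two increments to cancel, so the total is $0$.
\item Separately, $w(e_{\rm in})-\alpha$ may be negative. The triangle inequalities among $\alpha,\beta,\gamma$ say nothing about exterior edge weights.
\end{itemize}

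\textbf{The missing idea} is to collapse the new triangle to a point and push the old side lengths outward.
\begin{itemize}
\item Each old side is a subpath of a shortest path, so $\alpha,\beta,\gamma$ satisfy all three triangle inequalities, including $\beta\le\alpha+\gamma$, which you omitted.
\item Hence $\lambda_q=(\alpha+\beta-\gamma)/2$, $\lambda_{13}=(\alpha+\gamma-\beta)/2$ and $\lambda_{23}=(\beta+\gamma-\alpha)/2$ are nonnegative. They satisfy $\lambda_q+\lambda_{13}=\alpha$, $\lambda_q+\lambda_{23}=\beta$ and $\lambda_{13}+\lambda_{23}=\gamma$.
\item Give all three new sides weight $0$. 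Add $\lambda_s$ to each of the six exterior edges, where $s$ is the old crossing that edge was incident to.
\end{itemize}
This weighting has the properties you need:
\begin{itemize}
\item Weights only increase, so nonnegativity is automatic.
\item Each curve gains exactly the side length it lost.
\item Passing through the gadget between exterior edges formerly at $s\neq t$ costs exactly $\lambda_s+\lambda_t$, which is the old shortest connection between $s$ and $t$.
\item Passing between edges formerly at the same $s$ costs $2\lambda_s\ge 0$.
\end{itemize}
This is exactly the local non-decrease your walk-tracing argument requires, and it is the paper's proof.
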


\begin{proof}
Due to \Cref{lem:gioan}, it suffices to consider the case when $\Dr'$ and $\Dr$ are related by a single triangle flip.

Let $p,q,r$ be the three crossings bounding a triangle in the drawing $\Dr$ before the
flip, and let $a,b,c$ denote the weights of the edges joining $pq$, $pr$, $qr$ in the graph $G_{\Dr}$, respectively. Since each curve in $\Dr$ corresponds to a shortest path between two terminals in $G_{\Dr}$, it follows that
\[
    a\leq b+c,\qquad b\leq a+c,\qquad c\leq a+b.
\]
Define the non-negative quantities
\[
    \lambda_p=\frac{a+b-c}{2},\qquad
    \lambda_q=\frac{a+c-b}{2},\qquad
    \lambda_r=\frac{b+c-a}{2}.
\]
and observe that
\begin{equation}
\label{eqn:abc}
    a=\lambda_p+\lambda_q,\qquad
    b=\lambda_p+\lambda_r,\qquad
    c=\lambda_q+\lambda_r.
\end{equation}

In the new drawing $\Dr'$ obtained after the flip, we give the three new triangle sides length
zero. We associate each of the six exterior edges with the crossing \(s\in\{p,q,r\}\) it was incident on before the flip, and increase its length by
\(\lambda_s\), leaving all other edge lengths unchanged; see \Cref{fig:triangle-flip}.

\begin{figure}[!htb]
  \centering
  \begingroup
  \definecolor{PMRblue}{RGB}{40,145,218}
  \definecolor{PMRpurple}{RGB}{143,50,139}
  \definecolor{PMRorange}{RGB}{230,158,53}
  \definecolor{PMRgray}{RGB}{235,235,235}

  \resizebox{.92\linewidth}{!}{%
  \begin{tikzpicture}[
    x=1cm,y=1cm,font=\normalsize,
    line cap=round,line join=round,
    blue path/.style={draw=PMRblue,line width=1.6pt},
    purple path/.style={draw=PMRpurple,line width=1.1pt,
      dash pattern=on 3pt off 2.5pt},
    orange path/.style={draw=PMRorange,line width=1.3pt},
    crossing vertex/.style={rectangle,draw=black,fill=white,
      line width=.8pt,minimum size=5.5pt,inner sep=0pt},
    length label/.style={fill=white,inner sep=1.5pt},
    panel label/.style={font=\small,align=center}
  ]

    \begin{scope}
      \coordinate (p) at (1.4,-.6);
      \coordinate (q) at (2.6,.6);
      \coordinate (r) at (3.8,-.6);
      \fill[PMRgray] (p)--(q)--(r)--cycle;
      \draw[blue path] (0,-1.4)
        .. controls (.65,-1.4) and (1.0,-1.0) .. (p)
        --(q) .. controls (3.4,1.4) and (4.4,1.4) .. (5.2,1.4);
      \draw[purple path] (0,0)
        .. controls (.65,0) and (.9,-.6) .. (p)
        --(r) .. controls (4.3,-.6) and (4.55,0) .. (5.2,0);
      \draw[orange path] (0,1.4)
        .. controls (.8,1.4) and (1.8,1.4) .. (q)
        --(r) .. controls (4.2,-1.0) and (4.55,-1.4) .. (5.2,-1.4);
      \foreach \s in {p,q,r}{\node[crossing vertex] at (\s) {};}
      \node[below=5pt] at (p) {$p$};
      \node[above=5pt] at (q) {$q$};
      \node[below=5pt] at (r) {$r$};
      \node[PMRblue,length label] at (1.83,.19) {$a$};
      \node[PMRpurple,length label] at (2.6,-.82) {$b$};
      \node[PMRorange,length label] at (3.37,.19) {$c$};
      \node[panel label] at (2.6,-2.0) {(a) Before the flip: $G_{\Dr}$};
    \end{scope}

    \draw[-{Stealth[length=6pt]},black!65,line width=.8pt]
      (5.65,0)--(6.85,0)
      node[midway,above=5pt,font=\small,align=center] {triangle\\flip};

    \begin{scope}[xshift=7.5cm]
      \coordinate (rp) at (1.4,.6);
      \coordinate (qp) at (2.6,-.6);
      \coordinate (pp) at (3.8,.6);
      \fill[PMRgray] (rp)--(qp)--(pp)--cycle;
      \draw[blue path] (0,-1.4)
        .. controls (.8,-1.4) and (1.8,-1.4) .. (qp)
        --(pp) .. controls (4.2,1.0) and (4.55,1.4) .. (5.2,1.4);
      \draw[purple path] (0,0)
        .. controls (.65,0) and (.9,.6) .. (rp)
        --(pp) .. controls (4.3,.6) and (4.55,0) .. (5.2,0);
      \draw[orange path] (0,1.4)
        .. controls (.65,1.4) and (1.0,1.0) .. (rp)
        --(qp) .. controls (3.4,-1.4) and (4.4,-1.4) .. (5.2,-1.4);
      \foreach \s in {pp,qp,rp}{\node[crossing vertex] at (\s) {};}
      \node[above=5pt] at (rp) {$r'$};
      \node[below=5pt] at (qp) {$q'$};
      \node[above=5pt] at (pp) {$p'$};
      \node[PMRorange,length label] at (1.83,-.17) {$0$};
      \node[PMRpurple,length label] at (2.6,.84) {$0$};
      \node[PMRblue,length label] at (3.37,-.17) {$0$};

      \node[PMRorange,length label] at (.45,1.64) {$+\lambda_q$};
      \node[PMRpurple,length label] at (.25,.58) {$+\lambda_p$};
      \node[PMRblue,length label] at (.45,-1.72) {$+\lambda_p$};
      \node[PMRblue,length label] at (4.75,1.64) {$+\lambda_q$};
      \node[PMRpurple,length label] at (4.95,.58) {$+\lambda_r$};
      \node[PMRorange,length label] at (4.75,-1.72) {$+\lambda_r$};
      \node[panel label] at (2.6,-2.0) {(b) After the flip: $G_{\Dr'}$};
    \end{scope}
  \end{tikzpicture}%
  }
  \endgroup
  \caption{A triangle flip and the length reassignment in
    \Cref{lm:triangle-flip}. The old triangle sides have lengths
    $a=\lambda_p+\lambda_q$, $b=\lambda_p+\lambda_r$, and
    $c=\lambda_q+\lambda_r$; the three new sides have length zero.
    Each exterior edge gains $\lambda_s$ according to its incidence
    at $s$ before the flip.
    Thus each colored path retains its total length.}
  \label{fig:triangle-flip}
\end{figure}

We show that the distances between any pair of terminals do not decrease in $G_{\Dr'}$.
Consider a walk in $G_{\Dr'}$ through the flipped region that enters through an
exterior edge associated with \(p\). If it leaves through an exterior edge
associated with \(q\), its added length is \(\lambda_p+\lambda_q=a\),
exactly the length of the original triangle side \(pq\). If it instead leaves
through an exterior edge associated with \(r\), its added length is
\(\lambda_p+\lambda_r=b\), exactly the length of \(pr\), by \eqref{eqn:abc}.
If it leaves through an exterior edge associated with \(p\), its added length
is \(2\lambda_p\geq 0\), so this walk is no shorter than staying at \(p\).
The same argument applies when the walk enters through an exterior edge
associated with \(q\) or \(r\). Thus every walk after the flip can be converted
into a walk before the flip of no greater length.

On the other hand, every shortest path affected by the flip
previously used one triangle side, say \(pq\), of length
\(a\). After the flip it uses exterior edges associated with \(p\) and \(q\),
and therefore gains a length of \(
  \lambda_p+\lambda_q=a,
\)
which is exactly the length it lost. Hence every shortest path in $G_{\Dr'}$ still has
length \(d_X(x,y)\). Therefore we obtain that \(
  d_{G_{\Dr'}}(x,y)
  =w(\pi_{\Dr'}(x,y))
  =d_X(x,y),
\)
for every $x,y\in X$, and we are done.
\end{proof}

Let $n=|X|$. As discussed earlier, we shall enumerate all feasible rotation systems on $n$ terminals, and check whether there exists a feasible assignment of weights to the edges of the corresponding simple drawings that realizes the pairwise distances of the metric $(X,d_X)$. Since there are $(n-2)!$ possible rotations at each terminal, a naive enumeration bounds the number of rotation systems as $((n-2)!)^n=2^{\Theta(n^2\log n)}$. A recent work of Kyn\v{c}l \cite{kynvcl2013improved} improves this bound to $2^{n^2\alpha(n)^{O(1)}}$, where $\alpha(n)$ denotes the inverse Ackermann function. Although the proof in \cite{kynvcl2013improved} only gives an existential bound, we show that it is also possible to enumerate the feasible rotation systems in asymptotically the same running time. For this, we shall need the following result of an earlier work of Kyn\v{c}l \cite{Kyncl2011}.

\begin{lemma}[Kyn\v{c}l\cite{Kyncl2011}]
\label{lem:decide-poly-time}
  Given a rotation system, it is possible to decide in polynomial time whether there exists a simple drawing realizing it, and if yes, output such a drawing. 
\end{lemma}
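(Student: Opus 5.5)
Since this result is imported from Kyn\v{c}l~\cite{Kyncl2011}, I sketch only how I would reprove it; it has two parts, deciding realizability and constructing a realizing drawing.

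\textbf{Deciding realizability.} First, from the rotation system alone I would compute, for every pair of independent edges $xy$ and $uv$, whether they cross. For this I use the fact quoted before \Cref{lm:triangle-flip}: the rotation system of a simple drawing determines its crossing pairs~\cite{pach2006many,kynvcl2009enumeration}. Concretely, the restriction of the rotations to $\{x,y,u,v\}$ is a rotation system of $K_4$. Simple drawings of $K_4$ come in finitely many types, so a constant-size table gives the crossing bit, or flags the $4$-tuple as non-realizable. Second, for the decision itself I would invoke (or reprove) the local characterization: a rotation system of $K_n$ is realizable by a simple drawing if and only if its restriction to every set of at most six vertices is realizable. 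Given that, the algorithm enumerates all $O(n^6)$ subsets and checks each against a precomputed finite list of realizable rotation systems on $\le 6$ vertices. The ``only if'' direction is trivial. The ``if'' direction is the deep part of Kyn\v{c}l's argument, and I would cite it rather than reprove it.

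\textbf{Constructing a drawing.} I would build the drawing incrementally, adding terminals one at a time. Suppose a simple drawing $\Dr_{k}$ of the first $k$ terminals has been built, consistent with the rotations restricted to them. To add terminal $x_{k+1}$, I place it in a face of the arrangement. Which face is forced: it is read off from the crossing bits together with the rotations of $x_{k+1}$ relative to the triangles of $\Dr_k$. The $k$ new curves are then routed in the prescribed cyclic order around $x_{k+1}$. Each new curve $x_{k+1}y$ must cross exactly the set $C(x_{k+1}y)$ of old curves computed in the first step, each exactly once, and must cross no other old curve. I would search for it as a path in the dual of the current arrangement in which dual edges dual to curves outside $C(x_{k+1}y)$ are deleted. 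To forbid crossing a curve twice, I would use the observation that each old curve, together with its endpoints' incident curves, separates the relevant region into sides. The side containing $y$ relative to each curve in $C(x_{k+1}y)$ is then determined by the rotations, so the path is forced to cross every such curve exactly once.

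\textbf{Main obstacle.} I expect the hard step to be guaranteeing that these greedy routings never get stuck. The crossing order along curves is not determined by the rotation system: by \Cref{lem:gioan} it is fixed only up to triangle flips. So an early, arbitrary choice might seem to block a later edge. My first attempt to handle this is to argue that if some realization exists, then by \Cref{lem:gioan} a sequence of triangle flips turns it into one extending the current partial drawing. The six-vertex condition should certify each local routing step, since any obstruction to routing a single curve involves only a bounded number of terminals. If this argument fails, I would fall back on citing Kyn\v{c}l's constructive proof directly.
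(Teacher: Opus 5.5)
\textbf{Verdict: the decision part is fine as a citation, but your own construction argument has a genuine gap, and the repair you suggest is provably false.} The paper gives no proof of this lemma. It imports both the decision procedure and the construction from Kyn\v{c}l as a black box, so your fallback of citing him is exactly what the paper does.

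\textbf{Decision part.} This is acceptable at the level of a citation, with one small missing step. Kyn\v{c}l's six-vertex theorem is about complete abstract topological graphs, i.e.\ prescribed sets of crossing pairs. You still need that a drawing realizing the crossing pairs computed from $R$ has rotation system $R$ up to reflection. This follows from the same fact for $4$- and $5$-vertex drawings, glued over overlapping vertex sets.

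\textbf{Construction part.} A drawing $\Dr_k$ realizing the restriction of a realizable rotation system need not extend at all. The face into which $x_{k+1}$ is ``forced'' may simply not exist in $\Dr_k$. So no routing rule can rescue the greedy insertion. Nor can \Cref{lem:gioan} help: it relates two drawings of the same $K_{k+1}$, and says nothing about which drawings of $K_k$ occur as restrictions.

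\textbf{Counterexample.} Let $a_1=(1,0)$, $a_2=(\tfrac{1}{2},\tfrac{\sqrt{3}}{2})$, $a_3=(-\tfrac{1}{2},\tfrac{\sqrt{3}}{2})$, $b_1=(-1,0)$, $b_2=(-\tfrac{1}{2},-\tfrac{\sqrt{3}}{2})$, $b_3=(\tfrac{1}{2},-\tfrac{\sqrt{3}}{2})$. Move $a_1$ and $b_1$ vertically by $+\delta$, respectively by $-\delta$, for a small generic $\delta>0$. This gives straight-line drawings $D^+$ and $D^-$ of $K_6$. They have the same rotation system, since the points are in convex position in the same cyclic order. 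They differ by one flip of the triangle bounded by the long diagonals $a_ib_i$.

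Add $x=(0,\delta/2)$ to $D^+$ to obtain a straight-line drawing of $K_7$, and let $R'$ be its rotation system. Take the triangles $C_1=a_1b_1b_2$, $C_2=a_2b_2b_1$, $C_3=a_3b_3a_1$ and the vertices $r_1=a_2$, $r_2=a_1$, $r_3=b_1$. In this drawing $x$ is inside each $C_i$ and $r_i$ is outside, so $xr_i$ crosses exactly one edge of $C_i$.

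Since $xr_i$ is independent of every edge of $C_i$, which of them it crosses is determined by $R'$. Hence in \emph{every} simple drawing with rotation system $R'$, the edge $xr_i$ crosses the Jordan curve $C_i$ an odd number of times. So $x$ lies on the side of $C_i$ not containing $r_i$.

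In $D^-$, each $r_i$ is still outside $C_i$. But the interiors of $C_1,C_2,C_3$ lie in $\{y<-\delta\}$, $\{y>\sqrt{3}\,x\}$ and $\{y>-\sqrt{3}\,x\}$, and these three sets have empty intersection.

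Thus $D^-$ realizes the restriction of $R'$ to the hexagon and $R'$ is realizable, yet $D^-$ has no extension with rotation system $R'$. If your procedure happens to build $D^-$ at $k=6$, it is stuck at $k=7$.

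This obstruction is not a non-realizable subsystem of $R'$, so the six-vertex condition cannot certify the insertion step either. A correct construction must either be allowed to redraw earlier curves, or build the drawing non-incrementally from global data. That is the nontrivial content of Kyn\v{c}l's algorithm, so citing it, as the paper does, is the right move here.
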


We now show the following lemma.

\begin{lemma}\label{lm:enumerate-rotations}
     All the feasible rotation systems of the complete graph on $n$ terminals can be enumerated in time $2^{n^2\alpha(n)^{O(1)}}$.
\end{lemma}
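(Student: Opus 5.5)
The plan is to enumerate feasible rotation systems \emph{incrementally}, adding one terminal at a time, and to prune infeasible partial systems using \Cref{lem:decide-poly-time}. Fix an ordering $x_1,\ldots,x_n$ of the terminals. A rotation system of $K_n$ restricted to $\{x_1,\ldots,x_k\}$ (delete $x_{k+1},\ldots,x_n$ from every rotation) is a rotation system of $K_k$. If the original system is realized by a simple drawing, then deleting the curves incident to the removed vertices yields a simple drawing of $K_k$ with the restricted system. Hence every feasible system of $K_n$ extends a feasible system of $K_{n-1}$. Conversely, every feasible system of $K_n$ arises by taking a feasible system $\rho$ of $K_{n-1}$ and inserting $x_n$ into each of the $n-1$ rotations and choosing a rotation for $x_n$. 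The algorithm maintains the list $\mathcal{R}_k$ of all feasible rotation systems on $x_1,\ldots,x_k$. From each $\rho\in\mathcal{R}_{k}$ it generates all candidate extensions and keeps those that \Cref{lem:decide-poly-time} certifies as realizable.

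The cost of step $k+1$ is $|\mathcal{R}_k|$ times the number of candidate extensions per system times $\mathrm{poly}(n)$. Naively, the number of extensions is at most $(k-1)^{k}\cdot (k-1)!$: there are $k-1$ insertion positions in each of the $k$ old rotations, plus a rotation for the new vertex. This is $2^{O(k\log k)}$. Kyn\v{c}l's bound~\cite{kynvcl2013improved} gives $|\mathcal{R}_k|\le 2^{k^2\alpha(k)^{O(1)}}$. So the cost of step $k+1$ is $2^{k^2\alpha(k)^{O(1)}+O(k\log k)}\mathrm{poly}(n)$, and summing over $k\le n$ gives $2^{n^2\alpha(n)^{O(1)}}$ in total, since $n\log n = o(n^2)$. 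Correctness follows by induction on $k$ using the restriction property above. Each output drawing (and its graph $G_{\Dr}$) is produced by \Cref{lem:decide-poly-time} for later use with the linear program.

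The one point to check carefully is that the restriction of a feasible system is feasible. This matters because the pruning must never discard a prefix of a feasible system. It holds because a sub-drawing of a simple drawing is simple, with the same local cyclic orders at the surviving vertices. I expect this step, together with confirming that the $2^{O(n\log n)}$ per-step blow-up is absorbed by the $n^2\alpha(n)^{O(1)}$ exponent, to be the main thing to verify. No delicate counting is needed beyond citing Kyn\v{c}l's existential bound for the sizes of the intermediate lists $\mathcal{R}_k$.
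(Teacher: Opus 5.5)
Your proposal is correct and follows the paper's proof essentially step for step. You extend each $R\in\mathcal{R}_k$ in the $(k-1)^k(k-1)!=2^{O(k\log k)}$ possible ways, prune with the polynomial-time realizability test of \Cref{lem:decide-poly-time}, justify completeness by observing that deleting a vertex from a realizing drawing realizes the restriction, and bound the total cost using Kyn\v{c}l's $2^{k^2\alpha(k)^{O(1)}}$ count of $|\mathcal{R}_k|$.
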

\begin{proof}
For $k\in [n]$, let $\mathcal{R}_k$ denote the set of rotation systems on the labeled
vertex set $[k]$ that are realizable by simple drawings of the complete graph on $k$ vertices. From \cite{kynvcl2013improved}, we have $|\mathcal{R}_k|=2^{k^2\alpha(k)^{O(1)}}$.

We enumerate the sets $\mathcal{R}_3,\ldots,\mathcal{R}_n$
incrementally, starting with the unique rotation system on three
vertices. Given $R\in\mathcal{R}_k$, generate every rotation system
$R'$ on $[k+1]$ whose restriction to $[k]$ is $R$, as follows:
\begin{itemize}
    \item At each vertex $v\in[k]$, insert the label $k+1$ into
    one of the $k-1$ gaps of the rotation at $v$.
    \item Choose a rotation of $[k]$ at vertex $k+1$.
\end{itemize}
There are exactly $(k-1)^k(k-1)!
        =2^{O(k\log k)}$
such extensions for each $R\in \mathcal{R}_k$. For each extension, we test whether there exists a simple drawing realizing it, and retain only the feasible ones. This test can be done in $k^{O(1)}$ time by \Cref{lem:decide-poly-time}.

Every retained system is feasible by construction. Conversely,
if $R'\in\mathcal{R}_{k+1}$, deleting vertex $k+1$ from a realizing
drawing shows that its restriction to $[k]$ gives a feasible rotation system on $k$ vertices. Hence the procedure
generates $R'$ from its restriction. Moreover, this restriction is
unique, and the insertion positions and cyclic order at the new
vertex are uniquely determined by $R'$. Thus every feasible labeled
rotation system on $k+1$ vertices is generated exactly once.

It follows that the total time required to enumerate all the feasible rotation systems on $n$ terminals is at most 
\[ \sum_{k=3}^{n-1} \left(|\mathcal{R}_k|\cdot 2^{O(k\log k)}\cdot k^{O(1)}\right) \le \sum_{k=3}^{n-1} \left(2^{k^2\alpha(k)^{O(1)}}\cdot 2^{O(k\log k)}\cdot k^{O(1)}\right) = 2^{n^2\alpha(n)^{O(1)}},\]
and we are done.
\end{proof}

Finally, it remains to decide whether a given simple drawing can realize the metric $(X,d_X)$ using linear programming. We note that this step is standard, see, e.g.,~\cite{ChenTan2025,LeTiernoYang2026}, but we include its (short) detail for completeness.

\begin{lemma}\label{lm:drawing-feasibility}
     Given a simple drawing $\Dr$ on $n$ terminals, it is possible to decide in time $(n\log \Delta)^{O(1)}$ whether there exists an assignment of non-negative weights to the edges of the graph $G_{\Dr}$ such that for any $x,y\in X$, it holds that $d_{G_{\Dr}}(x,y)
    = w(\pi_{\Dr}(x,y))
    = d_X(x,y)$.
\end{lemma}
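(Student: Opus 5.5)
We will write the feasibility question as a linear program whose variables are the edge weights of $G_{\Dr}$, and then solve it with a polynomial-time LP algorithm.

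First, we bound the size of $G_{\Dr}$. The drawing is simple, so any two of the $\binom{n}{2}$ curves cross at most once. This gives at most $O(n^4)$ crossing vertices. Each curve is cut into at most $O(n^2)$ edges, so $G_{\Dr}$ has $O(n^4)$ edges. For each curve $\pi_{\Dr}(x,y)$ we read off its edge sequence from the drawing.

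Next, we set up the LP. There is one variable $w_e\ge 0$ for each $e\in E(G_{\Dr})$. The constraints are:
\begin{itemize}
  \item[(i)] $\sum_{e\in\pi_{\Dr}(x,y)} w_e = d_X(x,y)$ for every pair $x\neq y\in X$;
  \item[(ii)] $\delta_x(v)\le \delta_x(u)+w_{uv}$ and $\delta_x(u)\le \delta_x(v)+w_{uv}$ for every terminal $x$ and every edge $uv$;
  \item[(iii)] $\delta_x(x)=0$ and $\delta_x(y)=d_X(x,y)$ for every $y\in X$.
\end{itemize}
Here the $\delta_x(\cdot)$ are auxiliary potential variables, one per terminal $x$ and vertex of $G_{\Dr}$.

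We check that the LP is exactly the feasibility question. Suppose it is feasible. Constraint (ii) says $\delta_x$ is $1$-Lipschitz along every edge, so by induction along any walk $d_{G_{\Dr}}(x,y)\ge \delta_x(y)-\delta_x(x)=d_X(x,y)$. Constraint (i) exhibits a path of length exactly $d_X(x,y)$. Together these give $d_{G_{\Dr}}(x,y)=w(\pi_{\Dr}(x,y))=d_X(x,y)$. Conversely, suppose a valid weighting exists. Setting $\delta_x(v)=d_{G_{\Dr}}(x,v)$ satisfies all constraints.

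The LP has $O(n^5)$ variables and $O(n^6)$ constraints, with $0/\pm1$ coefficients and right-hand sides at most $\Delta$. Its bit-size is therefore $(n\log\Delta)^{O(1)}$. The ellipsoid or an interior-point method decides feasibility and returns a rational solution in $(n\log\Delta)^{O(1)}$ time.

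The main obstacle is the "at least $d_X$" requirement. Stated directly it is a constraint over exponentially many walks, and it would not be linear if phrased with a min over walks. The potential variables $\delta_x$ linearize it with only polynomially many constraints. An alternative is to use the ellipsoid method with a shortest-path separation oracle, but the compact formulation above avoids that.
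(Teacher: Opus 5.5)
Your proof is correct and is essentially the paper's argument: an LP over edge weights with the path-equality constraints, plus per-terminal potentials that are $1$-Lipschitz along every edge, so that any walk has length at least $d_X(x,y)$. The only difference is cosmetic. You introduce potentials $\delta_x(\cdot)$ only for terminal sources $x\in X$, whereas the paper defines $\phi(u,v)$ for all vertex pairs although only those rooted at terminals matter; this gives a smaller LP with the same polynomial-time conclusion.
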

\begin{proof}
     We introduce non-negative variables $\{\ell_e\}_{e\in E(G_{\Dr})}$, which shall stand for the weights of the edges of $G_{\Dr}$, and a non-negative variable $\phi(u,v)$ for every pair of vertices $u,v \in V(G_{\Dr})$, representing the length of the shortest path between $u$ and $v$. We formulate the following linear program. The most interesting inequality is the third, which is essentially the triangle inequality.      
\begin{align*}
  \sum_{e\in \pi_{\Dr}(x,y)}\ell_e&=d_X(x,y)
    &&\text{for each pair $x,y\in X$},\\
  \phi(x,y)&=d_X(x,y)
    &&\text{for each pair $x,y\in X$},\\
  -\ell_{uv}\leq\phi(x,v)-\phi(x,u)&\leq\ell_{uv}
    &&\text{for each $x\in X$ and $(u,v)\in E(G_{\Dr})$},\\
    \phi(v,v) &= 0 &&\text{for each $v\in V(G_{\Dr})$},\\
    \ell_e, \phi(u,v) &\geq 0 &&\text{for each $e\in E(G_{\Dr})$, and each pair $u,v \in V(G_{\Dr})$}.
\end{align*}

Given a feasible solution to the above linear program, consider any pair of terminals $x,y\in X$ and let $x=v_0,v_1,\ldots, v_k=y$ be any walk from $x$ to $y$ in the graph $G_{\Dr}$. Then from the LP constraints, we have that
\[ \sum_{i=1}^{k} \ell_{v_{i-1}v_{i}} \ge \sum_{i=1}^{k} (\phi(x,v_{i})-\phi(x,v_{i-1}))=\phi(x,y)-\phi(x,x)=\phi(x,y) = d_X(x,y) \]
and thus any walk has length at least $d_X(x,y)$. Also the path $\pi_{\Dr}(x,y)$ achieves the desired length $d_X(x,y)$. Therefore $\{\ell_e\}_{e\in E(G_{\Dr})}$ forms a feasible assignment of weights to the edges such that $d_{G_{\Dr}}(x,y)
    = w(\pi_{\Dr}(x,y))
    = d_X(x,y)$. 

Conversely, suppose there exists a feasible assignment of non-negative weights $\{w_e\}_{e\in E(G_{\Dr})}$ to the edges of $G_{\Dr}$ that realizes the metric distances in $X$. Then it is easy to see that setting $\ell_e := w_e$ for each $e\in E(G_{\Dr})$, and $\phi(u,v)$ to be equal to the shortest path distance between the vertices $u$ and $v$ in the weighted graph $G_{\Dr}$, gives a feasible solution to the linear program.

Since $|V(G_{\Dr})|=O(n^4)$, the linear program has only $O(n^8)$ variables and constraints, and therefore its feasibility can be determined in $(n\log \Delta)^{O(1)}$ time.
\end{proof}

We are now ready to prove \Cref{thm:algorithm}.
\begin{proof}[Proof of \Cref{thm:algorithm}]
We enumerate all feasible rotation systems on $X$ using
\Cref{lm:enumerate-rotations}. For each system, we construct a corresponding
simple drawing $\Dr$ in $n^{O(1)}$ time using the algorithm
of~\cite{Kyncl2011}, and apply \Cref{lm:drawing-feasibility} to its graph $G_{\Dr}$. If any of these tests succeed, we return the corresponding weighted graph.

Every successful test yields a weighted planar graph realizing $(X,d_X)$.
Conversely, if $(X,d_X)$ has a planar embedding, then
\Cref{lm:simple-drawing} gives a simple drawing whose graph admits
the required edge weights. Its rotation system is enumerated, and
\Cref{lm:triangle-flip} guarantees that the simple drawing constructed for this
system also admits such weights. 

By \Cref{lm:enumerate-rotations}, the enumeration takes
$2^{n^2\alpha(n)^{O(1)}}$ time.
For each rotation system, constructing the simple drawing and applying
\Cref{lm:drawing-feasibility} takes $(n\log\Delta)^{O(1)}$ time.
The total running time is therefore
\[
  2^{n^2\alpha(n)^{O(1)}}(n\log\Delta)^{O(1)}
  =2^{n^2\alpha(n)^{O(1)}}(\log\Delta)^{O(1)},
\]
as claimed.
\end{proof}

\section{Faster Approximate Metric Planarity Testing}

In this section, we prove \Cref{thm:approximate-algo}. For this, we use the following result of Chang, Krauthgamer and Tan \cite{chang2022almost} on near-linear sized emulators of planar graphs.

\begin{theorem}[Chang, Krauthgamer and Tan\cite{chang2022almost}]\label{thm:planar-emulator}
  For every planar graph $G$ with $n$ distinguished terminal vertices $T$, and $\epsilon \in (0,1)$, there exists another planar graph $G'$ with $T\subseteq V(G')$ of size $|V(G')|=n(\log n/\epsilon)^{O(1)}$, satisfying
  \[ \forall x,y\in T, \qquad d_G(x,y)\le d_{G'}(x,y)\le (1+\epsilon)d_G(x,y) \]
\end{theorem}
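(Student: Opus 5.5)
This is a cited result, and I would not reprove it inside the paper; I would use it as a black box in the proof of \Cref{thm:approximate-algo}. If a self-contained justification were wanted, I would follow a recursive-separator approach. The difficulty is that the pieces must be glued back together while keeping the result planar, not just short.

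\textbf{Step 1: reduce to terminals on one face.} Use recursive balanced shortest-path separators (in the sense of Lipton--Tarjan and Thorup). Each separator is the union of $O(1)$ shortest paths, and the recursion has depth $O(\log n)$. After recursion, every piece has its relevant terminals on $O(1)$ holes. On each separator path, place portals at spacing $\epsilon' \cdot D$, where $D$ is the current distance scale and $\epsilon' = \epsilon/\log^{O(1)} n$. Any terminal-to-terminal shortest path crossing the separator can then be rerouted through a portal with additive error $\epsilon' D$. The standard scale decomposition turns this into a multiplicative error of $(1+\epsilon')$ per level. Compounding over the $O(\log n)$ levels gives $(1+\epsilon)$ overall, and the total number of portals is $n(\log n/\epsilon)^{O(1)}$.

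\textbf{Step 2: planar emulators for single-face instances.} For a piece whose terminals and portals all lie on the boundary of one face, the metric is an Okamura--Seymour-type (outerplanar-boundary) metric. For such a metric with $k$ boundary points, I would build a planar emulator of size $k(\log k/\epsilon)^{O(1)}$. The construction is again recursive: split the boundary cycle by a shortest path between two antipodal boundary points, put portals on that path, and recurse on the two sides. Because the splitting path is itself drawn inside the face, the two sub-emulators sit on opposite sides of it and glue planarly. Instances with $O(1)$ holes reduce to the one-face case by cutting along shortest paths that connect the holes, at the cost of duplicating those paths with portals.

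\textbf{Step 3: global gluing.} The emulators of sibling pieces are glued along the shared separator paths and their portals. Since each piece's emulator lives in the region that piece occupies in the original embedding, the union is planar. Distances never decrease, because every emulator edge has weight equal to a true distance. The upper bound $(1+\epsilon)d_G$ follows from the portal rerouting in Step 1.

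\textbf{Main obstacle.} I expect the real difficulty to be keeping the size near-linear, i.e.\ $n\cdot\mathrm{polylog}$ rather than $n\cdot\mathrm{poly}(1/\epsilon)\cdot n^{\delta}$, while maintaining planarity in Step 2. Naive portal-to-portal cliques are non-planar, and planarizing them blows up the size. To handle this I would first try the crossing-vertex trick already used in \Cref{lm:simple-drawing} and \Cref{lm:triangle-flip}: insert vertices at crossings and charge them to portal pairs at geometrically increasing scales. This charging is the step I am least sure of.
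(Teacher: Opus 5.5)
Treating \Cref{thm:planar-emulator} as an imported black box is exactly what the paper does. It gives no proof and only uses the existence of a small $(1+\epsilon)$-emulator inside the proof of \Cref{thm:approximate-algo}, so nothing more is required there.

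The self-contained sketch you append should not be offered as a justification, though, because it breaks exactly where the theorem has content.

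\textbf{The crossing-vertex trick does not transfer.} In \Cref{lm:simple-drawing,lm:triangle-flip} it works only because every curve follows a shortest path of an \emph{already existing} planar realization. A walk through a crossing vertex can therefore be traced back in that realization. The trick also yields up to $O(k^4)$ crossing vertices for $k$ terminals (cf.\ the bound used in \Cref{lm:drawing-feasibility}). Exact planar emulators are known to need quadratically many vertices in the worst case. So near-linear size forces you to drop most portal-to-portal connections, and then neither inequality follows:
\begin{enumerate}
\item Your Step~3 argument that ``every emulator edge has weight equal to a true distance'' says nothing about edges at Steiner crossing vertices. If $c$ is the crossing of $pp'$ and $rr'$, a walk $p\to c\to r'$ can be shorter than $d_G(p,r')$ unless the weight splits at $c$ satisfy extra inequalities. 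You give no reason such splits exist.
\item The $(1+\epsilon)$ upper bound for the dropped pairs needs an argument you do not have.
\end{enumerate}

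\textbf{A single portal spacing cannot serve all scales.} A spacing of $\epsilon' D$ per piece does not handle terminal pairs at all distance scales at once. A pair at distance much smaller than $D$ whose shortest path touches the separator suffers additive error about $\epsilon' D$. The scale decomposition used for distance oracles gets around this by building one structure per scale. An emulator, however, must be a single planar graph, and the union of per-scale planar emulators sharing the terminals need not be planar.

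Resolving these two issues is the actual contribution of Chang, Krauthgamer and Tan, so your sketch is a plan rather than a proof.
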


Our algorithm uses only the existence of these small emulators and proceeds by enumerating all such small emulators. To make this enumeration efficient, we also need to restrict the possible edge weights to a small range. This is possible since we can tolerate an error in the distortion, and so we simply round up the distances when they are too small.

\begin{claim}\label{lm:rounding}
  If a metric $(X,d_X)$ has a $(1+\epsilon)$-approximate planar embedding $G$ with $q$ vertices, then it has a $(1+2\epsilon)$-approximate planar embedding  
  $G'$ on the same vertex set where all edge weights belong to a set of size $O(n^2q/\epsilon)$.
\end{claim}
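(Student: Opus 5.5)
We round each edge weight of $G$ up to a multiple of a small unit, and cap very long edges at a large value. Normalize so that the minimum distance in $X$ is $1$; the distances lie in $\{1,\dots,\Delta\}$, so the minimum is at least $1$. Set $\delta = \epsilon/(2q)$ and $D = (1+\epsilon)\max_{x,y} d_X(x,y)$.

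\textbf{Step 1: delete useless edges.} Any edge of $G$ of weight greater than $D$ lies on no shortest path between terminals, since every terminal distance in $G$ is at most $D$. Deleting such edges therefore leaves all terminal distances unchanged and keeps the graph planar. After this, every remaining weight lies in $[0,D]$.

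\textbf{Step 2: round up.} Replace each weight $w(e)$ by $w'(e)=\delta\lceil w(e)/\delta\rceil$. Rounding up never decreases lengths, so $d_{G'}(x,y)\ge d_G(x,y)\ge d_X(x,y)$ for all $x,y\in X$.

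For the upper bound, fix $x,y\in X$ and take a shortest $x$-$y$ path in $G$. We may assume it is simple, so it has at most $q-1$ edges. Each edge gains at most $\delta$, so
\[
d_{G'}(x,y)\le d_G(x,y)+q\delta \le (1+\epsilon)d_X(x,y)+\epsilon/2 .
\]
Since $d_X(x,y)\ge 1$, this is at most $(1+2\epsilon)d_X(x,y)$. The graph is unchanged apart from the deleted edges, so it is still planar on the same vertex set.

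\textbf{Step 3: count the possible weights.} Every weight is now a multiple of $\delta$ in $[0, D+\delta]$, so there are $O(D/\delta)=O(q\Delta/\epsilon)$ possible values.

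This count is the main obstacle, because the claim asks for $O(n^2 q/\epsilon)$ values, with no dependence on $\Delta$. A uniform grid cannot achieve that when $\Delta$ is large. The fix is to make the grid depend on the distance scales that actually occur. There are at most $\binom n2$ distinct values $d_X(x,y)$. Each edge $e$ that lies on some terminal shortest path, say for the pair $(x,y)$, can be rounded up on a grid of step $\epsilon\, d_X(x,y)/q$ truncated at $d_X(x,y)$. Restricted to that pair, this grid has only $O(q/\epsilon)$ values, so over all pairs the union has size $O(n^2 q/\epsilon)$.

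The subtlety is which pair to charge each edge to. I would assign each edge to the pair with the smallest $d_X$ among the pairs whose shortest path uses it. Then:
\begin{itemize}
\item For that smallest pair, the added error is at most $\epsilon\, d_X/q$ per edge, which is fine.
\item For any larger pair $(x',y')$ that also uses the edge, the charged scale $d_X(x,y)$ is at most $d_X(x',y')$. So each edge adds at most $\epsilon\, d_X(x',y')/q$ to that path as well, and summing over its at most $q$ edges gives an additive $\epsilon\, d_X(x',y')$.
\end{itemize}
The lower bound is safe because weights only increase, and the truncation at $d_X(x,y)$ never cuts a weight, since the edge lies on a path of that length. Edges on no terminal shortest path are deleted. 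If deleting them changed some distance, that distance was realized by another shortest path, which consists only of kept edges, so no terminal distance changes. This gives $(1+2\epsilon)$ with $O(n^2q/\epsilon)$ possible weights.
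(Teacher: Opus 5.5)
Your final argument is essentially the paper's proof: fix one shortest path per terminal pair, delete edges on no chosen path, charge each edge to $\beta_e$ (the smallest $d_X(x,y)$ among pairs whose chosen path uses it), and round up to a multiple of $\epsilon\beta_e/q$; your uniform-grid Steps 1--3 are an unnecessary detour. One slip needs fixing: the truncation cannot be at $d_X(x,y)$, because the edge lies on a path of length $d_G(x,y)\le(1+\epsilon)d_X(x,y)$, not $d_X(x,y)$, so the grid must extend to $(1+\epsilon)d_X(x,y)$ plus one step (the paper takes $j\le\lceil(1+\epsilon)q/\epsilon\rceil$), which still gives $O(q/\epsilon)$ values per pair.
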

\begin{proof}
Let $G$ be a $(1+\epsilon)$-approximate planar embedding of $(X,d_X)$, and for each pair of distinct terminals $x,y\in X$, we choose a shortest path $P_{xy}$ in $G$, breaking ties arbitrarily.
We delete all edges of $G$ that do not belong to these
paths. For each remaining edge $e$, let 
\[ \beta_e := \min\{d_X(x,y) \mid x,y\in X \text{ and } e\in P_{xy}\}, \]
i.e., $\beta_e$ is the smallest value of $d_X(x,y)$ across all pairs of terminals $x,y\in X$ for which the chosen path $P_{xy}$ contains the edge $e$. Clearly, the weight of the edge $e$ must be at most $(1+\epsilon)\beta_e$. We round this weight up to the nearest non-negative integer
multiple of $\epsilon \beta_e/q$, obtaining the planar graph $G'$ whose weights
belong to the set
\[
  W=\left\{\frac{j\epsilon d_X(x,y)}{q}:
    x,y\in X,\ j\in\mathbb{N},\
    j\le \left\lceil \frac{(1+\epsilon)q}{\epsilon}\right\rceil\right\}.
\]

Clearly $|W|=O(n^2q/\epsilon)$ and can be enumerated from the input
in $(nq\log \Delta/\epsilon)^{O(1)}$ time.

Rounding only increases edge weights, so $d_{G'}(x,y)\ge d_G(x,y)\ge d_X(x,y)$.
On the other hand, every edge $e\in P_{xy}$ has $\beta_e\le d_X(x,y)$,
so its weight increases by at most $\epsilon d_X(x,y)/q$.
Since $P_{xy}$ has at most $q$ edges, its total length increases by at
most $\epsilon d_X(x,y)$. Hence
\[
  d_{G'}(x,y)\le d_G(x,y)+\epsilon d_X(x,y)
  \le (1+2\epsilon)d_X(x,y),
\]
and we are done.
\end{proof}

We are now ready to prove \Cref{thm:approximate-algo}.
\begin{proof}[Proof of \Cref{thm:approximate-algo}]
Let $(X,d_X)$ be a metric that has a $(1+\epsilon/2)$-approximate planar embedding $G$, i.e.,
\[ d_X(x,y) \le d_G(x,y)\le \left(1+\frac{\epsilon}{2}\right)d_X(x,y) \qquad \text{for all } x,y\in X.\]

Combining \Cref{thm:planar-emulator} and \Cref{lm:rounding}, there exists a $(1+\epsilon)$-approximate planar embedding $G'$ with at most $q=n(\log n/\epsilon)^{O(1)}$ vertices and whose edge weights belong to a polynomial-time computable set $W$ of size $O(n^2q/\epsilon)$.

We label the terminals $1,2,\ldots,n$ and enumerate all planar graphs on at most $q$ vertices, and try every possible assignment of weights from $W$ to each enumerated graph $G$. We compute the lengths of the shortest paths between every pair of terminals and test whether 
\[
  d_X(x,y)\le d_G(x,y)\le (1+\epsilon)d_X(x,y)
  \qquad\text{for all }x,y\in X,
\]
and return any graph passing this test. If no graph passes the test, we return NO.

Since any planar graph has $O(q)$ edges, there are $q^{O(q)}$ candidate graphs and at most $|W|^{O(q)}$
weight assignments per graph. Computing all-pairs shortest paths in each graph takes $(q\log \Delta)^{O(1)}$ time. The total running time is therefore
\[
  q^{O(q)}\cdot |W|^{O(q)} \cdot (q\log \Delta)^{O(1)} 
  =2^{n(\log n/\epsilon)^{O(1)}}\log\Delta,
\]
as claimed.
\end{proof}

\bibliographystyle{plain}
\bibliography{references}
\appendix

\section{ETH Hardness of \ThreePartition}\label{app:three-partition}

In this section, we provide a self-contained proof of \cref{thm:ETH-3Partition}. 
Garey and Johnson's book~\cite[Section 4.2, Theorems 4.3 and 4.4, pp.~97--100]{GareyJohnson1979} 
proves NP-completeness of \ThreePartition{} using the chain of reductions
\[
\textsc{3-Dimensional Matching}\ \longrightarrow\ \textsc{4-Partition}\ \longrightarrow\ \ThreePartition.
\]
Although the first leg of this chain of reductions ensures that the input remains linear, the second
leg has a quadratic blow-up in the input. Hence this chain of reduction is not directly applicable
to get the desired ETH-bound. We instead show a reduction directly from \textsc{3-Dimensional Matching}. 

In the reduction for $\textsc{4-Partition}$ by Garey and Johnson~\cite{GareyJohnson1979}, the basic idea is to encode each (3D) matching edge $(i,j,k)$ in the \textsc{3-Dimensional Matching} instance as a quadruple $(x_i, y_j,z_k,t_{ijk})$ in the resulting  $4$-\textsc{Partition} instance, where the last number $t_{ijk}$ in the quadruple makes sure the sum is $B$. For   $3$-\textsc{Partition}, the reduction here must break up $x_i, y_j,z_k$ into \EMPH{two different triples} (the last number in each triple is used to guarantee the sum is $B$). To ``synchronize'' the two triples corresponding to a single 3D matching edge, the reduction must carefully add connector gadgets, which are the key difference from the reduction by Garey and Johnson~\cite{GareyJohnson1979}.

In \textsc{3-Dimensional Matching}, 
the input consists of
\begin{itemize}
  \item $U=\{u_1,\ldots,u_q\}$, $V=\{v_1,\ldots,v_q\}$, 
        and $W=\{w_1,\ldots ,w_q\}$, three disjoint sets of $q$ \EMPH{nodes} each,
  \item and a set of \EMPH{edges}, $E \subseteq U\times V\times W$, containing $m$ triples of nodes.
\end{itemize}
A \EMPH{perfect matching} is a subset $E'$ of edges $E$ such that 
every node in $U \cup V \cup W$ appears exactly once in $E'$.
The problem of \textsc{3-Dimensional Matching} asks to decide if there
exists a perfect matching for the input instance. 

Note that~\cite[Theorem~3.1]{Jansen2016} rules out
an algorithm for \textsc{3-Dimensional Matching} with running time 
$2^{o(m)}|I|^{\Oh(1)}$ under ETH, where $m=|E|$ and $|I|$ is the binary input length.
We may assume $m \ge 1$, $q \ge 2$ and that no node is isolated, since
all other instances are trivial to decide.

\Needspace{9\baselineskip}
\begin{lemma}\label{lem:matching-partition}
An instance of \textsc{3-Dimensional Matching} with $m$ edges
and no isolated nodes can be transformed in polynomial
time into an equivalent instance of \ThreePartition{} with $n=9m$
items and $B\leq n^{\Oh(1)}$.
\end{lemma}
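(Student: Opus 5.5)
We will follow Garey and Johnson's reduction from \textsc{3-Dimensional Matching} to \textsc{4-Partition} closely. The change is that each edge's quadruple is split into \emph{three} triples, synchronized by a connector gadget. All numbers will be written in a base $r$ that is a fixed polynomial in $m$ (say $r=100m$), using a constant number of digit positions. Every triple in a solution then sums to $B$ digit by digit with no carries, so $B\le r^{O(1)}=n^{O(1)}$. The construction is clearly polynomial time, so everything rests on the digit design.

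\textbf{Occurrence items.} For each node we create one item per edge incident to it, so $3m$ occurrence items in total. Exactly one occurrence of each node is \emph{real} and the remaining $\deg-1$ are \emph{dummy}; real and dummy items differ by a shift $D$ in a dedicated low digit. The signs of the shifts are chosen as follows: $u$-type and $v$-type real items are larger than their dummies by $D$, and $w$-type real items are smaller than their dummies by $2D$. Each occurrence item also carries, in separate digits, the index of its node and the index of its edge $e$.

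\textbf{Edge items.} For each edge $e=(u_i,v_j,w_k)$ we add six items $a_e^0,a_e^1,b_e^0,b_e^1,c_e,d_e$, for $9m$ items overall. The intended triples, for $x\in\{0,1\}$ ($x=1$ meaning ``$e$ is in the matching''), are
\[
\{u\text{-occ},v\text{-occ},a_e^x\},\qquad \{w\text{-occ},b_e^x,c_e\},\qquad \{a_e^{1-x},b_e^{1-x},d_e\}.
\]
The values are set so that $a_e^1=a_e^0-2D$ and $b_e^1=b_e^0+2D$. This makes the first two triples sum to $B$ exactly when the occurrences used are all real ($x=1$) or all dummy ($x=0$). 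It also gives $a_e^0+b_e^0=a_e^1+b_e^1$, so the third triple sums to $B$ in either case. Tag digits carrying $e$ and the node indices force each triple type to contain exactly one item of each prescribed kind from the \emph{same} edge. Using edge tags, rather than a Garey--Johnson style per-node pool of interchangeable dummies, keeps the item count at exactly $9m$. A large offset in the top digit enforces $B/4<x<B/2$.

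\textbf{Completeness and soundness.} Completeness is direct: given a perfect matching, set $x=1$ on matched edges and $x=0$ otherwise. Each node has exactly one matched edge, which consumes its unique real occurrence. For soundness we first use the top-digit offset to show that every bin has exactly three items, one from each of three ``role'' classes. Reading off the tag digits then shows each bin is one of the three shapes above. Next, the $D$-digit shows $a_e^x$ pairs only with two reals or two dummies, and that the leftover $a_e^{1-x}$ must pair with $b_e^{1-x}$. Hence $b_e^x$ sits with a $w$-occurrence of the matching type, so $x$ is consistent across all three triples of $e$. Finally, each node has exactly one real occurrence item, and real items can only appear in $x=1$ triples; therefore the edges with $x=1$ cover every node exactly once, giving a perfect matching.

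The main obstacle is the digit bookkeeping in the soundness direction. Without carries, the tag digits must genuinely prevent mixed bins, for example $a_e^1$ with one real and one dummy occurrence, or $a_e^0$ with $b_e^1$, or items from different edges. I would first fix the digit layout: a role-offset digit, an edge-tag digit, node-tag digits, and the $D$ digit. I would then verify exhaustively, over the constantly many role combinations, that only the intended triples sum to $B$, using $r>9m\cdot\max(\text{digit})$ to rule out carries.
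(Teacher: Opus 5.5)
\textbf{There is a genuine gap.} Your overall gadget is the paper's: three triples per edge, real-item shifts $+D,+D,-2D$, connectors differing by $\mp 2D$, and a padding item. The problem is the choice you present as an advantage, namely tagging every occurrence item with its edge.

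The real occurrence of a node $u$ is one fixed item. Once it carries the tag of some edge $e^*$, your tag digits confine it to the triple $\{u\text{-occ},v\text{-occ},a^x_{e^*}\}$, and the $D$-digit then forces $x=1$ on $e^*$. So the output is a yes-instance exactly when, for every edge, its three tagged occurrences are all real or all dummy. That is a property of your arbitrary choice of which items are real, not of the \textsc{3-Dimensional Matching} instance.

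Completeness breaks at the sentence ``each node has exactly one matched edge, which consumes its unique real occurrence.'' If the matched edge $e$ at $u$ differs from $e^*$, the only $u$-item tagged with $e$ is a dummy, so $e$ cannot receive $x=1$. Soundness survives, but only because the output now decides a different, easy question. Your justification for edge tags is also mistaken. Per-node pools of one flagged item plus $d(a)-1$ interchangeable unflagged items already give exactly $3m$ occurrence items, so dropping edge tags does not change the count $9m$.

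\textbf{What is missing} is how the paper synchronizes the two triples of an edge once occurrences are tagged only by node index. In a no-carry digit scheme, the connector in $\{u_i\text{-occ},v_j\text{-occ},a\}$ is determined by $(i,j)$ and its flag alone, since nothing else in that triple can cancel an edge tag. Likewise $b$ can only carry $(i,j)$ with the opposite sign, and only $c_e$ carries the full edge $(i,j,k)$. The padding triple $\{a,b,d\}$ therefore only forces its two connectors to share $(i,j)$, so your step ``the leftover $a_e^{1-x}$ must pair with $b_e^{1-x}$'' is not available.

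The paper replaces that step with a counting argument:
\begin{enumerate}
\item Fix $(i,j)$ and a flag type. The connectors not used in padding triples are equally many on the $a$-side and the $b$-side.
\item Hence the $\{u_i,v_j,a^1\}$ triples can be put in bijection with the $\{w_k,c_{(i,j,k)},b^1\}$ triples sharing the same $(i,j)$.
\item Each real node item lies in exactly one such pair. The $c$-items of the $q$ pairs are then distinct edges of $E$ covering every node exactly once, i.e., a perfect matching.
\end{enumerate}
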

\begin{proof}
  For a node $a\in U\cup V\cup W$, let $d(a)$ denote its degree,
  i.e.~the number of edges containing $a$.

  \subparagraph*{Construction of the reduced \ThreePartition{} instance.} 
  The reduction below essentially creates 9-dimensional vectors with 
  components lying in $\{-q,\ldots,q\}$, and represents these nine
  components together as a single integer using a large enough base $M$.
  
  We set $M=10(q+1)$, and we represent the nine components as coefficients
  of $M^0$ to $M^8$. The coefficient of $M^8$ is set to $1$ for all numbers 
  in order to keep them in the range $\left(B/4,B/2\right)$. The
  coefficients of $M^7$, $M^6$, $M^5$, and $M^4$ are used to restrict the 
  structure of the solution, in terms of which elements can appear in the
  same triple. The coefficients of $M^0$,
  $M^1$, and $M^2$ carry information about the exact nodes in $U$, $V$ and 
  $W$ the numbers represent. 

  For each node $a \in U \cup V \cup W$, we will create $d(a)$ many numbers in the \ThreePartition{}
  instance. The coefficient of $M^3$ distinguishes one number (which 
  we refer to as the \EMPH{flagged} number), of each node
  from its other $d(a)-1$ numbers. This flagged number has
  coefficient $1$ for a node in $U$ or $V$, and coefficient $-2$
  for a node in $W$; every other \EMPH{unflagged} number is identical
  to each other and has coefficient $0$.
  The flagged numbers will identify the edges of a perfect matching.
  Some of the other constructed numbers will also have nonzero 
  coefficients of $M^3$ so that the corresponding triples sum to the target $B$.
  
  Formally, the \ThreePartition{} instance is constructed as follows:
  \begin{itemize}

    \item (\EMPH{The $\mathtt{U}$-elements}) For every $u_i \in U$, we create $d(u_i)$ 
    many integers $\alpha_{i,1},\ldots,\alpha_{i,d(u_i)}$ defined as: 
    \[\alpha_{i,1}=M^8+M^4+M^3+i, \quad \text{and} \quad
      \alpha_{i,\ell}=M^8+M^4+i\quad(2\leq \ell\leq d(u_i)).\]
    We call these the \EMPH{$\mathtt{U}$-elements}. We refer to the specific collection 
    of $\alpha_{i,1}$ for all $i \in [q]$ as \EMPH{flagged $\mathtt{U}$-elements}
    and denote this collection as a multiset $A_U'$; we refer to the collection 
    of $\alpha_{i,\ell}$ with $2\leq \ell\leq d(u_i)$ for all $i \in [q]$ as 
    \EMPH{unflagged $\mathtt{U}$-elements} and denote this collection as a multiset $A_U$.

    \item (\EMPH{The $\mathtt{V}$-elements}) For every $v_j \in V$, we create $d(v_j)$ 
    many integers $\beta_{j,1},\ldots,\beta_{j,d(v_j)}$ defined as: 
    \[\beta_{j,1}=M^8+M^5+M^3+jM, \quad \text{and} \quad
      \beta_{j,\ell}=M^8+M^5+jM\quad(2\leq \ell\leq d(v_j)).\]
    We call these the \EMPH{$\mathtt{V}$-elements}. We refer to the specific collection 
    of $\beta_{j,1}$ for all $j \in [q]$ as \EMPH{flagged $\mathtt{V}$-elements}
    and denote this collection as a multiset $A_V'$; we refer to the collection 
    of $\beta_{j,\ell}$ with $2\leq \ell\leq d(v_j)$ for all $j \in [q]$ as 
    \EMPH{unflagged $\mathtt{V}$-elements} and denote this collection as a multiset $A_V$.

    \item (\EMPH{The $\mathtt{W}$-elements}) For every $w_k \in W$, we create $d(w_k)$ 
    many integers $\gamma_{k,1},\ldots,\gamma_{k,d(w_k)}$ defined as: 
    \[\gamma_{k,1}=M^8+M^6-2M^3+kM^2, \quad \text{and} \quad
      \gamma_{k,\ell}=M^8+M^6+kM^2\quad(2\leq \ell\leq d(w_k)).\]
    We call these the \EMPH{$\mathtt{W}$-elements}. We refer to the specific collection 
    of $\gamma_{k,1}$ for all $k \in [q]$ as \EMPH{flagged $\mathtt{W}$-elements}
    and denote this collection as a multiset $A_W'$; we refer to the collection 
    of $\gamma_{k,\ell}$ with $2\leq \ell\leq d(w_k)$ for all $k \in [q]$ as 
    \EMPH{unflagged $\mathtt{W}$-elements} and denote this collection as a multiset $A_W$.

    \item (\EMPH{The $\mathtt{E}$-elements}) For every edge
    $(u_i,v_j,w_k)\in E$, we create an integer $h_{(i,j,k)}$ defined as:
    \[h_{(i,j,k)}=M^8+M^7-kM^2-jM-i.\]
    We call these the \EMPH{$\mathtt{E}$-elements} and denote their collection
    by the multiset $H$.
    \item (\EMPH{The $\mathtt{UV}$-connectors}) For every edge $(u_i,v_j,w_k)\in E$,
    we create two integers $x_{(i,j,k)}$ and $x'_{(i,j,k)}$ defined as:
    \[x_{(i,j,k)}=M^8-M^5-M^4-jM-i
      \quad \text{and} \quad
      x'_{(i,j,k)}=M^8-M^5-M^4-2M^3-jM-i.\]
    We call these the \EMPH{$\mathtt{UV}$-connectors}. We refer to the collection
    of $x'_{(i,j,k)}$ for all $(u_i,v_j,w_k)\in E$ as
    \EMPH{flagged $\mathtt{UV}$-connectors} and denote this collection as a multiset $X'$;
    we refer to the collection of $x_{(i,j,k)}$ for all
    $(u_i,v_j,w_k)\in E$ as \EMPH{unflagged $\mathtt{UV}$-connectors} and denote
    this collection as a multiset $X$.
    \item (\EMPH{The $\mathtt{WE}$-connectors}) For every edge $(u_i,v_j,w_k)\in E$,
    we create two integers $y_{(i,j,k)}$ and $y'_{(i,j,k)}$ defined as:
    \[y_{(i,j,k)}=M^8-M^7-M^6+jM+i
      \quad \text{and} \quad
      y'_{(i,j,k)}=M^8-M^7-M^6+2M^3+jM+i.\]
    We call these the \EMPH{$\mathtt{WE}$-connectors}. We refer to the collection
    of $y'_{(i,j,k)}$ for all $(u_i,v_j,w_k)\in E$ as
    \EMPH{flagged $\mathtt{WE}$-connectors} and denote this collection as a multiset $Y'$;
    we refer to the collection of $y_{(i,j,k)}$ for all
    $(u_i,v_j,w_k)\in E$ as \EMPH{unflagged $\mathtt{WE}$-connectors} and denote
    this collection as a multiset $Y$.
    \item (\EMPH{The padding elements}) For every edge
    $(u_i,v_j,w_k)\in E$, we create an integer $p_{(i,j,k)}$ defined as:
    \[p_{(i,j,k)}=M^8+M^7+M^6+M^5+M^4.\]
    We call these the \EMPH{padding elements} and denote their collection
    by the multiset $P$.
  \end{itemize}
  The multiset of input integers comprises all these elements:
\begin{equation}
    \underbrace{A_U \cup A_U'}_{ \text{$\mathtt{U}$-elements} }  \cup \underbrace{ A_V \cup A_V' }_{ \text{$\mathtt{V}$-elements} } \cup \underbrace{ A_W \cup A_W' }_{ \text{$\mathtt{W}$-elements} } \cup \underbrace{ H }_{ \text{$\mathtt{E}$-elements} }\cup  \underbrace{X \cup X'}_{ \text{$\mathtt{UV}$-connectors} } \cup   \underbrace{Y \cup Y' }_{ \text{$\mathtt{WE}$-connectors} } \cup \underbrace{P}_{ \text{padding} }
\end{equation}
  Finally, we set $B=3M^8$. See \cref{tab:3Part-instance} for a base-$M$
  representation of these numbers.

  \begin{table}[htbp]
  \centering
  \begingroup
  \scriptsize
  \setlength{\tabcolsep}{2pt}
  \renewcommand{\arraystretch}{1.25}
  \begin{tabular}{@{}lccr|*{9}{c}@{}}
  \hline
  Name & Multiset & Index range & Item notation
       & $M^8$ & $M^7$ & $M^6$ & $M^5$ & $M^4$
       & $M^3$ & $M^2$ & $M^1$ & $M^0$\\
  \hline
  Flagged $\mathtt{U}$-elements
    & $A_U'$ & $\forall u_i\in U$ & $\alpha_{i,1}$
    & 1 & 0 & 0 & 0 & 1 & 1 & 0 & 0 & $i$\\
  Unflagged $\mathtt{U}$-elements
    & $A_U$ & $\forall u_i\in U,\ \ell\in\{2,\ldots,d(u_i)\}$ & $\alpha_{i,\ell}$
    & 1 & 0 & 0 & 0 & 1 & 0 & 0 & 0 & $i$\\
  \hline
  Flagged $\mathtt{V}$-elements
    & $A_V'$ & $\forall v_j\in V$ & $\beta_{j,1}$
    & 1 & 0 & 0 & 1 & 0 & 1 & 0 & $j$ & 0\\
  Unflagged $\mathtt{V}$-elements
    & $A_V$ & $\forall v_j\in V,\ \ell\in\{2,\ldots,d(v_j)\}$ & $\beta_{j,\ell}$
    & 1 & 0 & 0 & 1 & 0 & 0 & 0 & $j$ & 0\\
  \hline
  Flagged $\mathtt{W}$-elements
    & $A_W'$ & $\forall w_k\in W$ & $\gamma_{k,1}$
    & 1 & 0 & 1 & 0 & 0 & $-2$ & $k$ & 0 & 0\\
  Unflagged $\mathtt{W}$-elements
    & $A_W$ & $\forall w_k\in W,\ \ell\in\{2,\ldots,d(w_k)\}$ & $\gamma_{k,\ell}$
    & 1 & 0 & 1 & 0 & 0 & 0 & $k$ & 0 & 0\\
  \hline
  $\mathtt{E}$-elements & $H$ & $\forall (u_i,v_j,w_k)\in E$ & $h_{(i,j,k)}$
    & 1 & 1 & 0 & 0 & 0 & 0 & $-k$ & $-j$ & $-i$\\
  \hline
  Flagged $\mathtt{UV}$-connectors & $X'$ & $\forall (u_i,v_j,w_k)\in E$ & $x'_{(i,j,k)}$
    & 1 & 0 & 0 & $-1$ & $-1$ & $-2$ & 0 & $-j$ & $-i$\\
  Unflagged $\mathtt{UV}$-connectors & $X$ & $\forall (u_i,v_j,w_k)\in E$ & $x_{(i,j,k)}$
    & 1 & 0 & 0 & $-1$ & $-1$ & 0 & 0 & $-j$ & $-i$\\
  \hline
  Flagged $\mathtt{WE}$-connectors & $Y'$ & $\forall (u_i,v_j,w_k)\in E$ & $y'_{(i,j,k)}$
    & 1 & $-1$ & $-1$ & 0 & 0 & 2 & 0 & $j$ & $i$\\
  Unflagged $\mathtt{WE}$-connectors & $Y$ & $\forall (u_i,v_j,w_k)\in E$ & $y_{(i,j,k)}$
    & 1 & $-1$ & $-1$ & 0 & 0 & 0 & 0 & $j$ & $i$\\
  \hline
  Padding elements & $P$ & $\forall (u_i,v_j,w_k)\in E$ & $p_{(i,j,k)}$
    & 1 & 1 & 1 & 1 & 1 & 0 & 0 & 0 & 0\\
  \hline
   Bin size & $B$ &  & 
    & 3 & 0 & 0 & 0 & 0 & 0 & 0 & 0 & 0\\
    \hline
  \end{tabular}
  \endgroup
  \caption{Coefficients of the powers of $M$ in the reduced
  \ThreePartition{} instance. For example, the second row represents
  $\alpha_{i,\ell}=M^8+M^4+i$, for $2\leq \ell\leq d(u_i)$. }
  \label{tab:3Part-instance}
  \end{table}
  
  We first check that the constructed integers form a valid
  \ThreePartition{} instance and that their size remains polynomial.
  \begin{claim}[Construction]\label{clm:new-partition-construction}
  The construction produces $n=9m$ items satisfying the numerical
  promises of \ThreePartition{}: each item is an integer in $(B/4,B/2)$ and the total sum of all items is $nB/3$, where $B\leq n^{\Oh(1)}$. The total binary encoding length is $\Oh(n\log n)$.
  \end{claim}
  \begin{proof}[Proof of the claim]
  \renewcommand{\qedsymbol}{$\blacksquare$}
  Each edge contributes one item from each of $A_U\cup A_U'$,
  $A_V\cup A_V'$, and $A_W\cup A_W'$, as well as one item from $H$,
  two from $X\cup X'$, two from $Y\cup Y'$, and one from $P$.
  Thus there are $9m$ items. Since $M\geq20$, every item differs
  from $M^8$ by less than $2M^7$ in absolute value. Hence every
  item lies strictly between $B/4$ and $B/2$.

  Next we compute the total sum. The coefficients of $M^0,M^1,M^2$ cancel in the total sum:
  the coefficients $i,j,k$ in the $\mathtt{U}$-elements, $\mathtt{V}$-elements, and $\mathtt{W}$-elements cancel
  those in $H$, while the two $\mathtt{UV}$-connectors cancel the two
  $\mathtt{WE}$-connectors edge by edge. At each of $M^4,M^5,M^6,M^7$,
  the total coefficient is $m-2m+m=0$. At $M^3$,
  the flagged $\mathtt{U}$-elements, flagged $\mathtt{V}$-elements,
  and flagged $\mathtt{W}$-elements contribute $q+q-2q=0$;
  the flagged $\mathtt{UV}$-connectors and flagged
  $\mathtt{WE}$-connectors contribute $-2m+2m=0$:
  \[
    q+q-2q-2m+2m=0.
  \]
  Each of the $9m$ items has coefficient $1$ at $M^8$.
  Their total is therefore $9mM^8=nB/3$.
  Finally, no node is isolated, so $q\leq m$ and $M=\Oh(m)$.
  Thus $B=3M^8$ is polynomial in $n$ and the output has
  encoding length $\Oh(n\log n)$.
  \end{proof}

  \subparagraph*{Completeness.}
  For each edge in a perfect matching, we group: (i) its flagged
  $\mathtt{U}$-element and flagged $\mathtt{V}$-element with its
  flagged $\mathtt{UV}$-connector, and (ii) its flagged $\mathtt{W}$-element
  and $\mathtt{E}$-element with its flagged $\mathtt{WE}$-connector, and (iii) the remaining $\mathtt{UV}$-connector and $\mathtt{WE}$-connector with a padding element. For each edge outside the matching,
  we instead group: (i) an unflagged $\mathtt{U}$-element, an unflagged
  $\mathtt{V}$-element, together with an unflagged
  $\mathtt{UV}$-connector in the first triple, and (ii) an unflagged $\mathtt{W}$-element, an $\mathtt{E}$-element, together with an unflagged $\mathtt{WE}$-connector in the second triple, and (iii) a flagged $\mathtt{UV}$-connector, a flagged $\mathtt{WE}$-connector, together with a padding element in the third triple. 
  \begin{claim}\label{clm:new-partition-completeness}
  If $E$ has a perfect matching, then the constructed items admit
  a partition into triples of sum $B$.
  \end{claim}
  \begin{proof}[Proof of the claim]
  \renewcommand{\qedsymbol}{$\blacksquare$}
  Let $E'\subseteq E$ be a perfect matching, so $|E'|=q$.
  For each edge $e=(u_i,v_j,w_k)\in E'$, form the three triples
  \[
    \{\alpha_{i,1},\beta_{j,1},x'_{(i,j,k)}\},
    \qquad
    \{\gamma_{k,1},h_{(i,j,k)},y'_{(i,j,k)}\},
    \qquad 
    \{x_{(i,j,k)}, y_{(i,j,k)}, p_{(i,j,k)}\}.
  \]

  For each edge $(u_i,v_j,w_k)\in E\setminus E'$, assign one
  unused element $\alpha_{i,\ell_1}\in A_U$,
  $\beta_{j,\ell_2}\in A_V$, and $\gamma_{k,\ell_3}\in A_W$
  with $\ell_1,\ell_2,\ell_3\ge 2$ to that edge. Such an assignment
  is possible because each node has exactly one incident edge in
  $E'$, leaving its $d(a)-1$ unflagged elements for its other
  $d(a)-1$ incident edges. Then, we form the triples
  \[
    \{\alpha_{i,\ell_1},\beta_{j,\ell_2},x_{(i,j,k)}\},
    \qquad
    \{\gamma_{k,\ell_3},h_{(i,j,k)},y_{(i,j,k)}\},
    \qquad 
    \{x'_{(i,j,k)}, y'_{(i,j,k)}, p_{(i,j,k)}\}.
  \]
  
  Every item is used once. In each triple
  every coefficient of $M^0$ to $M^7$ sums to zero by
  \cref{tab:3Part-instance}; while only the coefficients
  $1$ of $M^8$ add up, making each triple sum up to $B=3M^8$.
  \end{proof}

  \begin{figure}[!htb]
  \centering
  \begingroup
  \definecolor{PMRblue}{RGB}{40,145,218}
  \definecolor{PMRpurple}{RGB}{143,50,139}
  \definecolor{PMRorange}{RGB}{230,158,53}
  \definecolor{PMRgray}{RGB}{235,235,235}

  \resizebox{.92\linewidth}{!}{%
  \begin{tikzpicture}[
    x=1cm,y=1cm,font=\normalsize,
    line cap=round,line join=round,
    blue path/.style={draw=PMRblue,line width=1.6pt},
    purple path/.style={draw=PMRpurple,line width=1.1pt,
      dash pattern=on 3pt off 2.5pt},
    orange path/.style={draw=PMRorange,line width=1.3pt},
    item vertex/.style={rectangle,draw=black,fill=white,
      line width=.8pt,minimum size=5.5pt,inner sep=0pt},
    item label/.style={inner sep=1.5pt},
    guide/.style={draw=black!40,densely dashed,line width=.6pt},
    panel label/.style={font=\small,align=center}
  ]
    \foreach \row in {0,1}{
      \begin{scope}[yshift=-3.35*\row cm]
        \foreach \col in {0,1,2}{
          \begin{scope}[xshift=4.6*\col cm]
            \fill[PMRgray] (0,0)--(2.4,0)--(1.2,1.2)--cycle;
            \draw[blue path] (0,0)--(1.2,1.2);
            \draw[purple path] (0,0)--(2.4,0);
            \draw[orange path] (1.2,1.2)--(2.4,0);
            \foreach \point in {(0,0),(2.4,0),(1.2,1.2)}{
              \node[item vertex] at \point {};
            }
          \end{scope}
        }
      \end{scope}
    }

    \node[item label,above=5pt] (top-x) at (1.2,1.2)
      {$x'_{(i,j,k)}$};
    \node[item label,below=5pt] at (0,0) {$\alpha_{i,1}$};
    \node[item label,below=5pt] at (2.4,0) {$\beta_{j,1}$};
    \node[item label,above=5pt] (top-y) at (5.8,1.2)
      {$y'_{(i,j,k)}$};
    \node[item label,below=5pt] at (4.6,0) {$\gamma_{k,1}$};
    \node[item label,below=5pt] at (7,0) {$h_{(i,j,k)}$};
    \draw[guide] (top-x.east)--(top-y.west)
      node[midway,above=4pt,font=\small] {$(i,j)$};
    \node[item label,above=5pt] at (10.4,1.2) {$p_{(i,j,k)}$};
    \node[item label,below=5pt] at (9.2,0) {$x_{(i,j,k)}$};
    \node[item label,below=5pt] at (11.6,0) {$y_{(i,j,k)}$};
    \node[panel label] at (5.8,-.85) {(a) Edge in $E'$};

    \begin{scope}[yshift=-3.35cm]
      \node[item label,above=5pt] (bottom-x) at (1.2,1.2)
        {$x_{(i,j,k)}$};
      \node[item label,below=5pt] at (0,0) {$\alpha_{i,\ell_1}$};
      \node[item label,below=5pt] at (2.4,0) {$\beta_{j,\ell_2}$};
      \node[item label,above=5pt] (bottom-y) at (5.8,1.2)
        {$y_{(i,j,k)}$};
      \node[item label,below=5pt] at (4.6,0) {$\gamma_{k,\ell_3}$};
      \node[item label,below=5pt] at (7,0) {$h_{(i,j,k)}$};
      \draw[guide] (bottom-x.east)--(bottom-y.west)
        node[midway,above=4pt,font=\small] {$(i,j)$};
      \node[item label,above=5pt] at (10.4,1.2) {$p_{(i,j,k)}$};
      \node[item label,below=5pt] at (9.2,0) {$x'_{(i,j,k)}$};
      \node[item label,below=5pt] at (11.6,0) {$y'_{(i,j,k)}$};
      \node[panel label] at (5.8,-.85) {(b) Edge in $E\setminus E'$};
    \end{scope}
  \end{tikzpicture}%
  }
  \endgroup
  \caption{The three triples associated with an edge
  $(u_i,v_j,w_k)\in E'$ (top) and an edge
  $(u_i,v_j,w_k)\in E\setminus E'$ (bottom), where
  $\ell_1,\ell_2,\ell_3\geq 2$.
  Each shaded triangle has sum $B$. The gray dashed guides indicate
  that the $\mathtt{UV}$-connector and $\mathtt{WE}$-connector have
  the same indices $i,j$.}
  \label{fig:new-matching-partition}
  \end{figure}

  \subparagraph*{Soundness.}
  We now obtain a perfect matching from any solution of the 
  \ThreePartition{} instance. The high powers of $M$ determine the three
  possible types of triples, and the lower powers determine
  the values of $i,j,k$ and whether their elements are flagged.
  \begin{claim}\label{clm:new-partition-soundness}
  If the constructed items admit a partition into triples of
  sum $B$, then $E$ has a perfect matching.
  \end{claim}
  \begin{proof}[Proof of the claim]
  \renewcommand{\qedsymbol}{$\blacksquare$}
  Consider such a solution to the \ThreePartition{} instance. In any triple, the absolute value
  of each coefficient of $M^0$ to $M^7$ is less
  than $M$: the exact bounds on the absolute values are $3q$ for $M^0,M^1,M^2$, $6$ for
  $M^3$, and $3$ for $M^4,\ldots,M^7$.
  Thus equality to $B$ can be checked by simply checking if coefficients
  of $M^0$ to $M^7$ sum up to $0$.

  We first inspect the $M^7,M^6,M^5,M^4$ coefficients (see 
  \cref{tab:3Part-instance}). The only ways to form triples that 
  cancel these coefficients are
  \begin{itemize}
    \item a $\mathtt{U}$-element, a $\mathtt{V}$-element, and a $\mathtt{UV}$-connector;
    \item a $\mathtt{W}$-element, an $\mathtt{E}$-element, and a $\mathtt{WE}$-connector; 
    \item or a $\mathtt{UV}$-connector, a $\mathtt{WE}$-connector, and a padding element.
  \end{itemize}
  Indeed, a padding element
  requires a $\mathtt{UV}$-connector and a $\mathtt{WE}$-connector to cancel its four
  positive coefficients. Without a padding element, a $\mathtt{UV}$-connector
  requires a $\mathtt{U}$-element and a $\mathtt{V}$-element, while a $\mathtt{WE}$-connector
  requires a $\mathtt{W}$-element and an $\mathtt{E}$-element. The
  $M^3$ coefficient further restricts the possibilities. Thus every
  triple consists of one item from each of exactly one
  of the following six combinations of multisets:
  \[
  \begin{array}{lll}
    A_U\times A_V\times X, & A_W\times H\times Y, & X\times Y\times P,\\
    A_U'\times A_V'\times X', & A_W'\times H\times Y', & X'\times Y'\times P.
  \end{array}
  \]
  Consider a triple
  $(\alpha_{i,1},\beta_{j,1},x'_{(i',j',k')})\in
  A_U'\times A_V'\times X'$ or a triple
  $(\alpha_{i,\ell_1},\beta_{j,\ell_2},x_{(i',j',k')})\in
  A_U\times A_V\times X$. Its $M^0$ coefficient sums to
  $i-i'$, and its $M^1$ coefficient sums to $j-j'$.
  Hence $i=i'$ and $j=j'$ in either case. Similarly, consider a triple
  $(\gamma_{k,1},h_{(i,j,k')},y'_{(i',j',k'')})\in
  A_W'\times H\times Y'$ or a triple
  $(\gamma_{k,\ell_3},h_{(i,j,k')},y_{(i',j',k'')})\in
  A_W\times H\times Y$. Its $M^2$, $M^1$, and $M^0$
  coefficients sum to $k-k'$, $j'-j$, and $i'-i$,
  respectively. Thus $k=k'$, $j=j'$, and $i=i'$. Finally, consider a triple
  $(x_{(i,j,k)},y_{(i',j',k')},p_{(i'',j'',k'')})\in
  X\times Y\times P$ or a triple
  $(x'_{(i,j,k)},y'_{(i',j',k')},p_{(i'',j'',k'')})\in
  X'\times Y'\times P$. Its $M^0$ and $M^1$
  coefficients sum to $i'-i$ and $j'-j$, respectively.
  Hence $i=i'$ and $j=j'$ in either case as well.

  Consequently, every triple in the solution of the \ThreePartition{} 
  instance has one of the following forms:
  \begin{align*}
    (\alpha_{i,1},\beta_{j,1},x'_{(i,j,k)})
      &\in A_U'\times A_V'\times X',\\
    (\alpha_{i,\ell_1},\beta_{j,\ell_2},x_{(i,j,k)})
      &\in A_U\times A_V\times X,\\
    (\gamma_{k,1},h_{(i,j,k)},y'_{(i,j,k')})
      &\in A_W'\times H\times Y',\\
    (\gamma_{k,\ell_3},h_{(i,j,k)},y_{(i,j,k')})
      &\in A_W\times H\times Y,\\
    (x_{(i,j,k)},y_{(i,j,k')},p_{(i'',j'',k'')})
      &\in X\times Y\times P,\\
    (x'_{(i,j,k)},y'_{(i,j,k')},p_{(i'',j'',k'')})
      &\in X'\times Y'\times P,
  \end{align*}
   where $\ell_1,\ell_2,\ell_3\ge 2$.

  Fix $i,j\in[q]$. For each $k$ with $(u_i,v_j,w_k)\in E$,
  the construction supplies $x_{(i,j,k)}\in X$ and
  $y_{(i,j,k)}\in Y$, as well as $x'_{(i,j,k)}\in X'$ and
  $y'_{(i,j,k)}\in Y'$. The number of items of each of these
  four forms is $|\{k\in[q]:(u_i,v_j,w_k)\in E\}|$.
  Every triple in $X\times Y\times P$ containing an item
  $x_{(i,j,k)}$ also contains an item $y_{(i,j,k')}$, and
  conversely. The analogous statement holds for triples in
  $X'\times Y'\times P$ containing $x'_{(i,j,k)}$ and
  $y'_{(i,j,k')}$. Since every item occurs in exactly one triple,
  the number of triples in $A_U\times A_V\times X$ containing
  $x_{(i,j,k)}$ for some $k$ equals the number of triples in
  $A_W\times H\times Y$ containing $y_{(i,j,k')}$ for some
  $k'$. Likewise, the number of triples in
  $A_U'\times A_V'\times X'$ containing $x'_{(i,j,k)}$ equals
  the number of triples in $A_W'\times H\times Y'$ containing
  $y'_{(i,j,k')}$. Choose a bijection $\pi^{(i,j)}_1$ from the triples counted
  in $A_U\times A_V\times X$ to those counted in
  $A_W\times H\times Y$, and another bijection $\pi^{(i,j)}_2$ from the triples counted
  in $A_U'\times A_V'\times X'$ to those counted in
  $A_W'\times H\times Y'$.

  Under the bijection $\pi^{(i,j)}_1$, paired triples have the forms
  $(\alpha_{i,\ell_1},\beta_{j,\ell_2},x_{(i,j,k')})$ and
  $(\gamma_{k,\ell_3},h_{(i,j,k)},y_{(i,j,k'')})$.
  Under the bijection $\pi^{(i,j)}_2$, paired triples have the forms
  $(\alpha_{i,1},\beta_{j,1},x'_{(i,j,k')})$ and
  $(\gamma_{k,1},h_{(i,j,k)},y'_{(i,j,k'')})$.
  In either case, the $\mathtt{U}$-element represents $u_i$,
  the $\mathtt{V}$-element represents $v_j$, and the
  $\mathtt{W}$-element represents $w_k$. Moreover,
  $(u_i,v_j,w_k)\in E$ because the $\mathtt{E}$-element is
  $h_{(i,j,k)}$.

  For each $i,j,k\in[q]$, the items $\alpha_{i,1}$,
  $\beta_{j,1}$, and $\gamma_{k,1}$ are the unique flagged
  elements representing $u_i$, $v_j$, and $w_k$, respectively.
  Each occurs in exactly one of the pairs of triples under the
  second bijection. The $\mathtt{E}$-elements in those pairs
  represent distinct edges, since each item of $H$ is used
  exactly once. These edges therefore cover every node of
  $U\cup V\cup W$ exactly once and form a perfect matching in $E$.
  \end{proof}

  The three claims establish the reduction.
\end{proof}

\begin{proof}[Proof of \cref{thm:ETH-3Partition}]
Let $m$ be the number of edges in a \textsc{3-Dimensional Matching}
instance. By \cref{lem:matching-partition}, the reduction produces
$n=9m$ items, so the number of triples sought in the resulting
\ThreePartition{} instance is $n/3=3m$. Moreover,
$B\leq n^{\Oh(1)}=m^{\Oh(1)}$.

Suppose that \ThreePartition{} on $3r$ numbers with
$B=r^{\Oh(1)}$ could be solved in time
$2^{o(r)}|B|^{\Oh(1)}$, where $|B|$ is the binary encoding
length of $B$. Applying that algorithm with $r=3m$ after the
polynomial-time reduction would decide \textsc{3-Dimensional Matching}
in time $2^{o(m)}|I|^{\Oh(1)}$, where $|I|$ is the source input
length. This contradicts
\cite[Theorem~3.1]{Jansen2016}
under ETH.
\end{proof}

\end{document}